\documentclass[a4paper, USenglish, cleveref, autoref, thm-restate]{lipics-v2021}

\usepackage{tikz}
\usetikzlibrary{automata, positioning, arrows.meta,snakes, calc, patterns, fit, backgrounds}
\usepackage{amssymb}
\usepackage{amsmath}
\usepackage{booktabs}
\usepackage{mathtools}
\usepackage{bm}
\usepackage{pifont}
\usepackage{comment}
\usepackage{fontawesome5}
\usepackage{paralist}

\usepackage{macros}

\title{Population Protocols over Ordered Agents}

\author{Michael {Blondin}}{Département d'informatique, Université de
  Sherbrooke,
  Canada}{michael.blondin@usherbrooke.ca}{https://orcid.org/0000-0003-2914-2734}{supported
  by a Discovery Grant from the Natural Sciences and Engineering
  Research Council of Canada (NSERC).}
\author{Michaël Cadilhac}{DePaul University, Chicago,
  USA}{michael@cadilhac.name}{https://orcid.org/0000-0001-9828-9129}{}
\author{Benjamin {Courchesne}}{Département d'informatique, Université de
  Sherbrooke,
  Canada}{benjamin.courchesne@usherbrooke.ca}{https://orcid.org/0009-0001-8432-4736}{}
\author{Lucie Guillou}{MPI for Software Systems, Kaiserslautern,
  Germany}{lguillou@mpi-sws.org}{https://orcid.org/0000-0002-6101-2895}{}
\author{Corto Mascle}{MPI for Software Systems, Kaiserslautern,
  Germany}{cmascle@mpi-sws.org}{https://orcid.org/0009-0007-7976-7480}{}
\author{Isa Vialard}{MPI for Software Systems, Saarbrücken,
  Germany}{vialard@mpi-sws.org}{https://orcid.org/0000-0002-7261-9342}{}

\authorrunning{M.\ Blondin, M.\ Cadilhac, B.\ Courchesne, L.\ Guillou, C.\ Mascle, I.\ Vialard}
\Copyright{Michael Blondin, Michaël Cadilhac, Benjamin Courchesne, Lucie
  Guillou, Corto Mascle, and Isa Vialard}
\ccsdesc[500]{Theory of computation~Formal languages and automata theory}
\ccsdesc[500]{Theory of computation~Distributed computing models}
\ccsdesc[500]{Theory of computation~Logic and verification}
\keywords{Population protocols, First-order logic, Partially-ordered automata,
  Unambiguous star-free languages}

\acknowledgements{We thank the anonymous reviewers of ICALP 2026 for their thorough reading and valuable suggestions.}

\nolinenumbers

\hideLIPIcs

\begin{document}

\maketitle

\begin{abstract}
  Population protocols are a distributed computation model in which a collection
  of anonymous, finite-state agents interact in randomly chosen pairs and update
  their states according to a fixed transition function.  The computation is
  defined by the eventual stabilization of the population to a consensus that
  represents the output.  In practice, it is natural to allow each agent to
  carry a unique identifier and compare it with that of another agent before
  interacting.  We model this extension by having agents be totally ordered and
  interactions between two agents to be fireable only if their pair of
  identifiers falls in some condition set.  For instance, \(\PP[<]\) allows for
  two agents to interact only if the first one appears before the second one.

  We study population protocols over ordered agents \(\PP[\cN]\) where \(\cN\) is a
  set of predicates available to restrict transition firing.  We also study
  \(\IOPP[\cN]\), the \emph{immediate observation} fragment of \(\PP[\cN]\) where
  only one agent changes state per interaction.
  Our main result is that \(\IOPP[<]\) recognizes exactly the 
  unambiguous star-free languages, which admits many other characterizations,
  such as two-variable first-order logic or two-way deterministic
  partially-ordered automata.
  We also provide a logic and an automaton model that fits in \(\PP[<]\).  We
  further show that if the successor predicate appears in a set \(\cN\) of
  \(\NSPACE(n)\)-computable predicates, then
  \(\IOPP[\cN] = \PP[\cN] = \NSPACE(n)\).
  Finally, we investigate the problem of deciding whether a given population
  protocol always stabilizes to a consensus.  While this problem is decidable
  for unordered population protocols, we show that this is undecidable already
  for \(\PP[<]\) and \(\IOPP[+1]\), but conditionally decidable for \(\IOPP[<]\).

\end{abstract}

\clearpage
\setcounter{page}{0}
\noindent
Throughout the article, we provide hyperlinks for navigating between
statements in the main text and proofs in the appendix. To go to the
proof of a statement, click ``\faArrowDown'' in the left margin, and
then click ``\faArrowUp'' to go back to the statement in the main
text.  \tableofcontents

\clearpage

\section{Introduction}

Population protocols form a well-established model of distributed computing
where anonymous agents, with very limited individual computational power, work
collectively to achieve a common task~\cite{AADFP06}.  In this model, an input
is scattered among agents that interact pairwise and must take a decision by
reaching a consensus that is stable, that is, agents must eventually all agree
on the output (``consensus'') and stop changing their mind (``stable'').
Population protocols provide a theoretical framework for reasoning about a wide
range of distributed systems, including networks of mobile sensors, chemical
reaction networks, and social networks~\cite{AR09,CDS14,DF01}.

To familiarize the reader with population protocols, we present a classical
protocol for the task of majority voting. A population consists of $n$ agents
(who are not aware of $n$), each carrying a state from a finite set.  Here,
agents start with either $a$ or $b$ as their state. The population aims at
collectively determining whether there are initially more $a$'s than $b$'s. At
each discrete moment, a pair of agents is chosen arbitrarily and their
respective state, from $Q = \{a, b, \underline{a}, \underline{b}\}$, is updated
according to these rules:
\begin{center}
  \begin{tabular}{ccc}
    \toprule
    \textbf{\emph{active to passive}} &
    \textbf{\emph{propagation of winning side}} &
    \textbf{\emph{tiebreaker}} \\
    \midrule
    $a, b \trans{} \underline{a}, \underline{b}$ &
    $a, \underline{b} \trans{} a, \underline{a}$ &
    $\underline{a}, \underline{b} \trans{} \underline{b}, \underline{b}$ \\
    &
    $b, \underline{a} \trans{} b, \underline{b}$
    & \\
    \bottomrule
  \end{tabular}
\end{center}

Since agents are unordered, each rule $p, q \trans{} p', q'$ also stands for
$q, p \trans{} q', p'$. Here are three possible executions of the protocol
starting from three different ``inputs,'' that is, assignments of an initial
state \(a\) or \(b\) to each agent:
\begin{alignat*}{7}
  aaabb
  &\trans{}\ & aa\underline{a}b\underline{b}
  &\trans{}\ & a\underline{a}\underline{a}\underline{b}\underline{b}
  &\trans{}\ & a\underline{a}\underline{a}\underline{a}\underline{b}
  &\trans{}\ & a\underline{a}\underline{a}\underline{a}\underline{a}, \\
  aabbb
  &\trans{}\ & a\underline{a}\underline{b}bb
  &\trans{}\ & a\underline{a}\underline{a}bb
  &\trans{}\ & \underline{a}\underline{a}\underline{a}\underline{b}b
  &\trans{}\ & \underline{a}\underline{a}\underline{b}\underline{b}b
  &\trans{}\ & \underline{a}\underline{b}\underline{b}\underline{b}b,
  &\trans{}\ & \underline{b}\underline{b}\underline{b}\underline{b}b, \\
  aabb
  &\trans{}\ & a\underline{a}\underline{b}b
  &\trans{}\ & \underline{a}\underline{a}\underline{b}\underline{b}
  &\trans{}\ & \underline{a}\underline{b}\underline{b}\underline{b}
  &\trans{}\ & \underline{b}\underline{b}\underline{b}\underline{b}.
\end{alignat*}

Assuming fair scheduling (\eg choosing agents uniformly at random), one can
show that the population stabilizes (almost surely) to the correct outcome: if
there is a majority of $a$'s initially, then agents eventually remain in
$\{a, \underline{a}\}$, otherwise they eventually remain in
$\{b, \underline{b}\}$.

As agents and rules are unordered, each configuration can be seen as a multiset
$\vec{c} \colon Q \to \N$ where $\vec{c}(q)$ indicates the number of agents in
state $q$. It is known that population protocols compute precisely the subsets
of $\N^Q$ that are semilinear~\cite{AAER07}, or, equivalently, that are
definable in Presburger arithmetic (the first-order theory of the naturals with
order and addition). In particular, majority voting amounts to computing the
predicate $\varphi(\vec{c}) = \vec{c}(a) > \vec{c}(b)$.

From a modeling perspective, the fact that agents are unordered is meant to
correspond to a situation in which agents are replicated and anonymous entities,
and hence have no identifiers and are indistinguishable. Yet, it is natural to
allow replicated agents to be totally ordered, \eg they could be devices with
a unique identifier, such as a serial number, stored in read-only memory. In
this context, interactions may depend on the relationships between these
identifiers.

From the perspective of automata theory, this corresponds to considering
population protocols where configurations are \emph{words} rather than
multisets.  Standard population protocols can be seen as computing commutative
properties of words, such as \(|w|_a > |w|_b\), while the word setting
additionally allows for noncommutative properties, such as ``the middle agent
has an \(a\)'' or ``agents strictly alternate between \(a\) and \(b\).''

\paragraph*{Contribution}

Motivated by the above, we propose to study population protocols with
totally-ordered agents.  The class of population protocols so defined,
\(\PP[\cN]\), is parameterized by a set \(\cN\) of predicates over positions that
can be used to restrict transition firings.  Central to our study is the class
\(\PP[<]\), in which a transition \(p, q \xrightarrow{<} r, s\) can only be applied to two
agents in respective states \(p\) and \(q\) if the first agent appears \emph{before}
the second agent.  We consider a well-studied restriction called \emph{immediate
  observation}~\cite{AAER07} where an interaction can only update a single
agent, called the ``observer.''

Our main result, \cref{thm:iopp:da}, establishes that $\IOPP[<]$ has the same
expressive power as unambiguous star-free languages, an important subclass of
regular languages that admits a trove of characterizations (see~\cite{TessonT02}
for a lovely survey on the pervasiveness of this class in automata theory).
Hence, $\IOPP[<]$ is the class of languages captured by these formalisms over
finite words:
\begin{itemize}
\item partially-ordered unambiguous automata~\cite{Montoya25};
\item partially-ordered two-way deterministic automata~\cite{STV01};
\item $\mathsf{LTL}[\mathbf{F}^{-1}, \mathbf{F}]$: linear temporal
  logic with past and future operators~\cite{EVW02};
\item $\FO^2[<]$: the two-variable fragment of first-order logic with
  order~\cite{TW98};
\item $\Delta_2[<]$: the intersection of the $\exists^* \forall^*$ and
  $\forall^* \exists^*$ fragments of first-order logic with order~\cite{PW95};
\item languages recognized by finite monoids from the variety
  $\DA$~\cite{Schu76}.
\end{itemize}

This provides the first characterization of this class in terms of distributed
computing, rather than automata, logic, or algebra.
We will provide in \Cref{ex:median-language} an example of a $\PP[<]$ protocol whose language is not regular, and therefore not expressible by $\IOPP[<]$ protocols.

In addition, we explore systematically the classes induced by our definitions:
\begin{itemize}
\item In \Cref{sec:sem}, we provide a toolbox to study population protocols over
  ordered agents \(\PP[\cN]\), regardless of \(\cN\).  We study protocols that need
  to stabilize only if they reach a positive consensus, which we call
  \emph{semi-deciders}, as opposed to protocols that always stabilize to a
  consensus, dubbed \emph{deciders}.  We also refine the technology of protocols
  with \emph{stabilizing inputs,} studied in \cite{Ras24}, which enables a
  form of composition between protocols.
\item In \Cref{sec:iopp}, we prove the aforementioned characterization of
  \(\IOPP[<]\), and in \Cref{sec:lbppo}, we provide a natural logic and an
  automaton model expressible in \(\PP[<]\).  In this latter section, we fall
  short of showing exact characterizations, but provide conjectures based on our
  new models.
\item In \Cref{sec:beyond}, we explore the expressiveness of \(\IOPP[\cN]\) and
  \(\PP[\cN]\) when the successor predicate is available, that is, when
  transitions can be restricted to fire only if they act on two adjacent agents.
  We show that if all the predicates of \(\cN\) are \(\NSPACE(n)\) computable, then
  \(\IOPP[\cN] = \PP[\cN] = \NSPACE(n)\) --- this is arguably less surprising than
  our main result on \(\IOPP[<]\), as the successor allows for the left-to-right
  propagation of information.
\item Finally, since protocols are only well-behaved when they are deciders, and
  since this property is semantic, we explore in \Cref{sec:decider-dec} whether
  we can check if a given population protocol is a decider.  We thus ask if the
  \emph{syntax} of deciders is decidable --- this is sometimes called the
  \emph{well-specification} problem.  We show that it is undecidable for
  \(\IOPP[+1]\), \(\PP[+1]\), and \(\PP[<]\), and conditionally decidable for
  \(\IOPP[<]\).
\end{itemize}

\paragraph*{Related work}

Our model is closely related to the \emph{community protocols} of Guerraoui and
Ruppert~\cite{GR09}. These are population protocols where each agent has a
unique identifier; each agent can \emph{store} a constant number of identifiers;
and interactions depend on these identifiers but only with respect to their
relative order. The motivation of Guerraoui and Ruppert was to devise an
extension of population protocols, as mild as possible, which would be
fault-tolerant. They proved that community protocols can decide languages from
$\NSPACE(n \log n)$ while tolerating Byzantine failures of a constant number of
agents~\cite{GR09}. The unique identifier of each agent is considered to be
stored in read-only memory, as in real-world low-cost chips, and so exempt from
Byzantine failures. Our model corresponds to community protocols where each
agent has a \emph{single immutable} register initialized to its unique
identifier.

Bournez, Cohen, and Rabie introduce \emph{homonym protocols}, a parameterized
restriction of community protocols, where the $n$ agents have \(f(n)\)
identifiers~\cite{BCR18}. The cases of $f(n) = 1$ and $f(n) = n$ are
respectively population protocols (everyone has the same identifier) and
community protocols (there are as many identifiers as agents).  Identifiers are
from $[0..f(n)-1]$ and agents can compare them with respect to $x < y$, $x = y$,
$x = y + 1$ and $x = 0$.

In~\cite{GGJKS24}, Gańczorz et al.\ introduce \emph{selective population
  protocols} as an extension of population protocols where the state space is
partitioned into finitely many ``groups,'' and where an interaction picks at
random an initiator in state $s$, and then a responder in state $s'$ from the
group of $s$, provided it is nonempty. This is a powerful model that allows for
zero-tests, \ie checking whether no agent holds a certain state. Thus, selective
population protocols are orthogonal to our model. However, the authors dedicate
a section to the median problem: $\bigcup_{n \geq 0}\Sigma^n a \Sigma^n$. They show that if
selective population protocols are extended with the possibility of comparing
keys, then they can solve the median problem in time $\O(\log^4 n)$. The authors
further provide a short proof that, without leveraging the ``selective'' aspect
of their model (\ie groups and zero-tests), any population protocol for the
median problem must work in expected time $\Omega(n)$. This latter setting, only
briefly discussed in~\cite{GGJKS24}, corresponds to our model.

Further extensions include \emph{mediated population
  protocols}~\cite{MCS11}, where communication edges have an internal state;
\emph{population protocols with unordered data}~\cite{BL23}, where the input
alphabet is infinite; and \emph{population protocols
  for graph class identification problems}~\cite{YOI21,AACFJP05}, where agents
aim at determining whether the communication topology satisfies some property.
For other work on immediate-observation protocols,
see~\cite{EGMW18,WK23,EB23,BGKMWW24}.

\section{Preliminaries}\label{sec:prelims}

\subparagraph{Automata and logic.}  We assume some familiarity with formal
languages, automata theory and logic over finite words (\eg see the
textbook~\cite{EB23}).
For a word \(w \in \Sigma^*\), we write \(w[i]\) for the \(i\)-th letter of \(w\), starting at
1, and \(w[i..j]\) for the infix \(w[i]w[i+1]\cdots w[j]\).  For \(\sigma \in \Sigma\), we write
\(|w|_\sigma\) for the number of occurrences of \(\sigma\) in \(w\).

We write $\FO$ to denote first-order logic over words, where quantifiers range
over positions, that is, the set \(\{1, \ldots, |w|\}\) for a given word \(w\), and where
$a(x)$ holds with respect to $w$ iff $w[x] = a$ (see, \eg
\cite[Chap.~8]{EB23} for formal definitions). We write $\FO[<]$ for the
extension of $\FO$ with the numerical predicate \(<\), that allows to test whether $x
< y$ for two positions $x$ and $y$. For example, the sentence $\varphi = (\exists x)(\forall
y)[a(x) \land ((x < y) \rightarrow b(y))]$ describes the language $\Sigma^* a b^*$. By abuse
of notation, $\FO[<]$ stands for both the set of syntactic sentences and for the
class of languages described by these sentences. It is well known that $\FO[<]$
is the class of star-free (regular) languages.

We write $\Sigma_i[<]$ (\resp $\Pi_i[<]$) for the fragment of $\FO[<]$ of sentences in
prenex normal form with $i$ blocks of alternating quantifiers starting with $\exists$
(\resp $\forall$). For example, $\Sigma^* a b^*$ belongs to $\Sigma_2[<]$ due to the form of our example
$\varphi$. We let $\Delta_i[<] = \Sigma_i[<] \cap \Pi_i[<]$.

\subparagraph{Population protocols over ordered agents.}  A population protocol
(PP) describes how a totally-ordered set of finite-state agents interact and
reach a decision about their overall initial states.  Interactions can happen
between any pair of agents, and predicates are used to restrict how transitions
can be taken, based on the position of the agents in the order.

\emph{(Syntax.)} We extend classical PPs to allow for transitions to carry a
test on the \emph{positions} of the totally-ordered agents.  Let
\(\cN \subseteq 2^{\bbN\times \bbN}\) be any set, whose elements we call \emph{numerical
  predicates}: these will be used as the allowed tests on a
transition.\footnote{These are sometimes called \emph{uniform} numerical
  predicates in the literature, to emphasize the fact that they do not depend on
  the \emph{total} number of agents.  A natural predicate that is not uniform is
  \(\mathbf{max}(x)\) which is true if \(x\) is the position of the last agent.
  This technical difference will not impact our results.}  We let
\(\bfsf{true} = \bbN^2\) be the always-true predicate.  The set \(\PP[\cN]\) of PPs
over \(\cN\) is the set of transition systems \((Q, \Sigma, O, \Delta)\) where
\(Q\) is a finite set of \emph{states,} \(\Sigma \subseteq Q\) is a distinguished subset of
\emph{initial} states, \(O\colon Q \to \{\top, \bot\}\) maps each state to an
\emph{opinion}, and
\(\Delta \subseteq Q^2 \times (\cN \cup \{\bfsf{true}\}) \times Q^2\) is a set of \emph{transitions}. An
element of \(\Delta\) is denoted \(q_1, q_2 \xrightarrow{P} q_3, q_4\), expressing,
intuitively, that if two distinct agents meet, the first being in position \(i\)
and state \(q_1\), the second in position \(j\) and state \(q_2\), such that
\((i, j) \in P\), then the first agent changes its state to \(q_3\) and the second to~\(q_4\).
We use $\Sigma$ for the set of initial states as we would like to see such protocols as language acceptors: The initial configuration should be thought of as an input word over alphabet $\Sigma$.

If \(\cN = \{P_1, P_2, \ldots\}\), we write \(\PP[P_1, P_2, \ldots]\) for
\(\PP[\cN]\).  Classical PPs can be seen as the class \(\PP[\emptyset]\), recalling that we
assume that \(\bfsf{true}\) is always available as a numerical predicate.  Our
main interest is in the class \(\PP[<]\), where \(<\) is seen as the set of
pairs \((i, j)\) with \(i < j\), but we will study more expressive
predicates in \Cref{sec:sem,sec:beyond}.

A PP is \emph{immediate-observation} if at most one agent changes state in each
interaction, \ie every transition is of the form $a,b \xrightarrow{P} a,c$
or $a,b \xrightarrow{P} c,b$.  We write \(\IOPP[\cN]\) for the class of
protocols in \(\PP[\cN]\) that are immediate-observation.

\emph{(Semantics.)}  Since our agents are totally ordered, we define the
\emph{configuration} of a system as a word of $Q^+$.  \emph{Initial
  configurations} are words of $\Sigma^+$.
Let \(u\) and \(v\) be two configurations of the same length, we say
that \(u\) \emph{leads to} \(v\), denoted \(u \to v\), if the two
configurations are equal except at potentially two distinct positions
\(i\) and \(j\), and there is a transition
\(u[i], u[j] \xrightarrow{P} v[i], v[j] \in \Delta\)
with \((i, j) \in P\). We let ${\to^*}$ be the reflexive transitive closure of~${\to}$.

\subparagraph{Consensus and stability.}  With \(w\) a configuration, let
\(O(w) \in \{\bot, \top\}\) be the common opinion of all states appearing in
\(w\), if there is one; otherwise \(O(w)\) is undefined.  A configuration \(w\) is a
\emph{\(b\)-consensus} if \(O(w) = b\). It is further \emph{\(b\)-stable} if
$w \to^* v$ implies that $v$ is a $b$-consensus.

\begin{example}\label{ex:ab-star}
  We give an example of a protocol in $\PP[<]$.  Consider the transition system
  $\protocol=(\{a,b,q_\bot\},\{a,b\},O,\Delta)$, with $O(a)=O(b)=\top$,
  $O(q_\bot)=\bot$, and transitions \mbox{$b, a \xrightarrow{<} q_\bot, q_\bot$} and
  \mbox{\(q_\bot, \_ \xrightarrow{\bfsf{true}} q_\bot, q_\bot\)} with ``$\_$'' standing for
  any state.  Every input configuration belongs to $(a+b)^+$ and is therefore a
  $\top$-consensus (all agents output $\top$ initially).  However, such configurations
  need not be $\top$-stable. For instance, starting from $ba$ we can apply the rule
  $b,a\xrightarrow{<} q_\bot,q_\bot$ and obtain $q_\bot q_\bot$, which is
  $\bot$-stable.  In contrast, every configuration in $a^*b^*$ is $\top$-stable: it is
  a $\top$-consensus and no transition is enabled.
\end{example}

\subparagraph{Language of a PP.}

Consider an infinite sequence \(w_0 \to w_1 \to w_2 \to \cdots\) of
configurations, which we call a \emph{run}. We say that it is \emph{fair}
if for every \(w_i\) that appears infinitely often, each configuration
of $\{v \mid w_i \to v\}$ appears infinitely often as well. By induction,
fairness guarantees that each configuration of $\{v \mid w_i
\to^* v\}$ appears infinitely often as well. Intuitively, fairness ensures that reachable
configurations cannot be avoided forever (in a probabilistic setting, where the scheduling induces a probability distribution on the runs, the resulting runs are almost surely fair).  We will
assume that any configuration can be extended into a run, and so into
a fair run, by implicitly adding ``no operation'' transitions.

The \emph{language} of a PP \(\cP\) is the set \(L(\cP)\) of initial configurations
from which there is a fair run that visits a \(\top\)-stable
configuration. Naturally, an initial
configuration can be the origin of a fair run visiting a \(\top\)-stable
configuration, another run visiting a \(\bot\)-stable configuration, or even a run
that visits no stable configurations.  We single out PPs that have more crisp
behaviors.  We say that a PP is a \emph{decider} when for all
\(w \in \Sigma^*\), there is a \(b \in \{\top, \bot\}\) such that \emph{all} fair runs from
\(w\) visit a \(b\)-stable configuration (i.e., for all finite runs from $w$ to some configuration $u$, there is a path from $u$ to a $b$-stable configuration).  
This is usually called
\emph{well-specified} in the literature and we justify our nomenclature in
\Cref{ssec:semidec}.
We say that \(L(\cP)\) is \(\PP[\cN]\)-decidable or \(\IOPP[\cN]\)-decidable
with the obvious meaning.  We will identify \(\PP[\cN]\) and \(\IOPP[\cN]\) with the
class of languages decidable by these protocols. 

Note that since population protocols are ill-defined when no agents are present,
we adopt the convention, when working with their languages, to disregard the
empty word.

\begin{example}\label{ex:ab-star:followup}
  Consider the protocol of \(\PP[<]\) from \Cref{ex:ab-star}.  We show that it
  decides the language $a^*b^*$.  Every input in $a^*b^*$ is already
  $\top$-stable.  Conversely, if an input word is not in $a^*b^*$, then it contains
  the factor $ba$.  Hence, in any fair run, the transition
  $(b,a)\to(q_\bot,q_\bot)$ is eventually executed.  From that point on, every remaining
  agent eventually interacts with a $q_\bot$-agent and is converted to $q_\bot$, so
  the run reaches a configuration in $q_\bot^*$, which is
  $\bot$-stable.
\end{example}

\newcommand{\iL}{\text{\footnotesize\raisebox{-0.25pt}{\faHandPointRight}}}
\newcommand{\iC}{\text{\scriptsize\raisebox{0.25pt}{\faCrown}}}
\newcommand{\iR}{\text{\footnotesize\raisebox{-0.25pt}{\faHandPointLeft}}\hspace{0.1pt}}
\newcommand{\iLt}{\text{\scriptsize\faHandPointRight}}
\newcommand{\iCt}{\text{\tiny\faCrown}}
\newcommand{\iRt}{\text{\scriptsize\faHandPointLeft}}

\begin{example}\label{ex:median-language}
  Consider the \emph{median language} $L \defeq \bigcup_{n \in \N}
  \Sigma^n a \Sigma^n$. Let us describe a protocol $\cP \defeq (Q,
  \Sigma, O, \Delta)$ that decides $L$. The states are defined as $Q
  \defeq \Sigma \times \{\iL, \iC, \iR\} \times \{\bot, \top\}$. The
  components respectively represent the input letter; a belief on
  whether the center is on the right, here, or on the left; and a
  belief on the output.

  We identify input $a$ with state $(a, \iC, \top)$, and each input
  $\sigma \neq a$ with $(\sigma, \iC, \bot)$. We set $O((x, y, z))
  \defeq z$. The set $\Delta$ is defined by these rules, each
  describing a family of transitions:

  \begin{center}
    \begingroup
    \setlength{\tabcolsep}{2pt}
    \begin{tabular}{cp{7pt}rcll}
      \toprule
      && \multicolumn{3}{c}{\textbf{\emph{Population halving}}} \\
      \midrule
      (1) &&
      $(x, \iC, z),\ (x', \iC, z')$ & $\xrightarrow{<}$ &
      $(x, \iL, \bot),\ (x', \iR, \bot)$ \\
      \midrule    
      && \multicolumn{3}{c}{\textbf{\emph{Center finding}}} \\
      \midrule
      (2) &&
      $(x, \iC, z), (x', \iL, z')$ & $\xrightarrow{<}$ &
      $(x, \iL, \bot), (x', \iC, x' = a)$ \\
      (3) &&
      $(x, \iR, z), (x', \iC, z')$ & $\xrightarrow{<}$ &
      $(x, \iC, x = a), (x', \iR, \bot)$ \\
      \midrule
      && \multicolumn{3}{c}{\textbf{\emph{Output propagation}}} \\
      \midrule
      (4) &&
      $(x, y, z), (x', \iC, z')$ & $\xrightarrow{\bfsf{true}}$ &
      $(x, y, z'), (x', \iC, z')$ & for $y \in \{\iL, \iR\}$  \\
      (5) &&
      $(x, y, \top), (x', y', \bot)$ & $\xrightarrow{\bfsf{true}}$ &
      $(x, y, \bot), (x', y', \bot)$ & for $y, y' \in \{\iL, \iR\}$ \\
      \bottomrule
    \end{tabular}
    \endgroup
  \end{center}
  
  By fairness, the first rule must be used until one or zero $\iC$
  remains. Moreover, by fairness, the second and third rules will
  respectively move the $\iL$'s to the left, and the $\iR$'s to the
  right. If some $(x, \iC, y)$ remains, then, by fairness and the
  fourth rule, it will propagate its output $y$, which is $\top$ iff
  $x = a$, by the choice of initial states and by
  rule~(2--3). Otherwise, if the population is of even length, the
  fifth rule will be used to propagate $\bot$.

  \Cref{fig:ex:med} depicts all configurations reachable from the
  initial configuration $aab \in L$. Any \emph{fair} run of a
  population protocol leads to a bottom strongly connected component
  of such a reachability graph. Thus, in this example, every fair runs
  leads to $a_\iLt^\top a_\iCt^\top b_\iRt^\top$, which is
  $\top$-stable. Note that $a_\iLt^\bot a_\iRt^\bot b_\iCt^\bot$ is a
  $\bot$-consensus, but is not $\bot$-stable.

  \begin{figure}[h]
    \centering
    \begin{tikzpicture}[->, semithick, auto, transform shape, scale=0.975]
  \tikzstyle{conf}  = [draw, thin, inner sep=2pt];
  \tikzstyle{confT} = [conf, pattern=north west lines, pattern color=colGood!90!black];
  \tikzstyle{confB} = [conf, pattern=north east lines, pattern color=colBad!40];
  \tikzstyle{confN} = [conf];

  \node[confN] (v0) {$a_\iCt^\top a_\iCt^\top b_\iCt^\bot$};

  \node[confN, below=1.0cm of v0] (v2) {$a_\iLt^\bot a_\iCt^\top b_\iRt^\bot$};
  \node[confB, left =1.5cm of v2] (v1) {$a_\iLt^\bot a_\iRt^\bot b_\iCt^\bot$};
  \node[confN, right=1.5cm of v2] (v3) {$a_\iCt^\top a_\iLt^\bot b_\iRt^\bot$};

  \node[confN, below=1.0cm of v2, xshift=-1.5cm] (v4) {
    $a_\iLt^\top a_\iCt^\top b_\iRt^\bot$
  };
  \node[confN, below=1.0cm of v2, xshift=1.5cm] (v5) {
    $a_\iLt^\bot a_\iCt^\top b_\iRt^\top$
  };

  \node[confT, below=1.0cm of v4, xshift=1.5cm] (vT) {
    $a_\iLt^\top a_\iCt^\top b_\iRt^\top$
  };

  \begin{pgfonlayer}{background}
    \node[fit=(vT), rounded corners=15pt, inner sep=8pt, fill=colGood!15] {};
  \end{pgfonlayer}
  
  \path[->]
  (v0) edge[out=180, in=90, looseness=0.75] node[above] {(1)} (v1)
  (v0) edge node {(1)} (v2)
  (v0) edge[out=0, in=90, looseness=0.75] node[above] {(1)} (v3)

  (v1) edge node {(3)} (v2)

  (v2) edge[bend  left=30] node[swap, yshift=-7pt] {(4)} (v4)
  (v2) edge[bend right=30] node[yshift=-7pt] {(4)} (v5)
  (v4) edge[bend  left=30] node[left] {(5)} (v2)
  (v5) edge[bend right=30] node[right] {(5)} (v2)

  (v3) edge node[above] {(2)} (v2)

  (v4) edge[out=-90, in=160] node[left] {(4)} (vT)
  (v5) edge[out=-90, in=20]  node[right] {(4)} (vT)
  ;
\end{tikzpicture}
    \caption{Configurations reachable from $aab$, where $x_y^z$ stands
      for $(x, y, z)$. Self-loops arising from ``no operation''
      transitions are omitted. The hatched nodes are consensuses; the
      bottom one is stable.}\label{fig:ex:med}
  \end{figure}
  
  Further observe that the reachability graph of \Cref{fig:ex:med} has
  non-trivial cycles, for instance:
  \[
  a_\iLt^\bot a_\iCt^\top {\color{colA}b_\iRt^\bot} \to
  a_\iLt^\bot a_\iCt^\top {\color{colB}b_\iRt^\top} \to
  a_\iLt^\bot a_\iCt^\top {\color{colA}b_\iRt^\bot} \to
  a_\iLt^\bot a_\iCt^\top {\color{colB}b_\iRt^\top} \to
  \cdots
  \]
  Informally, the first and second agents are fighting to convince the
  third agent. However, by fairness, this is not allowed to happen
  indefinitely. Eventually, the configuration $a_\iLt^\top a_\iCt^\top
  b_\iRt^\top$, at the bottom, is reached.

  In the above specific protocol $\cP$, for each initial
  configuration, the reachability graph has a unique trivial bottom
  strongly connected component, made of one configuration of the form
  $\iL^n \iC\, \iR^n$ or $\iL^n \iR^n$. However, in general, it needs
  not be unique or trivial. A fair run becomes $b$-stable iff it
  visits a bottom strongly connected component whose configurations
  are \emph{all} $b$-consensuses.

  In a decider, for a given initial configuration, \emph{all} bottom
  strongly connected components must consist only of stable
  configurations, all of the \emph{same} output.
\end{example}

We now provide a generic upper bound on the complexity of languages decided by
population protocols; we will exhibit a matching lower bound in
\Cref{sec:beyond} for \(\IOPP[+1]\).  Recall that \(\NSPACE(n)\) is the class of
languages recognized by linear-bounded nondeterministic Turing machines, \ie
nondeterministic machines that require space $O(n)$ over inputs of size $n$.
Languages of this class are exactly the context-sensitive
languages~\cite{Kur64}.
\begin{theorem}\label{thm:ub}
  Let \(\cN\) be a set of numerical predicates, all of which decidable in
  \(\NSPACE(n)\).  We have \(\PP[\cN] \subseteq \NSPACE(n)\).
\end{theorem}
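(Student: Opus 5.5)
Let \(\cP = (Q, \Sigma, O, \Delta) \in \PP[\cN]\) be a decider with \(L = L(\cP)\), and fix an input \(w \in \Sigma^+\) of length \(n\). The plan is to show \(L \in \NSPACE(n)\) and, symmetrically, \(\overline{L} \in \NSPACE(n)\); by the Immerman--Szelepcsényi theorem \(\NSPACE(n)\) is closed under complement, so either one suffices. Each configuration reachable from \(w\) is a word in \(Q^n\), which a linear-bounded machine stores in \(O(n)\) cells (\(|Q|\) is a constant). One step \(u \to v\) can also be checked in linear space: guess positions \(i \neq j\) and a transition \(u[i],u[j] \xrightarrow{P} v[i],v[j]\), then check \((i,j) \in P\). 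By assumption each predicate of \(\cN\) is decidable in \(\NSPACE(n)\), and there are only finitely many predicates used by \(\Delta\). Since the inputs \(i,j\) are at most \(n\), i.e.\ written in \(O(\log n)\) bits (or in unary, \(O(n)\) cells), running the \(\NSPACE\) procedure on them stays within \(O(n)\) space. Complementing, again by Immerman--Szelepcsényi, lets us also test \((i,j) \notin P\) when needed.

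Since \(\cP\) is a decider, \(w \in L\) iff some configuration \(u\) reachable from \(w\) is \(\top\)-stable. (If \(w \in L\), a fair run visits a \(\top\)-stable configuration. Conversely, if some reachable \(u\) is \(\top\)-stable, then \(w\) cannot be a \(\bot\)-decided input, because \(u\) cannot reach a \(\bot\)-stable configuration, so the decider property forces answer \(\top\).) The machine therefore proceeds as follows:
\begin{enumerate}
\item Guess a path \(w \to^* u\) step by step, keeping only the current configuration. This is standard nondeterministic reachability in a graph of \(|Q|^n\) nodes, and a step counter is unnecessary for a semi-decision.
\item Verify that \(u\) is \(\top\)-stable, i.e.\ that no \(v\) with \(u \to^* v\) contains a state of opinion \(\bot\).
\end{enumerate}

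Step 2 is the main obstacle, because it is a universal statement over an exponential reachability set. Its complement, ``\(u \to^* v\) for some \(v\) containing a \(\bot\)-opinion state,'' is an \(\NSPACE(n)\) reachability query, using the linear-space step check above. By Immerman--Szelepcsényi, non-reachability is also in \(\NSPACE(n)\): inductive counting over configurations of length \(n\) uses counters up to \(|Q|^n\), i.e.\ \(O(n)\) bits. Composing a nondeterministic linear-space procedure with a call to another one (the certificate for \(\top\)-stability) keeps the space linear.

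Hence \(L \in \NSPACE(n)\). For the decider property we only needed the characterization in terms of reachable stable configurations, so no fairness reasoning beyond the remark in the definition is required.
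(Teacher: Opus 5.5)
Your proposal is correct and follows essentially the same route as the paper: guess a run from $w$ to some configuration $u$, then certify that $u$ is $\top$-stable by applying Immerman--Szelep\-cs\'enyi to the $\NSPACE(n)$ reachability query ``$u$ reaches a non-$\top$-consensus,'' with each step (including its predicate test) checked in linear space. As a minor remark, the equivalence ``$w \in L(\cP)$ iff some configuration reachable from $w$ is $\top$-stable'' already follows from the definition of $L(\cP)$, since any finite run extends to a fair run, so you do not actually need the decider property there.
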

\begin{proof}
Recall that a configuration is $\top$-stable if every reachable configuration from it is a $\top$-consensus (i.e., all agents have opinion $\top$).
  Consider a protocol in \(\PP[\cN]\).  The set of configurations that are
  \emph{not} \(\top\)-stable is decidable in \(\NSPACE(n)\).  Indeed, it is sufficient
  to nondeterministically guess a partial run (\ie a finite sequence of configurations $w_0 \to w_1 \to \dots \to w_n$) that leads to a configuration that
  is not a \(\top\)-consensus.  Each transition can be guessed, its condition
  checked, and its effect applied in \(\NSPACE(n)\).

  By the Immerman--Szelepcsényi theorem, \(\NSPACE(n)\) is closed under complement,
  and so the set of \(\top\)-stable configurations is in \(\NSPACE(n)\).  Since the
  protocol is a decider, to check that a word \(w\) is accepted, it is 
  sufficient to nondeterministically guess a partial run from \(w\), and check
  that it ends in a \(\top\)-stable configuration.  These are all tasks in
  \(\NSPACE(n)\).
\end{proof}

\section{Semantic restrictions of population protocols: a toolbox}\label{sec:sem}

In this section, we define two restrictions of population protocols that will be
used to simplify our constructions.

We first define \emph{semi-deciders}, which do not require the full behavior of
deciders with respect to stable configurations.  Throughout the next sections,
we will see that semi-deciders are much easier to define for some languages, and
we will rely on the forthcoming \Cref{lem:semidec} to combine semi-deciders into
deciders.

We then define \emph{protocols with stabilizing inputs}, which are protocols
where agents can change their mind about their input.  Such protocols are harder
to design, since they are more robust to change, but we show, in
\Cref{lem:renaming}, that they exhibit a strong closure property: they are
closed under alphabet rewriting.

\subsection{Semi-deciders}\label{ssec:semidec}

\begin{plaindef}
  We say that a PP \emph{semi-decides} \(L \subseteq \Sigma^+\) if for all
  \(u \in \Sigma^+\) and every fair run \(\rho\) starting from \(u\), we have
  \(u \in L\) iff \(\rho\) contains a \(\top\)-stable configuration.  We use the terms
  \emph{\(\PP[\cN]\)-semi-decidable} and \emph{\(\IOPP[\cN]\)-semi-decidable} with
  the obvious meaning.
\end{plaindef}

Note that a decider is a semi-decider in which we additionally require that
\(u \notin L\) iff \(\rho\) visits a \(\bot\)-stable configuration; that is, all fair runs
visit a stable configuration.  Illustrating the differences,
\Cref{fig:situations} shows the situations that can occur for runs from an input
\(w\).

\begin{figure}[h!]
  \centering \makeatletter
\def\mkstable#1#2#3{\@namedef{#1stable}{%
    \tikz[baseline] \node[anchor=base,circle,fill=#2!15,inner sep=1pt, minimum size=12pt] {\makebox[5pt]{\sf #3}};}}
\mkstable{n}{colNeutral}{ns}
\mkstable{t}{colGood}{\(\top\)}
\mkstable{b}{colBad}{\(\bot\)}

\begin{tikzpicture}[
    transform shape, scale=0.91,
  x=1cm,y=1cm,
  nodept/.style={circle, fill=black, inner sep=1.6pt},
  endlab/.style={font=\small},
]

\def\xend{1.5}      
\def\dy{0.6}        
\def\xsep{2.5}      
\def\segl{15pt}     

\node[nodept] (w0) at (0,0) {};
\foreach \k in {1,...,5}{
  \node[nodept] (w\k) at (\k*\xsep,0) {};
}
\foreach \k in {0,...,5}{
  \node[below=2pt of w\k] {$w$};
}

\@namedef{Lab00}{\nstable}
\@namedef{Lab01}{\tstable}
\@namedef{Lab02}{\bstable}

\@namedef{Lab10}{\bstable}
\@namedef{Lab11}{\tstable}
\@namedef{Lab12}{\tstable}

\@namedef{Lab20}{\tstable}
\@namedef{Lab21}{\nstable}
\@namedef{Lab22}{\tstable}

\@namedef{Lab30}{\bstable}
\@namedef{Lab31}{\bstable}
\@namedef{Lab32}{\nstable}

\@namedef{Lab40}{\tstable}
\@namedef{Lab41}{\tstable}
\@namedef{Lab42}{\tstable}

\@namedef{Lab50}{\bstable}
\@namedef{Lab51}{\bstable}
\@namedef{Lab52}{\bstable}

\newcommand{\squig}[4]{%

  \path[thick, ->]
  (#1) edge[out=#3, in=#4, looseness=2] node[] {} ($(#2)-(0.225,0)$)
  ;
}

\foreach \k in {0,...,5}{
  \foreach \i in {0,1,2}{
    \coordinate (E\k-\i) at ($(w\k)+(\xend, {-(\i-1)*\dy})$);
  }

  \node at ($(w\k)+({\xend / 2}, -2*\dy)$) {(\the\numexpr\k+1\relax)};

  \squig{w\k}{E\k-0}{10}{170}
  \squig{w\k}{E\k-1}{0}{165}
  \squig{w\k}{E\k-2}{-10}{190}

  \foreach \i in {0,1,2}{
    \node[endlab]
      at ($(E\k-\i)$)
      {\csname Lab\k\i\endcsname};
  }}

\end{tikzpicture}
\makeatother
    \caption{Configurations labeled {\scriptsize\tstable},
      {\scriptsize\bstable{}} and \(\nstable\) are
      \(\top\)-stable, \(\bot\)-stable, and configurations from which
      no stable configurations are reachable. Arrows depict partial, finite
      runs. Situations (1-6) can happen in PP. Only situations
      (4-6) can happen in semi-deciders, while only situations (5-6)
      can in deciders.}
  \label{fig:situations}
\end{figure}

Our naming convention is justified by the following property:

\begin{restatable}{lemma}{lemSemidec}\labelandarrows{lem:semidec}
  A language $L$ is \(\PP[\cN]\)-decidable
  iff $L$ and its complement are \(\PP[\cN]\)-semi-decidable. The same
  holds for \(\IOPP[\cN]\).
\end{restatable}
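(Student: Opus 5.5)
The direction from deciders to semi-deciders is easy. Suppose \(\cP\) decides \(L\). Then for \(u \in L\) every fair run visits a \(\top\)-stable configuration. For \(u \notin L\) every fair run visits a \(\bot\)-stable configuration, hence never a \(\top\)-stable one. So \(\cP\) semi-decides \(L\). Swapping \(\top\) and \(\bot\) in \(O\) gives a decider, hence a semi-decider, for the complement. Neither step changes transitions, so immediate observation and the numerical predicates are preserved.

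For the converse, let \(\cP_1=(Q_1,\Sigma,O_1,\Delta_1)\) semi-decide \(L\) and \(\cP_2=(Q_2,\Sigma,O_2,\Delta_2)\) semi-decide \(\Sigma^+\setminus L\). I would build a product protocol with states \(Q_1\times Q_2\times\{\top,\bot\}\), initial state \(\sigma \mapsto (\sigma,\sigma,\top)\), and opinion given by the third component (the flag).

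Its transitions are of three kinds:
\begin{itemize}
\item each transition of \(\Delta_1\) is lifted to act on the first components only, keeping its predicate;
\item each transition of \(\Delta_2\) is lifted likewise on the second components;
\item two flag rules with predicate \(\bfsf{true}\), in which only one agent changes its flag:
  \begin{itemize}
  \item rule (A): an agent whose first component has \(O_1\)-opinion \(\top\) and which observes an agent whose second component has \(O_2\)-opinion \(\bot\) sets its flag to \(\top\);
  \item rule (B): symmetrically, an agent whose second component has \(O_2\)-opinion \(\top\) and which observes an agent whose first component has \(O_1\)-opinion \(\bot\) sets its flag to \(\bot\).
  \end{itemize}
\end{itemize}
All rules change at most one agent when the original rules do. Hence the construction stays in \(\IOPP[\cN]\) when \(\cP_1,\cP_2\) are.

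The key technical step is projecting fair runs. Let \(\rho\) be a fair run of the product. Its projection onto the first components, with stuttering removed, is a run of \(\cP_1\) (and similarly for \(\cP_2\)). I will argue it is fair. Only finitely many configurations of a fixed length exist, so any \(\cP_1\)-configuration \(c\) occurring infinitely often in the projection comes from some product configuration occurring infinitely often with first component \(c\). Every \(\cP_1\)-successor of \(c\) is then realized by a product successor, which fairness of \(\rho\) makes occur infinitely often. If the projection is eventually constant, it is a finite run padded with no-ops, and the same argument applies.

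Now take \(u\in L\). The projection on \(\cP_1\) visits a \(\top\)-stable configuration, so from some point on all first components have opinion \(\top\) forever, and rule (B) is never enabled again. The projection on \(\cP_2\) never visits a \(\top\)-stable configuration. Its fair run ends in a bottom SCC that is not entirely made of \(\top\)-consensuses, so infinitely often some second component has opinion \(\bot\). I expect this to be the main obstacle, and it is exactly where the semi-decider definition for the complement is used. From then on, rule (A) is enabled for each agent whose flag is \(\bot\), from configurations recurring infinitely often, so fairness turns every flag to \(\top\). The resulting configuration is \(\top\)-stable, since the first components stay \(\top\) and (B) stays disabled. The case \(u\notin L\) is symmetric and yields a \(\bot\)-stable configuration. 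Hence every fair run from \(u\) reaches a stable configuration of the correct opinion, and the product is a decider for \(L\).
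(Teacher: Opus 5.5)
There is a genuine gap at the step ``rule (A) is enabled for each agent whose flag is \(\bot\)''. An interaction involves two \emph{distinct} agents, so rule (A) can only use a \(\bot\)-witness of \(\cP_2\) held by some \emph{other} agent. Suppose the bottom strongly connected component of the \(\cP_2\)-projection only ever has one agent \(j\) with \(O_2\)-opinion \(\bot\), and \(j\)'s flag was set to \(\bot\) by rule (B) before \(\cP_1\) stabilized. Then \(j\) can never recover: (A) needs a different witness, and (B) is dead from then on.

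This really happens. Take \(\Sigma=\{a\}\) and \(L=\{a^n \mid n\geq 2\}\). Let \(\cP_1\) turn \(a\) (opinion \(\bot\)) into \(t\) (opinion \(\top\)) upon observing anyone. Let \(\cP_2\) turn \(a\) (opinion \(\top\)) into \(\ell\) (opinion \(\bot\)) upon observing anyone, and turn \(\ell\) into \(f\) (opinion \(\top\)) upon observing another \(\ell\). On \(a^n\) with \(n\geq 2\), \(\cP_2\) gets stuck with exactly one \(\ell\), so it semi-decides \(\{a\}\). On input \(aa\), let agent \(1\) first apply (B), which is enabled because agent \(2\) has \(O_1\)-opinion \(\bot\). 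Then let \(\cP_1\) reach \(tt\) and \(\cP_2\) reach \(\ell f\), with agent \(1\) holding \(\ell\). The resulting configuration has flags \(\bot\top\) and is terminal, so your product does not even semi-decide \(L\).

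The missing ingredient is the paper's self-flip: an agent may change its belief when its \emph{own} trusted component has opinion \(\bot\), in addition to flipping when it meets an agent with a different belief. In your formulation, (A) should also fire when the agent's own second component has \(O_2\)-opinion \(\bot\), and symmetrically for (B). This is still an immediate-observation \(\bfsf{true}\)-transition, since any partner can be used. With that change, flags only move from \(\bot\) to \(\top\) once \(\cP_1\) has stabilized, and some configuration containing a \(\cP_2\)-\(\bot\)-agent recurs infinitely often. Your fairness argument then shows that all flags, including the witness's own, become \(\top\); your projection-of-fair-runs step is fine.

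A smaller corner case: with a single agent no transition can ever fire, so initializing every flag to \(\top\) makes every one-letter word accepted. The initial flag of \(\sigma\) must be \(\top\) iff \(\sigma\in L\), equivalently iff \(O_1(\sigma)=\top\). This is exactly how the paper chooses the initial belief.
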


\begin{proof}[Proof sketch.]
  We combine the two semi-deciders for $L$ and for $\Sigma^+\setminus L$ by running them in
  parallel.  Each agent stores a pair of states, one for each semi-decider,
  together with a \emph{belief} indicating which semi-decider it currently
  trusts.  Transitions either simulate one step of one of the semi-deciders, or
  \emph{flip} the belief so that beliefs can align; in particular, an agent may
  flip when it meets an agent with the opposite belief, or when its currently
  trusted component produces a $\bot$-witness.  For any input $u$, exactly one of
  the two simulations eventually provides such a witness (since exactly one of
  $u\in L$ or $u\notin L$ holds), which forces all agents to converge to the correct
  belief and yields a stable consensus, hence a decider.
\end{proof}

We note these elementary closure properties:
\begin{restatable}{lemma}{lemInterSemi}\labelandarrows{lem:inter:semi}
  If $L_1, L_2 \subseteq \Sigma^+$ are $\PP[\cN]$-semi-decidable,
  then it is also the case for $L_1 \cap L_2$ and $L_1 \cup L_2$. This further holds
  for $\IOPP[\cN]$ and deciders.
\end{restatable}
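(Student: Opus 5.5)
We use a product construction for both operations. For deciders the statement then follows without a separate argument: by \Cref{lem:semidec}, $L_1,L_2$ decidable means $L_i$ and $\Sigma^+\setminus L_i$ are semi-decidable. Since $\Sigma^+\setminus(L_1\cap L_2)=(\Sigma^+\setminus L_1)\cup(\Sigma^+\setminus L_2)$, and dually for $\cup$, closure of semi-decidability under both $\cap$ and $\cup$ gives closure of decidability, again by \Cref{lem:semidec}. So it suffices to treat semi-deciders.

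Let $\cP_i=(Q_i,\Sigma,O_i,\Delta_i)$ semi-decide $L_i$. We take states $Q_1\times Q_2$, with initial state $(\sigma,\sigma)$ for $\sigma\in\Sigma$, identified with $\sigma$.

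\textbf{Transitions.} For each $p,q\xrightarrow{P}p',q'$ in $\Delta_1$ and all $r,s\in Q_2$, we add $(p,r),(q,s)\xrightarrow{P}(p',r),(q',s)$, and symmetrically for $\Delta_2$. The predicate $P$ is kept unchanged, so the construction stays in $\PP[\cN]$. Each product transition changes at most the agents that the simulated transition changes, so immediate observation is preserved. Projecting a run of the product to component $i$ gives a run of $\cP_i$ up to stuttering.

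\textbf{Opinions.} For intersection we set $O(p,q)=O_1(p)\wedge O_2(q)$, and for union $O(p,q)=O_1(p)\vee O_2(q)$. For the union, a plain disjunction can mix: a configuration may have some agents $\top$ only via component $1$ and others only via component $2$, so it is not a $\top$-consensus. To avoid this, for union we add a flag bit $f\in\{1,2\}$ to each agent. The opinion is $O_f$ of the flagged component. Agents may freely switch their flag via a ``true'' self-loop, realised as an IO transition with any partner. This flip is allowed only if the agent's flagged component currently has opinion $\bot$, or if it meets an agent with the other flag whose flagged component has opinion $\top$. This mirrors the belief mechanism from \Cref{lem:semidec}.

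\textbf{Fairness of projections.} This is the main obstacle. We must show that projecting a fair product run yields a run that is fair for each $\cP_i$. Being infinitely often in a component configuration $w$ does not obviously mean that every successor of $w$ is visited infinitely often, because the product configurations over $w$ may vary.

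The way around it is to use a finite-graph argument rather than trace fairness directly. A fair run of a finite system eventually stays in a bottom SCC $B$ of the reachable product graph and visits all of $B$. Component moves are independent: from any $(u,v)\in B$ and any step $u\to u'$ of $\cP_1$, the product can step to $(u',v)$, which stays in $B$. Hence the projection of $B$ is closed under $\cP_1$-steps, and we show it is strongly connected, because any $\cP_1$-path from $u$ can be lifted while fixing $v$. Projections of fair runs therefore visit a bottom SCC of $\cP_1$ entirely.

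\textbf{Correctness.} For intersection, if $u\in L_1\cap L_2$, then both projections eventually reach and stay in $\top$-stable sets, so the product does too. Conversely, a $\top$-stable product configuration projects to configurations whose reachable sets are all $\top$-consensuses. The projected run is fair, so membership follows from the semi-decider property. For union, the flag dynamics must be shown to eventually align all flags on a component that is $\top$-stable when one exists. When no component is $\top$-stable, a $\top$-stable product configuration cannot be reached, since some agent's flagged component would eventually show $\bot$. This part requires the most care.
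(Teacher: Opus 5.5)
Your proposal is correct, and its core matches the paper's construction. The paper omits this proof and refers to the proof of \Cref{lem:inter:stab}, which uses a componentwise product with conjunctive opinion for $L_1\cap L_2$, and for $L_1\cup L_2$ an extra per-agent bit selecting which component's opinion is displayed, plus a flip rule. Your flip rule is a strict restriction of the paper's. There, an agent may flip whenever the partner's bit differs, or when either agent shows $\bot$ in the flipping agent's selected component. Yours still works.

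The real difference is the decider clause. The paper checks directly that the products preserve deciders: a $\bot$-stable component makes the conjunction $\bot$-stable, and for the union, once both components are $\bot$-stable the bits may keep flipping but the output stays $\bot$. You instead derive it from \Cref{lem:semidec} and De Morgan. That route is more modular and needs no extra verification, at the cost of a larger protocol.

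Three small points.
\begin{enumerate}
\item Fairness of projections is not a real obstacle. For a fixed length, only finitely many product configurations lie above a component configuration $w$. So if $w$ recurs, some $(w,v)$ recurs, and fairness of the product makes each $(w',v)$ with $w\to w'$ recur. Your bottom-SCC argument is also valid, but strong connectivity of the projection of $B$ comes from projecting paths inside $B$, not from lifting.
\item You must fix the initial bit of each letter, since a single agent never fires a transition. Take bit $1$ iff $O_1(\sigma)=\top$; otherwise single letters in $L_2\setminus L_1$ could be rejected.
\item The union step you leave open does close with your rule.
\begin{itemize}
\item Suppose component $1$ has become $\top$-stable and the input is not in $L_2$. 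From a configuration with no bit $1$, component $2$ can produce a $\bot$, and that agent switches by your first condition. Every bit-$2$ agent then switches by your second condition. The all-$1$ configuration is frozen and $\top$-stable.
\item Suppose neither component can become $\top$-stable, and some agent carries bit $i$. Drive component $i$ to a configuration where some agent $j$ shows $\bot$ there. Then one of four things holds: $j$ carries bit $i$; $j$ already shows $\bot$ in its own selected component; a bit-$i$ agent showing $\top$ pulls $j$ to bit $i$; or all bit-$i$ agents already show $\bot$. In every case a configuration that is not a $\top$-consensus is reachable.
\end{itemize}
\end{enumerate}
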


\subsection{Protocols with stabilizing inputs}
\label{ssec:stab}

Agents of a protocol are generally not aware that they have reached a stable
consensus and hence ``terminated.'' To carry out a task~A and use its output in
a subsequent task~B, a protocol has to perform \emph{both} tasks concurrently.
The protocol for task~B thus guesses what the output of task~A is going to be,
but ought to be able to self-correct if it becomes clear that the guess was
wrong.  In this subsection, we introduce a formal notion for this
``self-correction'' which will simplify the design of composable protocols;
this is inspired by a recent presentation of~\cite{Ras24} for $\PP[\emptyset]$.

Let us consider protocols where each agent keeps a copy of its
input. Formally, a protocol $\cP = (Q, \Sigma, O, \Delta)$ is said to
be \emph{input-saving} if
\begin{itemize}
\item $Q = \Sigma \times R$ for some finite set $R$;
  
\item Each $\sigma \in \Sigma$ is identified with $(\sigma, r_\sigma)$ for some $r_\sigma \in R$; and
  
\item The first component $\sigma$ of any state $(\sigma, r) \in Q$
  is left unchanged by all transitions of $\Delta$.
\end{itemize}

For all $w \in Q^+$, let $\inpt{w} \in \Sigma^+$ be the projection of $w$ onto its
first component. Given $u, v \in Q^n$, we write $u \dynto v$ if either
$u \to v$, or $v$ equals $u$ except at a single position $i$ where $u[i] =
(\sigma, r)$ and $v[i] = (\sigma', r)$.  This second type of transitions
models a ``change of mind'' of agent $i$ on its input.  We write \(\dynto^*\) for
the reflexive transitive closure of \(\dynto\).  Note what we do not change the
definition of run, which still relies on \(\to\) only.

\begin{plaindef}
  We say that $\cP$ semi-decides $L \subseteq \Sigma^+$ \emph{with stabilizing inputs} if
  (a)~$\cP$ is input-saving, and (b)~for all $u \in \Sigma^+$, all
  $u \dynto^* v$ and every fair run $\rho$ starting from $v$, it is the case that
  $\inpt{v} \in L$ iff $\rho$ visits a $\top$-stable configuration.  We use the term
  ``\emph{decides}'' if (b) is strengthened with the condition that
  $\inpt{v} \notin L$ iff $\rho$ visits a $\bot$-stable configuration.
\end{plaindef}

Note that this is more robust than simply semi-deciding: if \(\cP\) semi-decides
\(L\) with stabilizing inputs, then it semi-decides \(L\).  Intuitively, after
$u \dynto^* v$, the configuration of the population (\ie the projection of $v$
onto the second component) may be \emph{incompatible} with $\inpt{v}$ since
agents have possibly changed their mind several times on their input. Computing
with stabilizing inputs means that the protocol is able to fix its configuration
so that it reflects the expected output on $\inpt{v}$.

It is known that any language $L \in \PP[\emptyset]$ can be decided,
and hence semi-decided, with stabilizing inputs\footnote{It was
claimed earlier by~\cite{AACFJP05}, but definitions and proofs were
deferred to a full paper that never appeared.}~\cite{Ras24}. Let us
turn to an example, which will be useful later:

\begin{restatable}{proposition}{propOrdSemi}\labelandarrows{prop:ord:semi}
  The language $0^* 1^* \cdots k^*$ is
  $\IOPP[<]$-semi-decidable with stabilizing inputs.
\end{restatable}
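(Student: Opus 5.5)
The plan is to build an explicit input-saving protocol. Each agent remembers only its input letter and a current opinion. An agent switches to $\bot$ whenever it observes an order violation, and returns to $\top$ otherwise. Take $Q = \Sigma \times R$ with $\Sigma=\{0,\dots,k\}$, $R=\{\top,\bot\}$, $r_\sigma = \top$ for every $\sigma$, and $O((\sigma,r)) \defeq r$. The transitions are all immediate-observation.

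\emph{Violation.} For all $i>j$ and all $r,s\in R$, add
$(i,r),(j,s) \xrightarrow{<} (i,r),(j,\bot)$, where the later agent observes an earlier agent with larger input. Also add $(i,r),(j,s) \xrightarrow{<} (i,\bot),(j,s)$, where the earlier agent observes a later agent with smaller input.

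\emph{Reset.} For all states $p$ and all $\sigma$, add $p,(\sigma,\bot)\xrightarrow{\bfsf{true}} p,(\sigma,\top)$, and symmetrically with the roles of the two agents swapped.

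The resulting protocol is input-saving, since first components never change.

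Now fix $u \dynto^* v$ and a fair run $\rho$ from $v$. Along $\rho$ only $\to$ is used, so $\inpt{\cdot}$ is constant and equals $\inpt{v}$ throughout.

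\emph{Case $\inpt{v}\notin 0^*1^*\cdots k^*$.} Then there are positions $p<q$ with $\inpt{v}[p] > \inpt{v}[q]$. Because inputs never change along $\rho$, a violation transition on the pair $(p,q)$ is enabled in every configuration of $\rho$. So from every configuration $w$ visited by $\rho$ we can reach a configuration containing a $\bot$-agent. If $w$ is a $\top$-consensus, that configuration is not a $\top$-consensus. Hence no configuration of $\rho$ is $\top$-stable.

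\emph{Case $\inpt{v}\in 0^*1^*\cdots k^*$.} No violation transition is ever enabled, so opinions can only move from $\bot$ to $\top$.
\begin{itemize}
\item If $n\ge 2$, every $\bot$-agent can be reset by observing any other agent. By fairness, $\rho$ reaches a configuration in $(\Sigma\times\{\top\})^+$. From there only resets are applicable, and these do not change a $\top$-agent, so that configuration is $\top$-stable.
\item If $n=1$, the step $u\dynto^* v$ cannot have fired any transition, since every transition needs two distinct agents. The change-of-mind steps preserve the second component, which was $r_\sigma=\top$ initially. So $v$ is already a $\top$-consensus with no enabled transition, hence $\top$-stable.
\end{itemize}

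The only delicate point is this interaction between stale opinions and changes of mind. A change of mind may leave a $\bot$ that no longer reflects a genuine violation, so the protocol needs an unconditional reset. In turn, we must argue that the reset cannot mask a persistent violation. This is exactly why the argument only claims semi-decision: on inputs outside the language, a fair run may keep oscillating forever between $\bot$ and $\top$ opinions without ever stabilizing. That is permitted for a semi-decider, and the first case only requires that no $\top$-stable configuration is ever visited.
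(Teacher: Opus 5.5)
Your proposal is correct and follows essentially the same route as the paper: the state space $\Sigma\times\{\top,\bot\}$, a $<$-guarded immediate-observation rule that flags a misordered pair, and an unconditional $\bfsf{true}$-guarded reset. The differences are cosmetic. You flag both agents of a violating pair regardless of their current opinion, and your ``swapped'' reset rule is redundant, since $\bfsf{true}$ already permits either order of positions. Your negative case also argues a little more directly than the paper: every configuration of $\rho$ reaches a non-$\top$-consensus in one step, so none is $\top$-stable and fairness is not needed there.
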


\begin{proof}[Proof sketch.]
  Let $\Sigma \defeq \{0, 1, \ldots, k\}$ and $L \defeq 0^* 1^* \cdots
  k^*$. The protocol $\cP = (Q, \Sigma, \Delta, O)$ for $L$ is defined
  by $Q \defeq \Sigma \times \{\top, \bot\}$, $O((\sigma, o)) \defeq
  o$, each $\sigma \in \Sigma$ identified with $(\sigma, \top)$, and
  these rules:
  \begin{align*}
    (x, \top), (y, o) &\xrightarrow{\mathmakebox[15pt][c]{<}} (x, \bot), (y, o)
    && \text{for $x > y$}, \\
    (x, \bot), (y, o)
    &\xrightarrow{\mathmakebox[15pt][c]{\bfsf{true}}} (x, \top), (y, o).
  \end{align*}
  The purpose of the first rule is to detect a misordering. The second
  rule allows any agent to nondeterministically reset
  itself.
\end{proof}

We now turn to closure properties of protocols with stabilizing
inputs.  We first cover union and intersection, then move on to
(nondeterministic) alphabet rewriting.

\begin{restatable}{lemma}{lemInterStab}\labelandarrows{lem:inter:stab}
  If $L_1, L_2 \subseteq \Sigma^+$ are $\PP[\cN]$-semi-decidable with stabilizing inputs,
  then it is also the case for $L_1 \cap L_2$ and $L_1 \cup L_2$. This further holds
  for $\IOPP[\cN]$ and deciders.
\end{restatable}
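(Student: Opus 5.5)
We run the two protocols in parallel, in a product construction, and treat each side with the composition tools of \Cref{sec:sem}. Let $\cP_i = (\Sigma \times R_i, \Sigma, O_i, \Delta_i)$ be input-saving protocols semi-deciding $L_i$ with stabilizing inputs, for $i \in \{1,2\}$. The product protocol $\cP$ has states $\Sigma \times R_1 \times R_2$, with each $\sigma$ identified with $(\sigma, r^1_\sigma, r^2_\sigma)$. Its transitions are the lifts of $\Delta_1$, acting on the first register component only, and the lifts of $\Delta_2$, acting on the second. Each lifted transition keeps its numerical predicate and never changes the input component, so $\cP$ is input-saving. If both $\cP_i$ are immediate-observation, so is $\cP$, since each lifted transition changes at most the agent that $\cP_i$ changes.

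For the opinion we take $O((\sigma,r_1,r_2)) = O_1(\sigma,r_1) \wedge O_2(\sigma,r_2)$ in the case of intersection, and $\vee$ in the case of union.

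\emph{Projection.} Any $\dynto$-step of $\cP$ projects to a $\dynto$-step, or a no-op, of each $\cP_i$ on the same input projection. Hence if $u \dynto^* v$ in $\cP$, then $u \dynto^* \pi_i(v)$ in $\cP_i$. A fair run $\rho$ of $\cP$ from $v$ projects to a run of $\cP_i$ from $\pi_i(v)$ once stuttering steps are removed.

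\emph{Main obstacle: fairness of the projections.} We must show that these projected runs are fair, or at least satisfy the reachability properties we use. Fairness is defined on configurations, and a configuration $v_i$ of $\cP_i$ that appears infinitely often in the projection may do so paired with infinitely many different, though finitely many possible, partner components. I would argue as follows. Since the configuration space from $v$ is finite, $\rho$ eventually stays in a bottom SCC $B$ of the reachability graph of $\cP$. Because moves in the two components commute and are independent, every configuration reachable from $\pi_1(w)$ in $\cP_1$ is the projection of a configuration reachable from $w$ in $\cP$ (keep the second component frozen). Hence $\pi_1(B)$ is closed under $\cP_1$-successors, and since all configurations of $B$ are mutually reachable, $\pi_1(B)$ is a bottom SCC of $\cP_1$. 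A fair run of $\cP$ visits every configuration of $B$ infinitely often, so the projection visits every configuration of $\pi_1(B)$ infinitely often, which suffices for fairness. The same argument applies to $\cP_2$.

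\emph{Correctness.} A fair run visits a $b$-stable configuration iff its bottom SCC consists of $b$-consensuses. So it suffices to prove that $B$ consists of $\top$-consensuses of $\cP$ iff $\pi_1(B)$ and $\pi_2(B)$ do (intersection), or iff one of them does (union).

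For intersection, $B$ contains only $\top$-consensuses iff every agent has both components $\top$ throughout $B$, which is exactly the condition that both projections are $\top$-consensuses on all of $B$.

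For union, one direction is immediate. For the converse, suppose neither $\inpt{v} \in L_1$ nor $\inpt{v} \in L_2$. Then some configuration of $\pi_1(B)$ has an agent with opinion $\bot$ in $\cP_1$. Freezing the second component, we can reach a configuration of $B$ in which that agent also has opinion $\bot$ in its second component. Indeed, by the same commutation argument, the product SCC $B$ contains $\pi_1(B) \times \pi_2(B)$ restricted to compatible pairs, and compatibility is automatic since the inputs agree. So that agent has opinion $\bot \vee \bot$ somewhere in $B$. The failing case for union is therefore handled by the product structure of $B$, which again comes from commutation.

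\emph{Deciders.} For deciders we apply the same construction with the symmetric argument on $\bot$. Alternatively, we use \Cref{lem:semidec}-style reasoning: if $L_1$ and $L_2$ are decided with stabilizing inputs, then so are their complements (by flipping opinions), and De Morgan reduces both closures to the semi-decider case.
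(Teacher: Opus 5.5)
Your intersection half is sound: the lifted transitions preserve the input component and immediate observation, and the projected runs are fair. Your bottom-SCC argument for fairness works, though a pigeonhole argument on the finitely many partner components already suffices. The conjunction is a $\top$-consensus throughout the bottom SCC exactly when both projections are.

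The gap is the union. A per-agent disjunction $O_1 \vee O_2$ is not a correct output for semi-deciders. The step ``freezing the second component, we can reach a configuration of $B$ in which that agent also has opinion $\bot$ in its second component'' does not follow. It is true that $B = \pi_1(B) \times \pi_2(B)$. But knowing that \emph{some} agent is $\bot$ somewhere in $\pi_1(B)$ and \emph{some} agent is $\bot$ somewhere in $\pi_2(B)$ gives no single agent that is $\bot$ in both components. A semi-decider may stay stuck in a non-consensus, and two protocols may leave different agents dissenting.

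Concretely, take $\Sigma = \{a,b\}$. Let $\cP_1$ have states $\{a,b\}$, no transitions, $O_1(a)=\top$ and $O_1(b)=\bot$; it semi-decides $a^+$ with stabilizing inputs and is immediate-observation. Let $\cP_2$ be identical with the opinions swapped, so it semi-decides $b^+$. In your product every state has opinion $\top$, so $ab \notin a^+ \cup b^+$ is accepted. For deciders the disjunction does work, since outside $L_1 \cup L_2$ both components end in $\bot$-consensuses. So only the semi-decider case breaks, but that is the main claim.

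The missing idea is to agree globally on \emph{which} sub-protocol's verdict is reported, instead of combining verdicts agent by agent. This is what the paper does. Each agent also carries a selector $i \in \{1,2\}$, and its opinion is $O_i$ of the selected component only. An immediate-observation $\mathsf{true}$-rule lets an agent switch its selector to $3-i$ in two cases: when it meets an agent with a different selector, or when the selected component shows opinion $\bot$ at itself or at its partner.

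Since disagreeing selectors can always be realigned, any $\top$-stable configuration can reach one where all agents report the same component $i$. $\top$-stability then forces that component to be $\top$-stable in $\cP_i$, hence the input lies in $L_i$. In your counterexample, the configuration where the $a$-agent reports component $1$ and the $b$-agent reports component $2$ is a $\top$-consensus but not stable, because the $a$-agent may switch to component $2$ and output $\bot$.

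Conversely, if the input lies in $L_1 \cup L_2$ (and at least two agents are present), the flips let the population settle on an index $i$ whose component has become $\top$-stable, after which no flip is ever enabled. Your commutation and bottom-SCC arguments carry over to verify this, since the component dynamics stay independent of the selectors.
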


Given $f \colon \Sigma \to 2^\Gamma$ and $w \in \Sigma^n$, let $f(w) \defeq \{w_1' \cdots w_n' \mid w_i' \in
f(w_i) \text{ for each } i \in [1..n]\}$. For example, if $f(0) = \{a, b\}$ and
$f(1) = \{b, c\}$, then $f(01) = \{ab, ac, bb, bc\}$. We extend this notion to
languages: $f(L) \defeq \bigcup_{w \in L} f(w)$.

\begin{restatable}{lemma}{lemRenaming}\labelandarrows{lem:renaming}
  Let $f \colon \Sigma \to 2^\Gamma$. If
  $L \subseteq \Sigma^+$ is $\PP[\cN]$-semi-decidable with stabilizing
  inputs, then $f(L)$ is $\PP[\cN]$-semi-decidable with stabilizing
  inputs. This further holds for $\IOPP[\cN]$.
\end{restatable}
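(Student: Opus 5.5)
We build a protocol for \(f(L)\) in which each agent guesses a preimage letter and runs the protocol for \(L\) on that guess. Let \(\cP = (\Sigma \times R, \Sigma, O, \Delta)\) semi-decide \(L\) with stabilizing inputs. The new protocol \(\cP'\) is input-saving over \(\Gamma\) with state set \(\Gamma \times (\Sigma \times R)\). Each agent stores its true input \(\gamma\) together with a full state \((\sigma, r)\) of \(\cP\), where \(\sigma\) is a guessed preimage and should always satisfy \(\gamma \in f(\sigma)\). Opinions are inherited from \(\cP\). Initial letter \(\gamma\) is identified with \((\gamma, (\sigma_\gamma, r_{\sigma_\gamma}))\) for some fixed \(\sigma_\gamma\) with \(\gamma \in f(\sigma_\gamma)\). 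Letters with empty preimage need separate handling: such an agent adopts a special sink state \(q_\bot\) that converts everyone to \(\bot\) and is re-checked against the input. Transitions are of two kinds.
\begin{enumerate}
\item Every transition of \(\Delta\) is lifted, keeping the \(\Gamma\)-components unchanged.
\item A \emph{re-guess} transition lets an agent replace its \(\sigma\) by any \(\sigma'\) with \(\gamma \in f(\sigma')\), leaving \(r\) unchanged. This is simulated as an immediate-observation step \(p, q \xrightarrow{\bfsf{true}} p', q\), so \(\cP'\) stays in \(\IOPP[\cN]\) whenever \(\cP\) is.
\end{enumerate}
There is one subtlety: an input-change move \(\dynto\) in \(\cP'\) may leave \(\gamma \notin f(\sigma)\). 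A lifted transition is therefore only enabled for agents whose guess is consistent; an inconsistent agent can only re-guess. This is still IO.

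We then verify correctness. Take \(u \dynto^* v\) in \(\cP'\) and a fair run \(\rho\) from \(v\). Since the \(\Sigma\)-component changes only by re-guessing, which is exactly a \(\dynto\)-input change of \(\cP\), the projection of any finite run of \(\cP'\) onto \(\Sigma \times R\) is a \(\dynto^*\)-path of \(\cP\).

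For the direction \(\inpt{v} \in f(L)\) implies acceptance, fix \(w \in L\) with \(\inpt{v} \in f(w)\). From any configuration reachable from \(v\), the agents can re-guess to spell \(w\). This gives a \(\cP\)-configuration \(\dynto^*\)-reachable from an initial one, so by hypothesis every fair run of \(\cP\) from it reaches a \(\top\)-stable configuration, in particular some run does. We would like that configuration to be \(\top\)-stable in \(\cP'\) too. This fails, because re-guessing can later break stability.

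This is the main obstacle: re-guessing must not be perpetually possible, or \(\top\)-stability is destroyed. The fix I would try first is to allow re-guessing only by agents holding a \(\bot\) opinion, or only when triggered by meeting a \(\bot\)-opinion agent. Under this rule a \(\top\)-consensus configuration of \(\cP\) that is \(\top\)-stable stays \(\top\)-stable in \(\cP'\), since no \(\bot\) ever appears.

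Conversely, suppose \(\inpt{v} \notin f(L)\). Every consistent guess spells some \(w \notin L\). In a fair run, the guesses are eventually frozen only at \(\top\)-consensus configurations. If \(\cP'\) reached a \(\top\)-stable configuration, its \(\cP\)-projection would be \(\top\)-stable in \(\cP\); here we use that \(\cP\)-moves are a subset of \(\cP'\)-moves. That projection is \(\dynto^*\)-reachable from an input, with input a word \(w \notin L\). Fairness in \(\cP\) then gives a contradiction with condition (b).

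For the positive direction with restricted re-guessing, fairness forces \(\cP'\) to keep visiting configurations with a \(\bot\) agent while the guess is wrong. From those, the agents can realign: a \(\bot\)-agent re-guesses, and the \(\cP\) component, by stabilizing inputs, self-corrects. The delicate point is checking that a correct guess can be reached using only \(\bot\)-opinion re-guesses. If it cannot, I would add the belief-propagation trick of \Cref{lem:semidec}: a \(\bot\) agent marks other agents as allowed to re-guess.
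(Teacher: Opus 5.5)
Your construction is essentially the paper's. Each agent keeps its true letter and a state of \(\cP\) over a guessed preimage, lifted \(\cP\)-moves are simulated, and re-guesses are immediate-observation steps. The paper allows a re-guess when the re-guesser \emph{or} the observed agent has opinion \(\bot\), which is the disjunction of your two options.

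There is, however, a genuine gap in your \(\inpt{v}\notin f(L)\) direction, exactly at the subtlety you noticed. Once re-guessing is gated by \(\bot\) and opinions are inherited from \(\cP\), an agent made inconsistent by a \(\dynto\)-step of \(\cP'\) is never forced to repair itself. This already happens for the identity renaming \(f(\sigma)=\{\sigma\}\) and \(L=a^+\):
\begin{itemize}
\item from \(aa\), reach a configuration whose \(\cP\)-part is \(\top\)-stable, then switch one agent's \(\Gamma\)-letter to \(b\);
\item all opinions are still \(\top\), and the inconsistent agent is frozen (no lifted move, and no re-guess since no \(\bot\) is present);
\item the other agents only perform \(\cP\)-moves from a \(\top\)-stable configuration, so no \(\bot\) ever appears.
\end{itemize}
The new configuration is therefore \(\top\)-stable, while its input \(ab\) or \(ba\) is not in \(f(L)\).

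In your write-up the failure surfaces twice. First, ``\(\cP\)-moves are a subset of \(\cP'\)-moves'' is false once lifted moves are disabled for inconsistent agents. Second, ``every consistent guess spells some \(w\notin L\)'' says nothing about a configuration containing inconsistent guesses. If you meant inconsistent agents to be exempt from the gating, the protocol survives, but your argument still lacks the step of re-guessing them inside the \(\top\)-stable configuration. The paper's written proof has the same hole: its \(O'\) ignores consistency, and it passes from \(\inpt{\pi(w_j)}\in L\) to \(\inpt{w_j}\in f(L)\) without justification.

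The repair is small. Let \(O'((\gamma,(\sigma,r)))\) be \(O((\sigma,r))\) if \(\gamma\in f(\sigma)\), and \(\bot\) otherwise. With this, there is no need to disable lifted moves, and letters with empty preimage need no sink. Every \(\top\)-consensus of \(\cP'\) is now fully consistent, so every \(\cP\)-run from its \(\cP\)-part lifts to \(\cP'\). Hence that \(\cP\)-part is \(\top\)-stable in \(\cP\) and \(\dynto^*\)-reachable from an initial configuration of \(\cP\). This is where your choice of keeping \(r\) unchanged on re-guess is needed. Its \(\cP\)-input therefore lies in \(L\), and by consistency \(\inpt{v}\in f(L)\).

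You should also commit to the observation-triggered rule, which settles your ``delicate point'' directly:
\begin{itemize}
\item from any configuration containing a \(\bot\)-agent \(k\), each \(i\neq k\) re-guesses to \(w[i]\) while observing \(k\);
\item then \(k\) re-guesses using its own \(\bot\);
\item a finite \(\cP\)-path to a \(\top\)-stable configuration now exists, that configuration is \(\top\)-stable in \(\cP'\), and fairness finishes.
\end{itemize}
The variant where only \(\bot\)-holders may re-guess can genuinely fail, since a wrongly guessed agent whose \(\cP\)-opinion is constantly \(\top\) never moves. The belief-propagation add-on is unnecessary.
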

\begin{proof}[Proof sketch.]
  The construction builds a protocol that semi-decides $f(L)$ by
  simulating, in its second component, the input-saving semi-decider
  for $L$ on a \emph{guessed} word $v\in\Sigma^+$ compatible with the
  real input $u\in\Gamma^+$ (\ie $u\in f(v)$).  Agents may revise
  this guess: whenever a $\bot$ opinion is observed in the simulated
  component, an agent is allowed to change its guessed letter to any
  $\sigma$ consistent with its input letter (preserving $u\in
  f(v)$ letterwise).  If $u\in f(L)$, fairness ensures that the
  population can eventually rewrite the guessed word into some $v'\in
  L$, after which the simulation of the semi-decider for $L$ reaches a
  $\top$-stable configuration and changes become disabled; if $u\notin
  f(L)$, reaching a $\top$-stable configuration would force the
  simulated input to lie in $L$, which is not possible.
\end{proof}

\section{Expressiveness of \(\IOPP[<]\)}\label{sec:iopp}

In this section, we provide a precise characterization of the languages decided
by \(\IOPP[<]\).  For any language \(L \subseteq \Sigma^*\), let us write
${\equiv_L}$ for the syntactic congruence of $L$, \ie $w \equiv_L w'$ iff
\(uwv \in L \Leftrightarrow uw'v \in L\) for all $u, v \in \Sigma^*$.  A language is in the class \DA if it
is regular, and satisfies, writing $\alpha(w) \subseteq \Sigma$ for the set of letters appearing
in a word $w$:
\begin{align}
  (\forall w \in \Sigma^*)\left[w \equiv_L w^2 \rightarrow (\forall h \in \alpha(w)^*)[w \equiv_L w\cdot h\cdot w]\right],\label{eq:charac:da}
\end{align}

It is a fascinating result of Pin and Weil~\cite{PW95} that
\(\DA = \Delta_2[<]\).  We will leverage both characterizations to show the theorem
below. The left-to-right inclusion will rely on~\eqref{eq:charac:da}, and the
converse on \(\DA = \Delta_2[<]\), semi-deciders, and protocols with stabilizing
inputs.

\begin{theorem}\label{thm:iopp:da}
  \(\IOPP[<] = \DA\).
\end{theorem}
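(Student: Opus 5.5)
We prove the two inclusions separately, following the hint after~\eqref{eq:charac:da}: the inclusion \(\IOPP[<] \subseteq \DA\) through a pumping argument, and the inclusion \(\DA \subseteq \IOPP[<]\) through \(\DA = \Delta_2[<]\).

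\emph{Inclusion \(\IOPP[<] \subseteq \DA\).} Fix a decider \(\cP \in \IOPP[<]\) with state set \(Q\). The central observation concerns what an agent can observe. In an immediate-observation protocol over \(<\), an agent at position \(i\) only ever observes the \emph{set} of states currently present to its left and the set present to its right. Adding extra copies of an agent whose behaviour is already ``covered'' by an existing agent therefore does not change which configurations remain reachable. The first step is a copycat lemma. If \(u x v \to^* u' x' v'\), where \(x\) is a block of agents, then inserting a second copy of \(x\) next to it lets both copies mimic the same run; this works because observations through \(<\) depend only on which states exist on each side. This yields monotonicity properties of reachability under duplication of factors.

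From the copycat lemma we first derive regularity. We associate to each word \(w\) a finite abstraction, such as the set of pairs (set of states reachable at the left boundary, set at the right boundary) over all runs, or the sets of ``state-sets reachable in \(w\) in context''. We then show that equal abstractions imply \(w \equiv_L w'\). Next we derive~\eqref{eq:charac:da}. Suppose \(w \equiv_L w^2\) and \(h \in \alpha(w)^*\). In \(u\,w h w\,v\), every letter of \(h\) also occurs in both copies of \(w\), one on each side. An agent of \(h\) can thus be simulated by copies inside \(w^k\) for a large \(k\), and conversely. Combining this with \(w \equiv_L w^k\), we obtain \(uwhwv \in L \iff uw^kv \in L \iff uwv \in L\). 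This direction requires care with stability, since \(\cP\) is a decider: we must transfer reachability of \(\top\)-stable and \(\bot\)-stable configurations, not just reachability itself.

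\emph{Inclusion \(\DA \subseteq \IOPP[<]\).} By Lemma~\ref{lem:semidec} it suffices to \(\IOPP[<]\)-semi-decide both \(L\) and its complement. Since \(\DA\) is closed under complement, it suffices to semi-decide every \(L \in \Sigma_2[<]\). Such an \(L\) is a finite union of languages defined by \(\exists x_1\cdots x_k\,\forall \bar y\,\psi\). Each of these is a finite union of sets of the form \(A_0^* a_1 A_1^* a_2 \cdots a_k A_k^*\) with \(A_i \subseteq \Sigma\), the standard form of \(\Sigma_2\) (polynomials of ``alphabet-restricted'' languages). By Lemma~\ref{lem:inter:semi}, unions are handled. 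For a single such term we use Proposition~\ref{prop:ord:semi} on the alphabet \(\{0,\ldots,2k\}\): position \(0\) is labelled by the block \(A_0^*\), position \(1\) by the marker \(a_1\), and so on. We then apply Lemma~\ref{lem:renaming} with \(f(2i) = A_i\) and \(f(2i-1) = \{a_i\}\). The markers must occur exactly once, so we first intersect via Lemma~\ref{lem:inter:stab} with a language requiring exactly one agent to carry each odd label. We semi-decide that language with stabilizing inputs in \(\IOPP[<]\): an agent holding an odd label that sees another agent with the same label resets, and absence of a label is detected by a guessed-then-verified flag. Alternatively, we build \(0^*1\,2^*3\cdots\) directly, taking the rule ``two equal odd labels \(\Rightarrow \bot\)'' and handling a missing marker through the resets.

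The main obstacle is the first inclusion, specifically making the copycat/pumping argument respect stability and fairness rather than mere reachability.
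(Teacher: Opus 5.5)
Your inclusion $\DA \subseteq \IOPP[<]$ follows the paper. It goes via $\DA = \Delta_2[<]$, \Cref{lem:semidec}, the $\Sigma_2[<]$ normal form, \Cref{prop:ord:semi}, \Cref{lem:inter:stab}, \Cref{lem:renaming}, and an ``exactly one occurrence'' protocol, which the paper supplies as \Cref{prop:io:eqone}.

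The gap is in $\IOPP[<] \subseteq \DA$, at exactly the step you call the main obstacle. You give no mechanism that makes pumping respect stability, and the argument fails without one. Your copycat idea, extended to $u\,w\,h\,w\,v$ with $\alpha(h)\subseteq\alpha(w)$, is the paper's \Cref{lem:forward-extension}. It correctly transfers \emph{reachability}, but the configuration reached need not be $b$-stable. For example, suppose the $b$-stable configurations are those with at most one agent in some state $p$, and that agent ends inside the image $w'$ of the copied block. The lifted run then ends with $w'$ occurring twice, hence with two $p$'s.

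The missing idea has two parts. First, $\PP[<]$ is well-structured for the subword order, so the set of $b$-stable configurations is subword-closed. It is therefore a finite union of products $A_1^* B_1^{\varepsilon} \cdots A_k^* B_k^{\varepsilon}$ with $k \leq K$ (\Cref{lem:b-consensuses}). Second, you must pump $m = 2K+1$ consecutive blocks $w_1 \cdots w_m$ with equal alphabets, not a single $w$. By pigeonhole, some block $w_i$ has its image $w'_i$ in the final stable configuration inside a single $A_j^*$. Insert the extra material at that block, taking $w_{i+1}\cdots w_m\,z\,w_1\cdots w_{i-1}$ as the inserted word. The new states have alphabet contained in $\alpha(w'_i) \subseteq A_j$, so they are absorbed by $A_j^*$.

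This yields the one-way transfer ``if a $b$-stable configuration is reachable from $u\,w_1\cdots w_m\,v$, then one is reachable from $u\,w_1\cdots w_m\,z\,w_1\cdots w_m\,v$'' for both $b$. That suffices because $\cP$ is a decider, so your ``and conversely'' simulation is neither justified nor needed. Equation~\eqref{eq:charac:da} then follows by comparing $w^m$ with $w^m h w^m$ and using $w \equiv_L w^m$. You should not compare $u\,w\,h\,w\,v$ with $u\,w^k\,v$ directly.

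Regularity is a second gap. Your ``finite abstraction'' is never defined, and nothing in your sketch shows that equal abstractions imply $\equiv_L$-equivalence. A factor and its context interact in an interleaved way over the whole run, so boundary state sets are not obviously enough. The paper instead gets finite index from the same pumping lemma. By induction on the alphabet size, every sufficiently long word contains a pattern $(w_1\cdots w_m)\,z\,(w_1\cdots w_m)$ with $\alpha(w_1) = \cdots = \alpha(w_m) \supseteq \alpha(z)$. This pattern can be contracted to $w_1\cdots w_m$ without changing the $\equiv_L$-class, so every word is equivalent to one of bounded length (\Cref{lem:reg}).
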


\subsection{\(\IOPP[<] \subseteq \DA\)}\label{sec:ioppoinda}

We first show that the set of stable configurations in a \(\PP[<]\), and thus in
an \(\IOPP[<]\), admits a simple description.
Given a finite alphabet $\Sigma$ and two words $u,v \in \Sigma^*$, we say that $u$ is a \emph{subword} of $v$, written $u \preceq v$, if $u$ can be obtained from $v$ by removing letters. 
It is well known that ${\preceq}$ is a \emph{well-quasi-order} over $\Sigma^*$~\cite{Higman52}. A consequence is that every strictly decreasing sequence of subword-closed sets must be finite.

\begin{restatable}{lemma}{propBConsensuses}\labelandarrows{lem:b-consensuses}
  Any protocol in \(\PP[<]\) is a
  well-structured transition system w.r.t.~\(\preceq\), that is, for
  all $u, u', w \in Q^*$ with $u \preceq w$ and $u \to u'$, there
  exists $w'$ such that $u' \preceq w'$ and $w \to w'$. Further, the
  set of $b$-stable configurations is subword-closed and computable.
\end{restatable}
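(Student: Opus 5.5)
Three things need proving: the compatibility (monotonicity) property, that the $b$-stable configurations form a subword-closed set, and that this set is computable. I will take them in that order, since each relies on the previous one.

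\emph{Compatibility.} Suppose $u \preceq w$. Fix an order-preserving embedding $e\colon [1..|u|] \to [1..|w|]$ such that $w[e(i)] = u[i]$ for all $i$. Let $u \to u'$ fire the transition $u[i],u[j] \xrightarrow{P} u'[i],u'[j]$ with $(i,j)\in P$, where $P \in \{<,\bfsf{true}\}$. Because $e$ is strictly increasing, $i<j$ implies $e(i)<e(j)$. Hence $(e(i),e(j)) \in P$ in either case, and since $e(i) \neq e(j)$ the two positions are distinct agents. So the same transition can be fired on $w$ at positions $e(i),e(j)$, giving some $w'$. The same embedding $e$ then witnesses $u' \preceq w'$, because the letters at the embedded positions were updated identically in $u$ and $w$. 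This is the only place where the choice of $<$ matters: a predicate such as $+1$ is not preserved by embeddings, and the lemma would fail for it.

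\emph{Subword-closure.} Let $w$ be $b$-stable and $u \preceq w$. Consider a run $u \to^* v$. Iterating the compatibility step along this run gives $w \to^* v'$ with $v \preceq v'$. Since $w$ is $b$-stable, $v'$ is a $b$-consensus, so every state occurring in $v'$ has opinion $b$. The states of $v$ all occur in $v'$, so $v$ is also a $b$-consensus; here we use that configurations are nonempty, so that consensus is meaningful. Hence $u$ is $b$-stable.

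\emph{Computability.} This is the main obstacle. Let $S_b$ be the set of $b$-stable configurations. Since $S_b$ is subword-closed, its complement $U_b$ is upward-closed. By Higman's lemma, $U_b$ has a finite set of minimal elements $M$. Moreover, $U_b = \mathrm{Pre}^*(C_b)$, where $C_b$ is the upward-closed set of non-$b$-consensuses, whose minimal elements are the single letters $q$ with $O(q)\neq b$. I then plan to run the standard backward-coverability algorithm for well-structured transition systems. Starting from $B_0 = \mathrm{min}(C_b)$, set $B_{k+1} = \mathrm{min}(B_k \cup \mathrm{minpre}(B_k))$. For a word $x$, $\mathrm{minpre}(x)$ is the finite set of minimal words $y$ such that $y \to y'$ for some $y' \succeq x$. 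To compute it, pick a transition $p,q \xrightarrow{P} r,s$. Either insert new letters $p,q$ at positions respecting $P$, or reuse positions of $x$ holding $r$ and/or $s$. Then replace $r,s$ by $p,q$ at these positions. Only finitely many placements arise, each of length at most $|x|+2$, and together they cover all minimal predecessors. The upward closures of the $B_k$ form an increasing chain of upward-closed sets, which stabilizes by the well-quasi-order property (the fact recalled before the lemma). The limit is $\mathrm{min}(U_b)$. Then $w$ is $b$-stable iff no element of the computed basis is a subword of $w$; this yields a finite representation of $S_b$ by excluded subwords, and membership is decidable. The delicate check is that the predecessor-basis computation is exact, namely that every predecessor of an element of $\uparrow x$ lies above one of the generated words. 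I would verify this using the fact that dropping positions not involved in the fired transition preserves the order relations among the remaining positions.
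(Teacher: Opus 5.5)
Your proposal is correct and follows essentially the same route as the paper. You lift a single step along an order-preserving embedding, then obtain the non-$b$-stable configurations as a backward fixpoint from the non-$b$-consensuses, which stabilizes by Higman's lemma. The paper phrases this as the decreasing chain of subword-closed sets $D_i = Q^*\setminus U_i$; you phrase it as the increasing chain of finitely based upward-closed sets. Your explicit minimal-predecessor computation and your treatment of $\bfsf{true}$-transitions fill in details the paper leaves implicit. Your nonemptiness caveat is unnecessary: the paper's $U_0 = Q^*\setminus O^{-1}(b)^*$ already counts the empty word as a $b$-consensus.
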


As expected from the definition of \DA, the core of the argument showing the
inclusion of \(\IOPP[<]\) in \DA will rely on a pumping argument.  As a first
observation, we show that, in an \(\IOPP[<]\), an infix of the form $wzw$ with
$\alpha(z) \subseteq \alpha(w)$ can, in some sense, mimic the behavior of the protocol on the
simpler infix $w$.

\begin{restatable}{lemma}{ForwardExtension}\labelandarrows{lem:forward-extension}
  In an \(\IOPP[<]\), if $u w v \to^* u' w' v'$ with $|u| = |u'|$ and
  $|v| = |v'|$, then for all $z$ with $\alpha(z)\subseteq \alpha(w)$, there exists
  $z'$ such that $u w z w v \to^* u' w' z' w' v'$ and $\alpha(z') \subseteq \alpha(w')$.
\end{restatable}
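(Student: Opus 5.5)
The plan is a direct simulation argument. Every agent in the pumped configuration $uwzwv$ is treated as a \emph{copy} of some agent of the original configuration $uwv$. Since $\alpha(z) \subseteq \alpha(w)$, we can fix a map $\mu$ sending each position of $z$ to a position of $w$ carrying the same letter. Each agent of $u$ and $v$ is its own (unique) copy. Each agent at position $i$ of $w$ has as copies the corresponding agent in the first $w$, the corresponding agent in the second $w$, and all positions $p$ of $z$ with $\mu(p)=i$.

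The invariant maintained throughout the simulation is that all copies of an original agent carry the same state as that agent does at the current point of the run $uwv \to^* u'w'v'$. Initially this holds, since $\mu$ preserves letters.

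The simulation handles the given run one step at a time. Consider a step in which observer $i$ observes agent $j \neq i$ via $q_i, q_j \xrightarrow{P} q_i', q_j$ with $P \in \{\bfsf{true},<\}$. In the big configuration, we fire the same transition once for each copy of $i$, in some order. Each time we pick an observed copy of $j$ that stands in the correct relative position. Because the protocol is immediate-observation, only the copies of $i$ change state, and $j \neq i$, so the copies of $j$ keep state $q_j$ during this batch. Once the batch is done, the invariant is restored. At the end, the states of the $z$-agents form some $z'$ whose letters are states of $w'$-agents, so $\alpha(z')\subseteq\alpha(w')$. Also $u,v$ become $u',v'$ and both copies of $w$ become $w'$, as required.

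The main point to verify, and the only real obstacle, is that a suitably placed copy of $j$ always exists. For $\bfsf{true}$ any copy works. For $<$ with $i$ before $j$ in $uwv$ (the case $j<i$ is symmetric), we split by case:
\begin{itemize}
\item If $j \in v$, it lies after every copy of $i$.
\item If $i \in u$, observe any copy of $j$, since every copy of $j$ lies after $u$.
\item If $i,j \in w$ with $i<j$, the first-$w$ copy of $i$ observes the first-$w$ copy of $j$. The second-$w$ copy of $i$ observes the second-$w$ copy of $j$. Every $z$-copy of $i$ observes the second-$w$ copy of $j$, which lies after all of $z$.
\end{itemize}
The symmetric cases use the first copy of $w$ for $z$-clones. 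This is exactly where both copies of $w$ surrounding $z$ are needed, and where immediate observation is needed: a two-way transition would also change $j$, so it could not be replicated across several copies of $i$ while keeping the copies of $j$ synchronized.
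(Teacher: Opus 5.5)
Your proof is correct and is essentially the paper's argument: immediate observation lets a single observation be replayed for every extra copy of the observer, and each $z$-agent finds a suitably placed observed agent in whichever flanking copy of $w$ lies on the correct side. The only difference is bookkeeping. The paper inducts on single steps with the weaker invariant $\alpha(z)\subseteq\alpha(w)$, turning every occurrence of the observer's state in $z$ into the new state. You instead fix a copy map $\mu$ once and maintain a pointwise invariant, which even gives $z'[p]=w'[\mu(p)]$. One small slip: in your case $i\in u$, the remark that every copy of $j$ lies after $u$ is false when $j\in u$, but then $j$ is its own unique copy and still lies after $i$.
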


We use this first result to show a pumping lemma on $\IOPP[<]$.  It results from
the fact that the sets of $\top$-stable and $\bot$-stable configurations are
subword-closed and \Cref{lem:forward-extension}.

\begin{restatable}{lemma}{lemBigPump}\labelandarrows{bigpump}
  Let \(L \in \IOPP[<]\). There exists a computable $m \geq
  1$ such that, for all $w_1, \ldots, w_m \in \Sigma^+$ and $z \in
  \Sigma^*$ with $\alpha(w_1) = \ldots = \alpha(w_m) \supseteq
  \alpha(z)$, we have $w_1 \cdots w_m \equiv_L (w_1 \cdots w_m) z
  (w_1 \cdots w_m)$.
\end{restatable}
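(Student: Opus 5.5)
Write $W \defeq w_1 \cdots w_m$. Since \(L\) is decided by some decider \(\cP\) in \(\IOPP[<]\), we must show that for all contexts \(u, v \in \Sigma^*\), \(uWv \in L\) iff \(uWzWv \in L\). I will prove a single transfer statement. If \(uWv\) reaches a \(b\)-stable configuration, then so does \(uWzWv\).

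Both directions follow from this statement because \(\cP\) is a decider. If \(uWv \in L\), take \(b = \top\). If \(uWv \notin L\), then \(uWv\) reaches a \(\bot\)-stable configuration, so \(uWzWv\) reaches one too. Since \(\cP\) is a decider, all fair runs from \(uWzWv\) then agree on \(\bot\), hence \(uWzWv \notin L\).

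For the transfer, fix a run \(uWv \to^* u'W'v'\) ending in a \(b\)-stable configuration, with \(|u'| = |u|\) and \(|v'| = |v|\). \Cref{lem:forward-extension}, applied with the infix \(W\) and \(\alpha(z) \subseteq \alpha(W)\), gives some \(z'\) with \(\alpha(z') \subseteq \alpha(W')\) and a run \(uWzWv \to^* u'W'z'W'v'\). It then remains to show that \(u'W'z'W'v'\) is itself \(b\)-stable. By \Cref{lem:b-consensuses}, the set of \(b\)-stable configurations is subword-closed. Hence its complement is upward-closed and, by Higman's lemma, it is the upward closure of a finite, computable basis \(B\). Let \(K\) be the maximal length of an element of \(B\); the constant \(m\) will be chosen as a function of \(K\) and \(|Q|\), which keeps it computable. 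Suppose some \(x \in B\) embeds into \(u'W'z'W'v'\). It is enough to re-embed \(x\) into \(u'W'v'\), which contradicts the stability of \(u'W'v'\). At most \(K\) letters of \(x\) land in the middle region \(W'z'W'\), and all of them lie in \(\alpha(W')\). So the re-embedding works as soon as \(W'\) contains \(K\) consecutive factors, each having the full alphabet \(\alpha(W')\): we place the successive letters of \(x\) from the middle region into successive factors.

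The main obstacle is to guarantee this structure of \(W'\). The blocks \(w_i\) all start with the same alphabet, but their images \(W'_i\) in the stable configuration may have different alphabets. My first attempt is to apply \Cref{lem:forward-extension} with smaller infixes. Split \(W\) into \(K+1\) groups of consecutive blocks. Using \(\alpha(w_i) = \alpha(w_j)\), each group \(g\) has the form \(w_i \cdots w_j\) with \(\alpha(w_{i+1}\cdots w_{j-1}) \subseteq \alpha(w_i)\), so the lemma can be applied inside it. The goal is to extract a run that reaches a stable configuration in which every group image has the same alphabet, and that alphabet contains the alphabet of the inserted material. Since alphabets are subsets of \(Q\), a pigeonhole argument over the \(2^{|Q|}\) possible alphabets, iterated over the groups, should yield this. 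This suggests taking \(m\) of order \((K+1)\cdot 2^{|Q|}\) or a tower of such factors.

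If this direct route fails, I would fall back on pumping the run itself. I would use \Cref{lem:forward-extension} to show that \(uWv\) also reaches stability via configurations in which \(W\) is replaced by \(W(zW)^{K}\), and compare stability using only the boundedness of \(B\). The sufficiency of the re-embedding criterion is routine; the real work is in making the alphabet of the pumped part controlled uniformly across \(K\) factors.
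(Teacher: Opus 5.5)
Your reduction to a one-sided transfer statement, and the use of the decider property for the $\bot$ direction, are fine and match the paper. \textbf{The gap is the transfer itself:} you never show that the pumped target is $b$-stable, and the route you set up cannot show it.

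Applying \Cref{lem:forward-extension} with the whole $W$ as infix commits you to the specific target $u'W'z'W'v'$. It only gives $\alpha(z')\subseteq\alpha(W')$, which is too weak. Suppose, for instance, the $b$-stable set is $p^*q^*$ and the run ends with $W'\in p^+q^+$. Then $u'W'z'W'v'$ contains $qp$ as a subword, so it is not $b$-stable, whatever $z'$ and $m$ are.

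Your sufficient criterion ($K$ consecutive full-alphabet factors in $W'$) fails in exactly this situation, and the pigeonhole over the $2^{|Q|}$ image alphabets does not rescue it. The image alphabets are fixed by the run, and groups with repeated alphabets need not be consecutive. Nor need they contain the alphabet of the inserted material. You also give no mechanism for ``extracting'' a different run, and the fallback of pumping $W(zW)^K$ is not developed.

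The missing idea is to apply \Cref{lem:forward-extension} to a \emph{single, well-chosen block} rather than to $W$.
\begin{enumerate}
\item Write each subword-closed $b$-stable set as a finite union of expressions $A_1^*B_1^{\varepsilon}\cdots A_k^*B_k^{\varepsilon}$, where $A_i,B_i\subseteq Q$ and $B^{\varepsilon}=B\cup\{\varepsilon\}$. Let $K$ be the largest $k$ and take $m=2K+1$.
\item Match the stable target $u'w'_1\cdots w'_m v'$ against such an expression. Since $m>2k$, some block image $w'_i$ lies entirely inside a single $A_j^*$ segment.
\item Decompose the pumped input as
\[
uWzWv = (uw_1\cdots w_{i-1})\,w_i\,(w_{i+1}\cdots w_m\,z\,w_1\cdots w_{i-1})\,w_i\,(w_{i+1}\cdots w_m v).
\]
The middle word has alphabet contained in $\alpha(w_i)$, precisely because all blocks share the same alphabet and $\alpha(z)\subseteq\alpha(w_i)$.
\item The lemma with infix $w_i$ then yields a run to $u'w'_1\cdots w'_i\,z''\,w'_i\cdots w'_m v'$ with $\alpha(z'')\subseteq\alpha(w'_i)\subseteq A_j$. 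This configuration lies in the same expression, hence is $b$-stable.
\end{enumerate}
The equal-alphabet hypothesis is thus used on the input side: it lets everything between the two copies of $w_i$ play the role of the pumped word. No uniformity of the image blocks' alphabets is ever needed.
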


\begin{proof}[Proof sketch.]
\Cref{lem:b-consensuses} indicates that for both $b \in \set{\top,\bot}$, the set of
$b$-stable configurations is subword-closed.  As a consequence, writing
\(B^\eps\) for \(B \cup \{\eps\}\), it is a finite union of languages of the form
$A_1^* B_1^\eps \cdots A_k^* B_k^\eps$ with
$A_i, B_i$ subsets of $Q$, the set of states of our protocol~\cite[Sect.~6.1.1]{Halfon18}.  Let $K_b$ be
the maximal factorization size $k$ over all those languages,
$K = \max(K_\top, K_\bot)$ and $m = 2K+1$.
\Cref{lem:b-consensuses} also states that the set of
$b$-stable configurations is computable for both $b$, in particular we can compute $K_\top, K_\bot$ and $m$.

Intuitively, if we have a run from $w_1 \cdots w_m$ to a $\top$-stable configuration $w' = w'_1 \cdots w'_m$, then we can match $w'$ with one of the expressions above, and some $w'_i$ is entirely contained in some $A_j^*$.

We then use \Cref{lem:forward-extension} to expand the run $w \to w'$ to a run from $w' = w_1 \cdots w_m z w_{1} \cdots w_m$ to some $b$-stable configuration $\widetilde{w'}$, by expanding the $w_i$ section into $w_i \cdots w_m z w_1 \cdots w_i$.

We can thus show that if $w$ is accepted, so is $w'$. Similarly, we show that if $w$ is not accepted, then it can reach a $\bot$-stable configuration, and so can $w'$.
We conclude that one is in the language if and only if the other is. 
To prove the congruence relation we only need to add words $u$ and $v$ around $w$, which does not significantly alter the proof.
\end{proof}

We can now show that \(\IOPP[<]\)-decidable languages satisfy
the definition of \DA.  In \Cref{lem:da:eqs}, we show that
Equation~\ref{eq:charac:da} is satisfied, and in \Cref{lem:reg}, that the
languages are regular, concluding the proof.

\begin{lemma}\label{lem:da:eqs}
	Let \(L \in \IOPP[<]\) and let \(w \in \Sigma^*\) be such that \(w
	\equiv_L w^2\). It is the case that \(w \equiv_L w \cdot h \cdot w\)
	for all \(h \in \alpha(w)^*\).
\end{lemma}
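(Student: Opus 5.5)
The plan is to derive the statement directly from \Cref{bigpump} by exploiting idempotency of \(w\) modulo \(\equiv_L\). Let \(m\) be the constant given by \Cref{bigpump} for \(L\). If \(w = \eps\), then \(\alpha(w)^* = \{\eps\}\) and there is nothing to prove, so assume \(w \in \Sigma^+\). Since \(w \equiv_L w^2\) and \(\equiv_L\) is a congruence, an immediate induction gives \(w \equiv_L w^k\) for every \(k \geq 1\). In particular \(w \equiv_L w^m\).

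Next, apply \Cref{bigpump} with \(w_1 = \cdots = w_m = w\) and \(z = h\). The hypotheses hold: every \(w_i\) is nonempty and has the same alphabet \(\alpha(w)\), and \(\alpha(h) \subseteq \alpha(w)\) because \(h \in \alpha(w)^*\). This yields
\[
  w^m \equiv_L w^m \cdot h \cdot w^m .
\]
Since \(\equiv_L\) is a congruence and \(w \equiv_L w^m\), we can replace each \(w^m\) on the right by \(w\), giving \(w^m h w^m \equiv_L w h w\). Chaining these equivalences gives \(w \equiv_L w^m \equiv_L w^m h w^m \equiv_L w h w\), which is exactly the claim.

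There is no real obstacle here, since all the combinatorial work is in \Cref{bigpump}. The only points to check are the degenerate case \(w=\eps\), which is handled separately above, and that \Cref{bigpump} allows repeated identical blocks and an empty \(z\). Both are permitted by its statement: it only requires \(w_i \in \Sigma^+\) and \(z \in \Sigma^*\).
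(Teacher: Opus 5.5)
Your proposal is correct and follows essentially the same route as the paper: apply \Cref{bigpump} with every block equal to \(w\) and \(z = h\), then collapse \(w^m\) back to \(w\) using idempotency. Your separate treatment of the degenerate case \(w = \eps\), and your check that \Cref{bigpump} allows identical blocks and an empty \(z\), are details the paper leaves implicit.
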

\begin{proof}
	\Cref{bigpump} implies that there is a (computable) number $m$ such that \(w^mhw^m \equiv_L
	w^m\), and since \(w \equiv_L w^2\), we obtain \(w \equiv_L w\cdot h\cdot w\) as claimed.
\end{proof}

\begin{restatable}{lemma}{lemReg}\labelandarrows{lem:reg}
  Every \(\IOPP[<]\)-decidable language is regular. Moreover, given an \(\IOPP[<]\)-protocol $\protocol$, we can construct a finite automaton with the same language.
\end{restatable}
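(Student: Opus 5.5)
The plan is to show that $\equiv_L$ has finite index by exhibiting, for every word, a $\equiv_L$-equivalent word of computably bounded length. Regularity then follows. To construct the automaton, I would enumerate all words up to that bound, decide their membership, and determine the transitions between classes.

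\emph{Step 1 (decidability of membership for a fixed word).} For a fixed input $u$, the reachability graph from $u$ is finite. It has at most $|Q|^{|u|}$ configurations. By \Cref{lem:b-consensuses}, the sets of $\top$- and $\bot$-stable configurations are computable. Since $\cP$ is a decider, $u\in L$ iff some configuration reachable from $u$ is $\top$-stable. So membership of any given word is decidable.

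\emph{Step 2 (shortening lemma).} Let $m$ be the computable constant of \Cref{bigpump}. I would prove, by induction on $|\alpha(w)|$, that every $w$ is $\equiv_L$-equivalent to a word of length at most $f(|\alpha(w)|)$ with the same alphabet. Here $f$ is a computable function defined by $f(0)=0$ and $f(k)=(4m+1)(f(k-1)+1)$, or similar.

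Given $w$ with $A=\alpha(w)$, I would factor $w$ greedily into $A$-complete segments $w=s_1 s_2\cdots s_r\, t$. Each $s_i$ is a minimal factor containing all letters of $A$, so $s_i=s_i'a_i$ with $|\alpha(s_i')|<|A|$, and $t$ misses some letter of $A$. Since $\equiv_L$ is a congruence, the induction hypothesis lets me replace each $s_i'$ and $t$ by short equivalents without changing the segment structure. It then remains to bound $r$.

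If $r>4m$, write $w=x\,y\,x''\,t$, where $x=s_1\cdots s_m$, $x''=s_{r-m+1}\cdots s_r$ and $y$ is the middle part. All segments have alphabet $A\supseteq\alpha(y)$, so \Cref{bigpump} applies to $x$ and to $x''$ with any middle $z$ over $A^*$. It gives $x\equiv_L x z x$ and $x''\equiv_L x'' z x''$, and in particular $x\equiv_L x^2$ and $x''\equiv_L (x'')^2$. I would also apply \Cref{bigpump} to the $2m$-block $xx''$ (and to $x''x$). The goal is to derive $x\,y\,x''\equiv_L x\,x''$: using $x''\,x\,x''\equiv_L x''$ and $x\,x''\,x\equiv_L x$, I would rewrite $x y x''\equiv_L x y x''x x''$, then absorb $y x'' $ into the block $x\,(y x'')\,x$ and collapse it. After this, every word is equivalent to one with at most $2m$ complete segments, each of bounded length, plus a bounded tail.

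\emph{Step 3 (automaton).} Let $N=f(|\Sigma|)$. By Step 2, every word is equivalent to a word of length $\le N$. Two short words $u,u'$ are $\equiv_L$-equivalent iff $p u q\in L \Leftrightarrow p u' q\in L$ for all $p,q$. By Step 2 again, it suffices to check contexts $p,q$ of length $\le N$, since the contexts can be shortened too. All of these membership queries are decidable by Step 1. This computes the finite syntactic monoid together with its multiplication table, and hence a DFA for $L$.

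The main obstacle is the algebraic manipulation in Step 2. \Cref{bigpump} only gives identities of the shape $x\equiv_L xzx$ with the \emph{same} block on both sides. Eliminating an arbitrary middle factor between two \emph{different} $m$-blocks $x$ and $x''$ needs a careful chaining of these identities. My first attempt would be to apply \Cref{bigpump} to the concatenated block $x x''$ and the permuted block $x'' x$, both of which have alphabet $A$. If that does not close the argument, I would fall back to choosing the decomposition so that a single $m$-block occurs both before and after the factor being removed, for instance by working with a sufficiently large number of segments and a Ramsey-type selection.
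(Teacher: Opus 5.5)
Your proposal is correct, and it takes a genuinely different route from the paper's proof.

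\textbf{Step 2 closes without the fallback.} The chain you wrote is already complete:
\[
x\,x'' \;\equiv_L\; x\,(y x'')\,x\;x'' \;=\; x\,y\,(x''\,x\,x'') \;\equiv_L\; x\,y\,x'' .
\]
This uses \Cref{bigpump} only twice: once on the block $x$ with middle $yx''$, and once on the block $x''$ with middle $x$. So you need neither the concatenated blocks $xx''$ and $x''x$ nor the Ramsey-type selection.

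\textbf{How the paper argues instead.} The paper proves a purely combinatorial claim (\Cref{alphabet-induction}). It states that every word of length at least $B(\ell,m)$ contains a pattern $w_1\cdots w_m\,z\,w_1\cdots w_m$ with literally the \emph{same} block on both sides. The proof is by induction on the alphabet, plus a pigeonhole argument over consecutive factors of length $B(\ell-1,m)$, each of which contains all letters. That pattern has exactly the shape to which \Cref{bigpump} applies verbatim, so the paper needs no algebraic chaining. The price is the tower-like bound $B(\ell,m)=m\,B(\ell-1,m)\,\ell^{m B(\ell-1,m)}$.

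\textbf{What your route buys.} Your greedy segmentation, combined with the two-step identity for \emph{distinct} blocks, dispenses with the pigeonhole entirely. It yields a singly exponential bound on the length of representatives, about $(4m+2)^{|\Sigma|}$.

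\textbf{Effectiveness.} The paper builds a DFA whose states are the words of length at most $B$, and reduces $wa$ via an $m$-reducible pattern once it becomes too long. You instead compute the syntactic monoid from finitely many membership queries with bounded contexts. Both rely only on the computability of $m$ and on deciding membership of a fixed word by exploring its finite reachability graph.
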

\begin{proof}[Proof sketch.]
  We show that beyond some length, every word must contain a pattern of the form
  $(w_1 \cdots w_m)z(w_1 \cdots w_m)$ with
  $\alpha(w_1) = \ldots = \alpha(w_m) \supseteq \alpha(z)$.  By \Cref{bigpump}, this means that every
  sufficiently long word is $\equiv_L$-equivalent to a shorter one, meaning that
  $\equiv_L$ has finitely many equivalence classes.
  Since we can compute $m$, we can compute this automaton.
\end{proof}

\subsection{\(\DA \subseteq \IOPP[<]\)}\label{ssec:da:iopp}

Recall that $L\in\Pi_2[<]$ iff $\Sigma^*\setminus L \in \Sigma_2[<]$. Since
$\DA$ is equal to $\Delta_2[<] = \Sigma_2[<] \cap \Pi_2[<]$, it is enough to prove that
\(\Sigma_2[<]\) languages are \(\IOPP[<]\)-semi-decidable, appealing to
\Cref{lem:semidec} to conclude.  Note that \(\Sigma_2[<]\) is the set of languages
expressible as finite unions of languages of the form
\(L = A_0^*a_1A_2^*\cdots a_{m-1}A_m^*\) with the \(A_i\)'s being subalphabets (see, \eg
\cite[Thm.~8.8]{Pin97}).  Since semi-deciders are closed under union by
\Cref{lem:inter:semi}, we need only show that \(L\) is
\(\IOPP[<]\)-semi-decidable to conclude.  We start with a technical proposition
that extends a result of~\cite{Ras24} to immediate-observation protocols, then
provide a semi-decider for \(L\) in \Cref{prop:da:semidec}.
Since all constructions used here are effective, our proof also implies that given a finite automaton recognizing a \(\DA\) language, we can effectively construct an \(\IOPP[<]\)-protocol recognizing the same language.

\begin{restatable}{proposition}{propIOEqone}\labelandarrows{prop:io:eqone}
  The language $\{w \in \Sigma^+ \mid |w|_a = 1\}$ is
  $\IOPP[\emptyset]$-semi-decidable with stabilizing inputs.
\end{restatable}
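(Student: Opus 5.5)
We will build the semi-decider directly, using no numerical predicate other than \(\bfsf{true}\) and only immediate-observation rules. The protocol is input-saving with \(Q = \Sigma \times R\), where \(R = \{A^\top, A^\bot, N^\top, N^\bot\}\). The letter \(A\) or \(N\) is a \emph{role}: it records whether the agent believes its input is \(a\) or not. The superscript is a tentative opinion. Input \(a\) is identified with \((a, A^\top)\), and each \(\sigma \neq a\) with \((\sigma, N^\bot)\).

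An agent is \emph{mismatched} if its role disagrees with its saved input. Such agents exist only because of input changes \(\dynto\). A mismatched agent with input \(a\) gets opinion \(\top\), and one with input \(\sigma \neq a\) gets opinion \(\bot\). This choice handles the corner case of a single agent. Otherwise the opinion is the superscript.

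The rules are as follows; in each, only the first agent (the observer) changes, and the second agent \(\_\) is arbitrary unless stated.
\begin{enumerate}
\item Role correction: \((a, N^{o}), \_ \to (a, A^\bot), \_\) and \((\sigma, A^{o}), \_ \to (\sigma, N^\bot), \_\) for \(\sigma \neq a\).
\item Duplicate detection: \((a, A^\top), (a, r) \to (a, A^\bot), (a, r)\).
\item Reset of an \(a\)-agent: \((a, A^\bot), \_ \to (a, A^\top), \_\).
\item Confirmation: \((\sigma, N^\bot), (a, A^\top) \to (\sigma, N^\top), (a, A^\top)\). The observed agent must carry input \(a\); this is visible because it is part of the state.
\item Propagation of \(\bot\): \((\sigma, N^\top), q \to (\sigma, N^\bot), q\) whenever \(q\) has opinion \(\bot\).
\end{enumerate}

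The correctness argument splits according to \(|\inpt{v}|_a\) for a configuration \(v\) with \(u \dynto^* v\).

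\emph{Case \(|\inpt{v}|_a = 1\).} We first show that from every configuration reachable from \(v\) there is a path to a configuration where the \(a\)-agent is \((a, A^\top)\) and all other agents are \((\sigma, N^\top)\). Such a path fixes all mismatches by rule~1, resets the \(a\)-agent by rule~3, and then confirms the \(N^\bot\) agents one by one by rule~4. No rule~5 step is needed on this path. The target configuration is \(\top\)-stable. Rule~2 cannot fire because there is only one \(a\)-input. Rules~1 and~3 are disabled. Rule~5 needs a \(\bot\)-opinion agent, and no rule creates one. Since a stable configuration is reachable from everywhere, fairness forces every fair run to visit it.

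\emph{Case \(|\inpt{v}|_a \geq 2\).} From any \(\top\)-consensus, some \(a\)-input agent can become \(A^\bot\), by rule~1 or rule~2, which gives a \(\bot\) opinion. Hence no \(\top\)-stable configuration exists at all.

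\emph{Case \(|\inpt{v}|_a = 0\).} This is the delicate case, and I expect it to be the main obstacle. The danger is a stale all-\(\top\) configuration left behind by the input changes. The key invariant is that, with no \(a\)-input present, rule~4 is dead. Consequently an agent in \(N^\bot\), or a mismatched agent, never regains opinion \(\top\). Now take any reachable \(\top\)-consensus \(x\). The agents of \(x\) are of two kinds.
\begin{itemize}
\item A mismatched agent has opinion \(\bot\) by our convention. It therefore cannot occur in a \(\top\)-consensus.
\item A matched agent is \((\sigma, N^\top)\). It reached this state through rule~4, which required an \(a\)-input agent at that time. So the last input change that removed the final \(a\) input happened after these confirmations. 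That change produced a mismatched agent with opinion \(\bot\), and by the invariant it stays \(\bot\) until it is corrected to \(N^\bot\). This contradicts \(x\) being a \(\top\)-consensus.
\end{itemize}
If the input never contained an \(a\), all agents start in \(N^\bot\) and stay there. Making the ``last change'' bookkeeping precise, for example through a potential argument over the sequence \(u \dynto^* v \to^* x\), is the step I would write most carefully. It is also where the convention on opinions of mismatched agents matters, including for populations of size one.
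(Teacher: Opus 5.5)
Your protocol is essentially the paper's, with the stored ``last input'' coarsened to a binary role. Your treatment of the cases $|\inpt{v}|_a = 1$ and $|\inpt{v}|_a \geq 2$, including the one-agent corner, is sound. The coarsening also avoids a corner the paper's version overlooks: there, a change between two distinct non-$a$ letters produces a state $(\sigma,\sigma',o)$ that none of its rules repair.

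\textbf{The gap is in the case $|\inpt{v}|_a = 0$}, at the sentence ``That change produced a mismatched agent with opinion $\bot$.'' This is false. The last $a$-input may be held by a \emph{mismatched} agent $(a, N^\top)$, and flipping its input away yields the \emph{matched} state $(\sigma, N^\top)$, with opinion $\top$. Concretely, start from $u = ab$:
\begin{itemize}
\item rule~4 gives $(a,A^\top)(b,N^\top)$;
\item changing agent~2's input to $a$ gives $(a,A^\top)(a,N^\top)$;
\item changing agent~1's input to $b$ gives $(b,A^\top)(a,N^\top)$;
\item changing agent~2's input back to $b$ gives $(b,A^\top)(b,N^\top)$.
\end{itemize}
The change that removed the final $a$ created no $\bot$ agent. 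This configuration fails to be a $\top$-consensus only because of agent~1, which your argument never looks at. So the input-based bookkeeping tracks the wrong event, and making ``the last change'' more precise will not fix it.

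\textbf{The repair is to track roles (equivalently, confirmations) instead of inputs.} Take the last firing of rule~4 in the whole history $u \dynto^* v \to^* x$, and let $j$ be its observed agent, which is then in state $(a, A^\top)$. Three facts constrain $j$ afterwards:
\begin{itemize}
\item input changes never alter roles;
\item the only way out of role $A$ is rule~1, which lands in $N^\bot$;
\item the only way from $N^\bot$ to $N^\top$ is to be the observer of rule~4.
\end{itemize}
Hence from that moment on $j$ is always in role $A$ or in $N^\bot$. Since its input in $x$ is not $a$, its opinion in $x$ is $\bot$. If rule~4 never fires, no agent ever holds $N^\top$, because initial states are $A^\top$ and $N^\bot$ and no other rule produces $N^\top$; so every agent of $x$ has opinion $\bot$. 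Either way, $x$ is not a $\top$-consensus.

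This is what the paper's terse ``an $a$ can only disappear via rule~(1)'' is getting at. The $a$ there lives in the acting (second) component, which corresponds to your role, not the saved input.
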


\begin{proof}[Proof sketch.]
  Each agent stores its current input, its last input, and a belief on
  the output. If the current input of an agent mismatches its former
  one, then it resets itself to the current one. Each agent
  eventually stabilizes to a fixed input $(\sigma,\sigma,\cdot)$. The
  belief component then controls the consensus: an $(a,a,\top)$ agent
  can turn any $(\sigma,\sigma,\cdot)$ with $\sigma\neq a$ to $\top$,
  while the presence of any $(\sigma,\sigma,\bot)$ can spread $\bot$
  to other agents.  Finally, if an $a$-agent observes another $a$, it
  switches to $\bot$; combined with the rule that lets an $a$-agent
  reset itself to $\top$, this makes $a$-agents alternate forever
  between $\top$ and $\bot$ whenever there are at least two $a$'s,
  preventing stabilization in that case.  As a result, if the input
  contains no $a$ then every fair run reaches a $\bot$-stable
  consensus; if it contains exactly one $a$ then every fair run
  reaches a $\top$-stable consensus; and if it contains at least two
  $a$'s then some agent flips its belief forever.
\end{proof}

\begin{proposition}\label{prop:da:semidec}
  The language $L \defeq A_0^* a_1 A_2^* \cdots a_{m-1} A_m^*$ is
  $\IOPP[<]$-semi-decidable (with stabilizing inputs).
\end{proposition}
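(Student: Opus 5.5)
The plan is to realize $L$ as the image under an alphabet rewriting of a conjunction of languages already known to be $\IOPP[<]$-semi-decidable with stabilizing inputs, and then invoke \Cref{lem:inter:stab} and \Cref{lem:renaming}.

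First, set up the annotated alphabet. Let $k \defeq m$ and consider the alphabet $\Gamma \defeq \{0,1,\ldots,m\} \times \Sigma$, whose letter $(i,\sigma)$ means ``$\sigma$ sits in block $i$''. The even indices $i$ correspond to the starred blocks $A_i^*$, and the odd indices to the single letters $a_i$. Define $M \subseteq \Gamma^+$ as the set of words $w$ satisfying three conditions:
\begin{enumerate}
\item the projection of $w$ onto the first component lies in $0^*1^*\cdots m^*$;
\item for each odd $i$, exactly one position has first component $i$;
\item every position $(i,\sigma)$ with $i$ even has $\sigma \in A_i$, and every position $(i,\sigma)$ with $i$ odd has $\sigma = a_i$.
\end{enumerate}
With $f(\cdot)$ the projection onto the second component, $f(M) = L$ up to the convention on the empty word. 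Indeed, a word of $L$ factors as $u_0a_1u_2\cdots a_{m-1}u_m$, and labelling each factor with its block index gives a preimage in $M$; conversely, any word of $M$ projects to such a factorization.

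Second, show that each condition is $\IOPP[<]$-semi-decidable with stabilizing inputs over $\Gamma$.
\begin{itemize}
\item Condition (1) is the preimage of $0^*\cdots m^*$ under the letterwise map $(i,\sigma)\mapsto i$. I would apply \Cref{lem:renaming} to \Cref{prop:ord:semi} with $g(i) \defeq \{i\}\times\Sigma$.
\item Condition (2), for a fixed odd $i$, is obtained by applying \Cref{lem:renaming} to \Cref{prop:io:eqone} over the alphabet $\{a,\#\}$, with $g(a)=\{i\}\times\Sigma$ and $g(\#)=\Gamma\setminus g(a)$. This gives $|w|_{\{i\}\times\Sigma}=1$.
\item Condition (3) is a local letter constraint of the form $B^+$ with $B\subseteq\Gamma$. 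A trivial protocol handles it: a bad letter takes opinion $\bot$ and spreads it, and the $\bot$ opinion is reset whenever an agent's input is good. More simply, I would apply renaming to $\{\#\}^+$, which any one-state protocol decides, using $g(\#)=B$.
\end{itemize}
\Cref{lem:inter:stab} then gives $M$ with stabilizing inputs, and \Cref{lem:renaming} with $f$ gives $L$.

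The main obstacle is condition (3). I need to make sure that renaming from a one-letter alphabet genuinely yields the restriction to $B^+$ within the stabilizing-inputs framework, since \Cref{lem:renaming} maps into $\Gamma$ and letters outside $B$ must be rejected. If this route is awkward, I would give the direct IO protocol instead:
\begin{align*}
(x,\top),(y,\bot) &\xrightarrow{\bfsf{true}} (x,\bot),(y,\bot),\\
(x,o),(y,o') &\xrightarrow{\bfsf{true}} (x,\top),(y,o') \quad \text{for } x\in B,
\end{align*}
with initial opinion $\top$ iff the letter is in $B$, and with agents outside $B$ forced to $\bot$. I would then check the stabilizing-inputs condition by fairness. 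This is routine, so the real work is already packaged in the earlier propositions.
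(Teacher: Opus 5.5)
Your proof is correct relative to the lemmas as stated, and it uses the same ingredients as the paper: \Cref{prop:ord:semi}, \Cref{prop:io:eqone}, \Cref{lem:inter:stab} and \Cref{lem:renaming}. It does, however, take a detour that the paper avoids. The paper stays over $\{0,\dots,m\}$. It intersects $K=0^*1^*\cdots m^*$ with $K'=\bigcap_{i\text{ odd}}\{w : |w|_i=1\}$ and then applies one renaming, $f(i)=A_i$ for even $i$ and $f(i)=\{a_i\}$ for odd $i$. That single map does both your block tagging and your condition (3), so the tagged alphabet $\{0,\dots,m\}\times\Sigma$, the lifting renamings and the separate $B^+$ protocol are unnecessary. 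Your worry about target letters with no preimage applies to the paper's $f$ too: any letter of $\Sigma$ lying in no $A_i$ and different from every $a_i$ has none. For (3), the simplest choice is a transition-free protocol in which an agent with input $x$ has opinion $\top$ iff $x\in B$; it is trivially robust to input changes.

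The real cost of the detour is elsewhere. You apply \Cref{lem:renaming} repeatedly to intermediate languages: for (1), once per odd index in (2), and possibly for (3). You then feed the results into \Cref{lem:inter:stab} and a final renaming, so your argument depends on the lemma's full ``with stabilizing inputs'' conclusion. The paper applies the lemma once, to a protocol whose robustness comes from direct constructions, and for plain semi-decidability (all that is used downstream) it needs only the plain conclusion.

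Be aware that the construction in the proof of \Cref{lem:renaming} does not by itself secure that conclusion. Its opinion $O'((\gamma,q))=O(q)$ ignores whether the current input $\gamma$ still lies in $f(\sigma)$ for the simulated letter $\sigma$ stored in $q$, and re-guessing is triggered only by a $\bot$. Two consequences:
\begin{itemize}
\item In your renaming of $\{\#\}^+$ to $B^+$, a $\top$-stable configuration stays $\top$-stable after an agent switches its input to a letter outside $B$.
\item In your lift of $0^*\cdots m^*$, changing tags on a sorted $\top$-stable configuration leaves it $\top$-stable, since the simulated letters are still sorted.
\end{itemize}
The repair is easy: give opinion $\bot$ to any agent with $\gamma\notin f(\sigma)$, so that an input change forces a re-guess. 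Your chained use needs this repair, while the paper's one-shot renaming does not, at least for plain semi-decidability.
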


\begin{proof}
  Let $\Gamma \defeq \{0, 1, \ldots, m\}$, $K \defeq 0^* 1^* \cdots
  m^*$ and $K' \defeq \bigcap_{a \in \Gamma, a\text{ odd}} \{w \in
  \Gamma^+ \mid |w|_a = 1\}$. By \Cref{prop:ord:semi}, $K$ is
  $\IOPP[<]$-semi-decidable with stabilizing
  inputs. Furthermore, by \Cref{prop:io:eqone,lem:inter:stab}, $K'$ is
  $\IOPP[\emptyset]$-semi-decidable with stabilizing inputs. By
  \Cref{lem:inter:stab}, $K \cap K'$ is $\IOPP[<]$-semi-decidable with
  stabilizing inputs. Let $f(i) \defeq A_i$ for even \(i\), and
  $f(i) \defeq \{a_i\}$ otherwise. We are done by \Cref{lem:renaming}
  since $L = f(K \cap K')$.
\end{proof}

\section{Expressiveness of \(\PP[<]\)}
\label{sec:lbppo}

The crisp characterization of the previous section ties \(\IOPP[<]\) to a wealth
of computational models with strikingly different flavors.  Chief among them, \DA
is characterized by a logic, \(\Delta_2[<]\), and by partially-ordered unambiguous
automata.  A natural question is thus whether \(\PP[<]\) also admits such a
diverse array of characterizations.  We fall short of providing exact
characterizations, but offer, in this section, two large classes of languages,
one logically-defined and one based on partially-ordered automata, that are
\(\PP[<]\)-decidable.

Our formalisms will involve Presburger arithmetic, the first-order theory of the
naturals with order and addition; \eg
\(\phi(x) = (\exists y \in \N)[y \geq 1 \land x = 2y]\) holds iff $x$ is a positive even number.
Write \(\equiv_c\) for the equivalence of naturals modulo \(c\).  It is well known that
Presburger arithmetic together with \(\equiv_c\) with any \(c \geq 2\) admits quantifier
elimination.  For our purposes, a \emph{Presburger formula} is a (quantifier-free)
Boolean combination of predicates of the form $\sum_{i=1}^n a_i x_i < b$ or
$\sum_{i=1}^n a_i x_i \equiv_c b$, where $a_i, b \in \Z$, $c \geq 2$ and variables
$x_i$ are over $\N$.

\subsection{First-order logic over word intervals}\label{ssec:fo:int}

\def\FOint{\FO^{\mathrm{int}}}%
\def\Piint{\Pi^{\mathrm{int}}}%
\def\Sigmaint{\Sigma^{\mathrm{int}}}%
\def\Deltaint{\Delta^{\mathrm{int}}}%

\begin{plaindef}
  For $w \in \Sigma^n$ and $\sigma \in \Sigma$, let
  $\cnt{\sigma} \colon \{1, \ldots, n\} \times \{1, \ldots, n\} \to \N$ be the function that counts
  the number of occurrences of $\sigma$ between two positions of $w$, \ie
  $\cnt{\sigma}(x, y) \defeq |w[x..y]|_\sigma$.  We define $\FOint$ as first-order logic
  over word \emph{intervals}, that is, with access to numerical values
  \(\cnt{a}(x,y)\) where $x$ and $y$ are either first-order variables, the first
  position (denoted ``$1$'') or the last position (denoted ``$\lmax$'').  We
  will study \(\FOint[<, +, \equiv]\), where we allow numerical values to be compared,
  added, and tested modulo \(c\) for any constant \(c\).  Note that a variable
  \(x\) can be expressed as \(\sum_{\sigma \in \Sigma} \cnt{\sigma}(1, x)\), we will thus assume that
  the atomic formulas only have terms of the form \(\cnt{a}(x, y)\), though we
  write \(1\) for \(\sum_{\sigma\in\Sigma}\cnt\sigma(1, 1)\), and \(\lmax\) for
  \(\sum_{\sigma\in\Sigma}\cnt{\sigma}(1, \lmax)\).  The logics \(\Sigmaint_k\),
  \(\Piint_k\), and \(\Deltaint_k\) are naturally defined.
\end{plaindef}

\begin{example}\label{ex:median}
  Let us show that the \emph{median language} $\bigcup_{n \in \N}
  \Sigma^n a \Sigma^n$ belongs to $\Deltaint_1[<, +, \equiv]$. The
  following predicate asserts that $x$ is the middle position:
  \(\psi(x) \defeq \sum_{\sigma \in \Sigma} \cnt{\sigma}(1, x) =
  \sum_{\sigma \in \Sigma} \cnt{\sigma}(x, \lmax)\).  The median
  language can either be expressed by $(\exists x)[\psi(x) \land
  a(x)]$, or by $(\forall x)[(\psi(x) \rightarrow a(x)) \land
  \lmax \equiv_2 1]$.  Note that $a(x)$ is the same as
  \(\cnt{a}(x, x) = 1\).
\end{example}

We now provide a convenient characterization of languages in $\Sigmaint_1[<, +, \equiv]$.

\begin{restatable}{lemma}{lemSigNormalForm}\labelandarrows{lem:sig:normal:form}
  Any language from $\Sigmaint_1[<, +, \equiv]$ is a finite union of languages of the
  form
  $K = \{a_0 w_1 a_1 \cdots w_m a_m \mid w_i \in \Sigma^*, a_i \in \Sigma, \varphi(x_{i, \sigma} \mapsto |w_i|_\sigma, y_{i,
    \sigma} \mapsto |a_i|_\sigma)\}$ where $m \geq 0$ and $\varphi$ is a Presburger formula over
  variables
  $\{x_{i, \sigma} \mid i \in [1..m], \sigma \in \Sigma\} \cup \{y_{i, \sigma} \mid i \in [0..m], \sigma \in \Sigma\}$.
\end{restatable}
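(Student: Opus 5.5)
Take a sentence $\exists x_1 \cdots \exists x_k\, \theta$ in $\Sigmaint_1[<,+,\equiv]$ with $\theta$ quantifier-free. I would first rewrite $\theta$ into disjunctive normal form and distribute the existential block over the disjuncts. Since semi-deciders are closed under union, it suffices to handle one disjunct. Next I fix an \emph{order type} of the variables: a total preorder on $\{1, x_1, \ldots, x_k, \lmax\}$ saying which variables coincide and how the distinct ones are ordered, with $1$ least and $\lmax$ greatest. Atoms $x_i < x_j$ and $x_i = x_j$ are decided by the order type, so I guess it and take the union over all order types.

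After merging equal variables, a fixed order type lists the distinct positions as $p_0 = 1 < p_1 < \cdots < p_m = \lmax$. A word with these marked positions factors uniquely as $a_0 w_1 a_1 \cdots w_m a_m$, where $a_i = w[p_i]$ and $w_i$ is the infix strictly between $p_{i-1}$ and $p_i$. The degenerate cases $1 = \lmax$ (giving $m = 0$) or $1 = x_j$ are handled the same way, with fewer blocks.

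The key observation is that every term $\cnt{\sigma}(x,y)$ is a linear expression in the block counts. If $x = p_i$ and $y = p_j$ with $i \le j$, then
\[
\cnt{\sigma}(x,y) = \sum_{\ell=i}^{j} |a_\ell|_\sigma + \sum_{\ell=i+1}^{j} |w_\ell|_\sigma .
\]
If $i > j$, the count is $0$. So each atom $\sum c \cdot \cnt{\sigma}(\cdot,\cdot) < b$ or $\equiv_c b$ becomes a Presburger atom over the variables $x_{\ell,\sigma} = |w_\ell|_\sigma$ and $y_{\ell,\sigma} = |a_\ell|_\sigma$. Letter tests $\sigma(x)$ are already atoms $\cnt{\sigma}(x,x) = 1$.

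Let $\varphi$ be the resulting quantifier-free Presburger formula, conjoined with the constraints $\sum_\sigma y_{\ell,\sigma} = 1$ for each $\ell$. This is harmless, since $a_\ell$ is a single letter and these constraints hold automatically. Then a word satisfies the disjunct under this order type iff it admits such a factorization satisfying $\varphi$, which is exactly membership in the corresponding $K$.

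Taking the union over all disjuncts and order types gives the finite union claimed in \Cref{lem:sig:normal:form}. The only delicate step is the bookkeeping of boundary cases: coinciding variables, $m = 0$, and whether the intervals $[x,y]$ are inclusive at both ends. I would handle these by the merging above and by checking the term translation once in general.
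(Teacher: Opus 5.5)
Your proof is correct and follows the paper's argument: fix the relative order and coincidences of $1, x_1, \ldots, x_k, \lmax$ (the paper enumerates permutations and equivalence classes of variables), read off the factorization $a_0 w_1 a_1 \cdots w_m a_m$, and rewrite each $\cnt{\sigma}(x,y)$ as a linear combination of the block counts. Your explicit inclusive-endpoint formula, together with the $i>j$ case, is in fact a bit more careful than the paper's substitution; the only slip is that reducing to one DNF disjunct needs no appeal to semi-deciders, only that $\exists$ distributes over $\vee$ and that a finite union of finite unions is finite (the DNF step itself is unnecessary).
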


Thanks to the previous lemma, we are able to build a protocol $\PP[<]$ semi-deciding any language in $\Sigmaint_1[<, +, \equiv]$, by building semi-deciders with stabilizing inputs for each $K$, and then using the closure properties presented in \Cref{ssec:stab}.

\begin{proposition}\label{prop:sigma:pp}
  Any language from $\Sigmaint_1[<, +, \equiv]$ is $\PP[<]$-semi-decidable.
\end{proposition}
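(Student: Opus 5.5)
By \Cref{lem:sig:normal:form} and closure of semi-deciders under union (\Cref{lem:inter:semi}), it suffices to semi-decide a single language $K = \{a_0 w_1 a_1 \cdots w_m a_m \mid \varphi(\ldots)\}$. I would mimic the proof of \Cref{prop:da:semidec}. Take an alphabet $\Gamma \defeq \Sigma \times \{0, 1, \ldots, 2m\}$. A letter $(\sigma, 2i)$ means ``$\sigma$ is the distinguished letter $a_i$'', and $(\sigma, 2i-1)$ means ``$\sigma$ lies in block $w_i$''. Consider
\[
K_\Gamma \defeq \{ u \in \Gamma^+ \mid \pi_2(u) \in 0\,1^*\,2\,3^* \cdots (2m-1)^*\,(2m) \text{ and } \psi(u) \},
\]
where $\psi$ is the Presburger condition $\varphi$ rewritten over the counts $|u|_{(\sigma, j)}$. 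That is, $x_{i,\sigma} \mapsto |u|_{(\sigma,2i-1)}$ and $y_{i,\sigma} \mapsto |u|_{(\sigma,2i)}$. Then $K = f(K_\Gamma)$ for the projection $f(\sigma) \defeq \{(\sigma, j) \mid j \in [0..2m]\}$. Strictly, $f$ goes from $\Gamma$ to $\Sigma$, so I apply \Cref{lem:renaming} with the map $(\sigma,j) \mapsto \{\sigma\}$. It then suffices to show that $K_\Gamma$ is $\PP[<]$-semi-decidable with stabilizing inputs.

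I would write $K_\Gamma$ as an intersection of three parts and use \Cref{lem:inter:stab}. The first part is $\{u \mid \pi_2(u) \in 0^*1^*\cdots(2m)^*\}$. This is \Cref{prop:ord:semi} composed with a renaming via \Cref{lem:renaming}, or done directly by the same two rules. The second part says that each even index $2i$ appears exactly once. This is an intersection of \Cref{prop:io:eqone} instances, each renamed through \Cref{lem:renaming} to the set of letters with second component $2i$. The third part is the commutative language $\{u \mid \psi(\text{Parikh}(u))\}$. It is semilinear, hence in $\PP[\emptyset]$, and by~\cite{Ras24} it is decidable, and so semi-decidable, with stabilizing inputs in $\PP[\emptyset] \subseteq \PP[<]$. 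The intersection of the three parts is exactly $K_\Gamma$.

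The main obstacle is the third ingredient. The result of~\cite{Ras24} is for $\PP[\emptyset]$, and I need it in exactly the input-saving form used by \Cref{lem:inter:stab}: the input copy is unchanged by transitions, and the guarantee holds after arbitrary $\dynto^*$ changes of mind. I would first check that the definition of~\cite{Ras24} matches ours. If it does not, I would adapt it: add an input-saving layer with a ``current/last input'' reset as in \Cref{prop:io:eqone}, so that a change of mind triggers a local reset that the Presburger protocol tolerates. A secondary point is the mismatch between the renaming directions. It is handled by noting that \Cref{lem:renaming} accepts arbitrary maps $\Gamma \to 2^{\Sigma}$, so the guessed block labels are resolved by the nondeterministic guessing built into that lemma.
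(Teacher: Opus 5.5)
Your proposal is correct and is essentially the paper's proof: you reduce via \Cref{lem:sig:normal:form} to a single marked language, tag each letter with its block index, and intersect three languages using \Cref{lem:inter:stab}. These are an ordering language (from \Cref{prop:ord:semi} and \Cref{lem:renaming}), exactly-once constraints on the distinguished positions, and the commutative Presburger language; the paper also simply takes the last one from \cite{Ras24}, as asserted in \Cref{ssec:stab}, so your worry about the input-saving form is settled by the same citation. You then project back with \Cref{lem:renaming}.

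The differences are minor. You obtain the exactly-once constraints from \Cref{prop:io:eqone} by renaming instead of citing \cite{Ras24}. You also impose the constraint on index $0$, which is actually needed; the paper's $K''$ only constrains $A_1, \ldots, A_m$. For instance, with $m = 0$ and $\varphi$ trivially true, omitting it would yield all of $\Sigma^+$ instead of $\Sigma$.
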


\begin{proof}
  Let $L \in \Sigmaint_1[<, +, \equiv]$. By
  \Cref{lem:sig:normal:form}, $L$ is a finite union of languages of
  the form $L' = \{a_0 w_1 a_1 \cdots w_m a_m \mid w_i \in \Sigma^*, a_i
  \in \Sigma, \varphi(x_{i, \sigma} \mapsto |w_i|_\sigma, y_{i,
    \sigma} \mapsto |a_i|_\sigma)\}$ where $m \geq 0$ and $\varphi$ is
  a Presburger formula. It suffices to show that $L'$ is
  $\PP[\cN]$-semi-decidable with stabilizing inputs. Indeed, by
  \Cref{lem:inter:stab}, that class of languages is closed under
  union.

  Let $\underline{\Sigma} \defeq \{\underline{\sigma} \mid \sigma \in
  \Sigma\}$, $A_i \defeq \underline{\Sigma} \times \{i\}$, $W_i
  \defeq \Sigma \times \{i\}$ and $\Gamma \defeq \bigcup_{i \in [0..m]} A_i \cup \bigcup_{i \in
  	[1..m]} W_i$. We justify that the three following languages $K$, $K'$ and $K''$ over alphabet $\Gamma$ are semi-decided with stabilizing inputs:
  \begin{itemize}
  	\item $K \defeq A_0^* W_1^* A_1^* \cdots W_m^* A_m^*$,
  	\item $K'\defeq \{w \in \Gamma^+ \mid \varphi(x_{i,\sigma} \mapsto |w|_{(\sigma, i)}, y_{i,\sigma} \mapsto |w|_{(\underline{\sigma}, i)})\}$,
  	\item $K'' \defeq \left\{w \in \Gamma^+ \mid \sum_{\gamma \in A_1} |w|_\gamma = 1 \land \cdots \land \sum_{\gamma \in A_m} |{w}|_\gamma
  	= 1\right\}$   (recall $A_i \cap A_j = \emptyset$ for $i \neq j$).
  \end{itemize}
 Let $f(2i) \defeq A_i$ and $f(2i + 1) \defeq W_{i+1}$. We
  have $f(0^* 1^* 2^* \cdots (2m)^*) = K$.  By
  \Cref{prop:ord:semi,lem:renaming}, $K$ is $\PP[<]$-semi-decidable
  with stabilizing inputs.
  Since $K', K'' \in \PP[\emptyset]$, these two languages are
  $\PP[\emptyset]$-decidable with stabilizing
  inputs~\cite{AACFJP05,Ras24}.

  By \Cref{lem:inter:stab}, the language $K \cap K' \cap K''$ is
  $\PP[<]$-semi-decidable with stabilizing inputs. Let $g((\sigma, i))
  = g((\underline{\sigma}, i)) \defeq \{\sigma\}$.  We are done by
  \Cref{lem:renaming} since $L' = g(K \cap K' \cap K'')$.
\end{proof}

\begin{restatable}{corollary}{corDeltaPP}\labelandarrows{cor:delta:pp}
  $\Deltaint_1[<, +, \equiv] \subseteq \PP[<]$.
\end{restatable}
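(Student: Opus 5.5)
The plan is to combine \Cref{prop:sigma:pp} with \Cref{lem:semidec}. Let $L \in \Deltaint_1[<, +, \equiv]$. By definition, $L$ is expressible both by a $\Sigmaint_1[<, +, \equiv]$ sentence and by a $\Piint_1[<, +, \equiv]$ sentence. \Cref{prop:sigma:pp} applied to the first expression shows directly that $L$ is $\PP[<]$-semi-decidable.

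For the complement, we take the $\Piint_1$ sentence $\forall \vec{x}\, \psi$ defining $L$ and negate it. This gives $\exists \vec{x}\, \neg\psi$, where $\neg\psi$ is again quantifier-free. The atomic formulas of the logic are Presburger comparisons and congruences over terms $\cnt{\sigma}(x,y)$. Negating a quantifier-free Boolean combination of atoms stays within that class, so no new atom types are needed. Hence $\Sigma^+ \setminus L$ is defined by a $\Sigmaint_1[<, +, \equiv]$ sentence.

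There is one small point to check here. The paper disregards the empty word, so we must make sure the complement is taken within $\Sigma^+$. The sentence $\exists \vec{x}\, \neg\psi$ is evaluated only on nonempty words, where positions exist and $1$, $\lmax$ make sense. Its language restricted to $\Sigma^+$ is therefore exactly $\Sigma^+ \setminus L$. Applying \Cref{prop:sigma:pp} again, $\Sigma^+ \setminus L$ is $\PP[<]$-semi-decidable.

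Finally, since both $L$ and its complement are $\PP[<]$-semi-decidable, \Cref{lem:semidec} yields that $L$ is $\PP[<]$-decidable, i.e.\ $L \in \PP[<]$. I expect no real obstacle. The only step needing care is verifying that $\Sigmaint_1$ and $\Piint_1$ are exactly dual under negation for this signature. This holds because the atoms are closed under negation up to Boolean combination: for instance, $\neg(t < b)$ is $-t < -b+1$, and $\neg(t \equiv_c b)$ is the disjunction of $t \equiv_c b'$ over $b' \neq b$. Both are in any case allowed as Boolean combinations.
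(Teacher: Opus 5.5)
Your proposal is correct and matches the paper's proof: $L$ and its complement are both in $\Sigmaint_1[<, +, \equiv]$, so both are $\PP[<]$-semi-decidable by \Cref{prop:sigma:pp}, and \Cref{lem:semidec} then gives decidability. Your extra checks, that negating a $\Piint_1$ sentence yields a $\Sigmaint_1$ sentence and that the complement is taken within $\Sigma^+$, spell out steps the paper leaves implicit.
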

In the forthcoming \Cref{con:wuPOPA}, we postulate that these two classes coincide.

\subsection{Partially-ordered Parikh automata}\label{ssec:popa}

A \emph{partially-ordered nondeterministic Parikh automaton} (\emph{poPA}) is a
tuple $\A = (Q, \Sigma, \delta, q_0, F, \Psi)$ where:
\begin{compactitem}
\item $Q$ is a finite set of \emph{states} equipped with a partial
  order $\leq$;

\item $\Sigma$ is a finite \emph{alphabet};

\item $\delta \subseteq Q \times \Sigma \times Q$ is the \emph{transition relation}, that satisfies
  $p \leq q$ for all $(p, \sigma, q) \in \delta$; if $p = q$, the transition is dubbed a
  \emph{self-loop}, and otherwise, a \emph{progress transition};

\item $q_0 \in Q$ is the \emph{initial state}, and $F \subseteq Q$ is
  the set of \emph{final states};

\item $\Psi$ is a Presburger formula over variables $\delta$.
\end{compactitem}

\newcommand{\tword}[1]{\pi_\Sigma(#1)}

For all $t = (q, \sigma, q') \in \delta$, we write $q \to^t q'$ and
$\tword{t} \defeq \sigma$. We naturally lift these notations to sequences. A word
$w \in \Sigma^*$ is \emph{accepted} by $\A$ if there exist $\rho \in \delta^*$ and
$q \in F$ such that $q_0 \to^\rho q$, $\tword{\rho} = w$ and
$\Psi(t \mapsto |\rho|_t)$ holds. The \emph{language} of $\A$ is the set $L(\A)$ of words it
accepts. \Cref{fig:automata} depicts two examples of poPA.

\begin{figure}[h]
  \begin{center}
    \begin{tikzpicture}[->, thick, initial text=, auto]
  \tikzstyle{astate} = [state, minimum size=13pt, inner sep=1pt];
  \newcommand{\ltrans}[1]{{\color{colA}$#1$:}}
  
  \node[astate, initial] (q0) {};
  \node[astate, right=1cm of q0, accepting, fill=colA!50] (q1) {};
  
  \node[below=2pt of q1, xshift=-5pt, colA] {
    $\Psi \defeq (s = t)$
  };
  
  \path[->]
  (q0) edge[loop above] node {\ltrans{s} $\Sigma$} ()
  (q0) edge node {$a$} (q1)
  (q1) edge[loop above] node {\ltrans{t} $\Sigma$} ()
  ;

  \renewcommand{\ltrans}[1]{{\color{colB}$#1$:}}
  
  \node[astate, initial, right=50pt of q1] (q0) {};
  \node[astate, right=1.50cm of q0, accepting, fill=colB!50] (q1) {};
  \node[astate, right=1.50cm of q1, accepting, fill=colB!50] (q2) {};
  
  \node[right=-2pt of q2, colB] {
    \begin{tabular}{r}
      $\Psi \defeq (u = s + s' + t + t')$ \\
      ${} \land (t + t' > s + s')$
    \end{tabular}
  };
  
  \path[->]
  (q0) edge[loop above] node {\ltrans{u} $a$} ()

  (q0) edge[bend  left=20] node[]     {\ltrans{s} $\sqsubset$} (q1)
  (q0) edge[bend right=20] node[swap] {\ltrans{t} $\sqsupset$} (q1)
  
  (q1) edge[loop above] node {\ltrans{s'} $\sqsubset$} ()
  (q1) edge[loop below] node {\ltrans{t'} $\sqsupset$} ()

  (q1) edge[bend  left=20] node[]     {$\sqsubset$} (q2)
  (q1) edge[bend right=20] node[swap] {$\sqsupset$} (q2)
  
  (q2) edge[loop above] node {$\sqsubset, \sqsupset$} ()
  ;
\end{tikzpicture}
    \caption{Example of poPA for the median language $\bigcup_{n
        \geq 0} \Sigma^n a \Sigma^n$ (left) and the coDyck-witness
      language $\{a^n v \mid n \geq 1, v \in \{\sqsubset,
      \sqsupset\}^{\geq n}, |v[1..n]|_\sqsupset >
      |v[1..n]|_\sqsubset\}$ (right).}\label{fig:automata}
  \end{center}
\end{figure}

\Cref{lem:sig:normal:form} allows us to translate a formula from
$\Sigmaint_1[<, +, \equiv]$ into a poPA by guessing a
factorization $a_0 w_1 a_1 \cdots w_m a_m$ with a progress transition
for each $a_i$, and a self-loop for each $w_i$; and then using the
Presbuger acceptance formula of the automaton to verify the
guess. We can also translate a poPA into a formula of
$\Sigmaint_1[<, +, \equiv]$ by guessing the position of the progress
transitions and verifying the validity of the path. This yields:

\begin{restatable}{theorem}{propSigPopa}\labelandarrows{prop:sig:popa}
  A language is recognized by some
  poPA iff it belongs to $\Sigmaint_1[<, +, \equiv]$.
\end{restatable}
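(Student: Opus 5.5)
We prove the two inclusions separately, using \Cref{lem:sig:normal:form} as the bridge in the first direction and a direct encoding of accepting paths in the second.

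\emph{From $\Sigmaint_1[<, +, \equiv]$ to poPA.} Since poPA languages are closed under finite union (add a fresh minimal initial state and branch nondeterministically into the disjoint union of the automata, keeping a product-free partial order; the acceptance formulas are combined by a disjunction, each disjunct also requiring that the transitions of the other branches are used zero times), it suffices by \Cref{lem:sig:normal:form} to recognize one language
\[
K = \{a_0 w_1 a_1 \cdots w_m a_m \mid \varphi(x_{i,\sigma} \mapsto |w_i|_\sigma, y_{i,\sigma}\mapsto |a_i|_\sigma)\}.
\]
We take states $q_0 < p_0 < q_1 < p_1 < \cdots < q_m < p_m$ (a chain), with final state $p_m$. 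For each $i$ and each $\sigma$, we add a progress transition $t_{i,\sigma} = (q_i, \sigma, p_i)$ reading $a_i$. For $i \geq 1$, we add a self-loop $s_{i,\sigma} = (p_{i-1}, \sigma, p_{i-1})$ reading the letters of $w_i$, together with an $\varepsilon$-free passage from $p_{i-1}$ to $q_i$. To avoid $\varepsilon$-moves, we merge $p_{i-1}$ and $q_i$: the self-loops live on $q_i$ and the progress transitions go $q_i \to q_{i+1}$. Then $\Psi \defeq \varphi[x_{i,\sigma} \mapsto s_{i,\sigma},\, y_{i,\sigma} \mapsto t_{i,\sigma}]$. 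Every run is forced through exactly one progress transition per level, so the counts match exactly.

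\emph{From poPA to $\Sigmaint_1[<, +, \equiv]$.} A run of a poPA visits a chain of states $q_0 = r_0 < r_1 < \cdots < r_k$ with $k < |Q|$. This gives finitely many \emph{skeletons}: a sequence of progress transitions $\tau_1, \ldots, \tau_k$ with $\tau_j = (r_{j-1}, \sigma_j, r_j)$ and $r_k \in F$. We take the disjunction over skeletons of a formula
\[
\exists z_1 \cdots \exists z_k\, \bigl[z_1 < \cdots < z_k \land \textstyle\bigwedge_j \sigma_j(z_j) \land \chi \land \Psi'\bigr].
\]
Here $\chi$ says that every letter of the block strictly between $z_{j}$ and $z_{j+1}$ (with $z_0 = $ ``before $1$'' and $z_{k+1} = $ ``after $\lmax$'') labels some self-loop on $r_j$. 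We express this as $\cnt{\sigma}(z_j{+}1, z_{j+1}{-}1) = 0$ for each forbidden $\sigma$, writing these interval counts via differences of $\cnt{\sigma}(1,\cdot)$ so that no $\pm 1$ positions are needed.

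The formula $\Psi'$ must express $\Psi$, but $\Psi$ refers to counts of individual self-loops, which are not determined by the word when several self-loops on $r_j$ read the same letter $\sigma$. We handle this by existentially quantifying these split counts. The number of uses of loops on $r_j$ labelled $\sigma$ is exactly the block count $c_{j,\sigma}$, and any decomposition of $c_{j,\sigma}$ into nonnegative summands, one per such loop, is realizable by some run, because order within a block is free. So $\Psi'$ is $\exists$ (numerical) $n_\ell$ with $\sum_\ell n_\ell = c_{j,\sigma}$ and $\Psi$. Quantifier elimination for Presburger arithmetic with $\equiv_c$ turns this into a quantifier-free Presburger formula in the $c_{j,\sigma}$, the progress transitions contributing fixed counts $1$ or $0$. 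This yields a $\Sigmaint_1$ sentence. Edge cases, such as an empty word or $k = 0$, are handled by allowing empty blocks.

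\emph{Main obstacle.} The delicate step is this second direction: we must check that existential projection of the loop-count split stays within quantifier-free Presburger over block counts, and that the block boundaries $z_j \pm 1$ are expressible with only $\cnt{\cdot}(x,y)$ terms. I expect both to go through by the quantifier-elimination remark and by rewriting closed intervals as differences of prefix counts.
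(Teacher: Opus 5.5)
Your proof is correct and follows the paper's route. In one direction you build the same chain automaton $q_0 \to q_1 \to \cdots \to q_m \to f$ with $\Psi \defeq \varphi$ for the normal form of \Cref{lem:sig:normal:form}; in the other you take a disjunction over straight-line skeletons that guesses the positions of the progress transitions, checks the letters of each block, and substitutes block counts into $\Psi$. The differences are cosmetic. The paper forces the first and last letters to be read by progress transitions, treating words of length at most one separately, so that $x_0 = 1$ and $x_m = \lmax$; you instead use virtual boundaries and prefix-count differences. Your existential split of loop counts, with quantifier elimination, is harmless but unnecessary: since $\delta \subseteq Q \times \Sigma \times Q$ is a set, there is at most one self-loop per state and letter.
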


Recall that \(\Sigma_2[<]\) is characterized by partially-ordered automata, while
\(\Delta_2[<]\) is characterized by \emph{unambiguous}
partially-ordered automata.  We explore a similar notion for
\(\Deltaint_1[<, +, \equiv]\) and poPA.  A poPA $\A$ is \emph{weakly
  unambiguous}\footnote{The nomenclature, introduced in~\cite{BostanCKN20},
  stems from prior studies~\cite{CadilhacFM13} which called ``unambiguous'' the
  PA with an unambiguous underlying automaton.  We note that the examples of
  \Cref{fig:automata} can be shown, using the tools of~\cite{CadilhacFM13}, not
  to be expressible with unambiguous PA.} if every $w \in L(\A)$ is accepted by at
most one path (w.r.t.\ $F$ \emph{and} $\Psi$).  Note that the automata
of~\Cref{fig:automata} are weakly unambiguous and can be complemented.  By
\Cref{prop:sig:popa}, this means that both languages belong to
$\Deltaint_1[<, +, \equiv]$.  More generally, we conjecture that weakly-unambiguous
poPA are closed under complement.

\begin{restatable}{observation}{obsPaComp}\labelandarrows{obs:pa:comp}
  If weakly-unambiguous poPA are closed under complement, then any
  language recognized by a weakly-unambiguous poPA belongs to
  $\Deltaint_1[<, +, \equiv]$.
\end{restatable}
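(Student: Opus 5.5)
Assume weakly-unambiguous poPA are closed under complement, and let $L = L(\A)$ for a weakly-unambiguous poPA $\A$. The plan is to show that $L$ lies in both $\Sigmaint_1[<, +, \equiv]$ and $\Piint_1[<, +, \equiv]$, using \Cref{prop:sig:popa} twice: once for $L$ and once for its complement.

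The first half is immediate. Since $\A$ is in particular a poPA, \Cref{prop:sig:popa} gives a sentence $\varphi \in \Sigmaint_1[<, +, \equiv]$ with $L(\varphi) = L$.

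For the second half, the hypothesis yields a weakly-unambiguous poPA $\A'$ with $L(\A') = \Sigma^+ \setminus L$. Here we adopt the paper's convention of disregarding the empty word; if complementation is taken in $\Sigma^*$, intersect afterwards with the first-order-definable condition $\exists x\,(x = x)$. Applying \Cref{prop:sig:popa} to $\A'$ gives $\varphi' \in \Sigmaint_1[<, +, \equiv]$ defining $\Sigma^+ \setminus L$.

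It remains to see that $\neg\varphi'$ is equivalent to a $\Piint_1[<, +, \equiv]$ sentence. Write $\varphi' = \exists \bar x\, \theta$, where $\theta$ is quantifier-free: a Boolean combination of letter predicates, order atoms, and Presburger atoms $\sum a_i \cnt{\sigma}(\cdot,\cdot) < b$ or $\equiv_c b$. Then $\neg\varphi'$ is equivalent to $\forall \bar x\, \neg\theta$. The only point to check is that $\neg\theta$ is again quantifier-free in the logic. This holds because negations of Presburger atoms remain expressible without quantifiers:
\begin{itemize}
\item $\neg(t < b)$ is $-t < -b+1$;
\item $\neg(t \equiv_c b)$ is $\bigvee_{b' \not\equiv_c b} t \equiv_c b'$.
\end{itemize}
Hence $\neg\varphi' \in \Piint_1[<, +, \equiv]$, and it defines $L$.

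Combining the two halves, $L \in \Sigmaint_1 \cap \Piint_1 = \Deltaint_1[<, +, \equiv]$. The only step needing care is confirming that \Cref{prop:sig:popa} produces a genuinely prenex existential sentence, so that negation stays within the syntax; this follows from the normal form in \Cref{lem:sig:normal:form}. Note that weak unambiguity is used only through the hypothesis guaranteeing $\A'$, not directly in the argument.
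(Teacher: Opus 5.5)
Your proposal is correct and follows essentially the paper's own argument: apply \Cref{prop:sig:popa} to $\mathcal{A}$ and to the weakly-unambiguous complement automaton $\mathcal{A}'$, which puts both languages in $\Sigma^{\mathrm{int}}_1[<, +, \equiv]$, and then negate the existential sentence for $L(\mathcal{A}')$ to place $L(\mathcal{A})$ in $\Pi^{\mathrm{int}}_1[<, +, \equiv]$. Your extra check that negated atoms stay quantifier-free is harmless but not needed, since the quantifier-free matrix is already an arbitrary Boolean combination and prenex form is simply the definition of $\Sigma^{\mathrm{int}}_1$ (not a consequence of \Cref{lem:sig:normal:form}); likewise, the paper's explicit complements of the two example automata are only illustrative.
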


In the forthcoming \Cref{con:wuPOPA}, we postulate that these two classes coincide.  We note, as
a sanity check, that since \(\Delta_2[<]\) is characterized by unambiguous
partially-ordered automata,
\(\IOPP[<] = \DA = \Delta_2[<] \subsetneq \Deltaint_1[<, +, \equiv]\).  The strictness of the
inclusion is in particular witnessed by the \emph{regular} language
$\{a^{2n} \mid n \geq 1\}$ over alphabet $\{a\}$, which trivially belongs to
$\Deltaint_1[<, +, \equiv]$ with formula \(\lmax \equiv_2 0\).

\begin{rmk}[Two-way models]
  Recall that \DA is also characterized by two-way \emph{deterministic}
  partially-ordered automata.  We can show that two-way poPA are equivalent to
  poPA. However, defining two-way deterministic PA as two-way deterministic
  automata with a Presburger constraint, it is known that the model is as
  expressive as unambiguous PA~\cite{FiliotGM19}, which cannot express the
  languages of \Cref{fig:automata}.
\end{rmk}

\subsection{The regular languages of \(\PP[<]\)}\label{ssec:regpp}

\Cref{lem:sig:normal:form} gives a clear form for the languages of
\(\Sigmaint_1[<, +, \equiv]\).  If the language \(K\) therein is regular, this points to
the formula \(\varphi\) not arithmetically linking the \(w_i\)'s between one another
nor imposing nonregular constraints on any single \(w_i\).  Hence we have
naturally:
\begin{conjecture}\label{conj:pcom}
  Let \(\com\) be the set of commutative regular languages.  Let \(\pcom\) be the
  set of languages that are finite unions of languages
  \(L_0a_1L_1\cdots a_nL_n\) with \(L_i \in \com\) and \(a_i\) letters.  We conjecture that
  the regular languages of \(\Sigmaint_1[<, +, \equiv]\) are exactly \(\pcom\).
\end{conjecture}

The class \(\pcom\) has been studied in previous works: \cite{Anil08} conjectures
that they correspond to the regular languages requiring \(O(\log n)\)
communication and \cite{Cano13} shows that it is included in the largest class
of languages, closed under the so-called positive variety operations, that does
\emph{not} contain \((ab)^+\).  We recall, for contrast, that \DA is the largest
class of \emph{aperiodic} regular languages, closed under the variety
operations, that does not contain \((ab)^+\).  To give additional credence to our
conjecture, we show:
\begin{restatable}{proposition}{propNotAb}\labelandarrows{prop:not:ab}
  The language $(ab)^+$ does not belong to $\Sigmaint_1[<, +, \equiv]$.
\end{restatable}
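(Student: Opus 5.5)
The plan is to argue by contradiction using the normal form of \Cref{lem:sig:normal:form}. The key observation is that membership in a language of that form is blind to the order of letters inside each block $w_i$.

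Suppose $(ab)^+ \in \Sigmaint_1[<, +, \equiv]$. By \Cref{lem:sig:normal:form}, $(ab)^+ = K_1 \cup \cdots \cup K_r$, where each $K_j$ has the form
\[
  K_j = \{a_0 w_1 a_1 \cdots w_{m_j} a_{m_j} \mid w_i \in \Sigma^*,\ a_i \in \Sigma,\ \varphi_j(x_{i,\sigma} \mapsto |w_i|_\sigma, y_{i,\sigma} \mapsto |a_i|_\sigma)\}.
\]
Let $M \defeq \max_j m_j$, and fix $N$ with $2N > 2(M+1)$, for instance $N \defeq M+2$. The word $u \defeq (ab)^N$ lies in some $K_j$, so it has a factorization $u = a_0 w_1 a_1 \cdots w_m a_m$ with $m = m_j \le M$ that satisfies $\varphi_j$.

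First I find a long block. Since $|u| = 2N = (m+1) + \sum_i |w_i|$ and there are at most $m \le M$ blocks, pigeonhole gives some $i \in [1..m]$ with $|w_i| \ge 2$. Because $u$ alternates, this $w_i$ contains both an $a$ and a $b$. Note also that $w_i$ is flanked on both sides by the single letters $a_{i-1}$ and $a_i$, so it is a proper inner factor of $u$.

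Next I rearrange $w_i$ into some $w_i'$ with the same Parikh image, $|w_i'|_\sigma = |w_i|_\sigma$ for all $\sigma$, so that $u' \defeq a_0 w_1 \cdots a_{i-1} w_i' a_i \cdots w_m a_m$ contains a factor $aa$ or $bb$.
\begin{itemize}
\item If $|w_i| \ge 3$, take $w_i' = a^p b^q$. One of $p, q$ is at least $2$, so $w_i'$ already contains $aa$ or $bb$.
\item If $|w_i| = 2$, then $w_i \in \{ab, ba\}$, and I let $w_i'$ be the other word. In $u$, the letter $a_{i-1}$ differs from the first letter of $w_i$, hence equals the first letter of $w_i'$. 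So $a_{i-1}w_i'$ begins with a double letter.
\end{itemize}

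The Presburger formula $\varphi_j$ only sees the counts $|w_i|_\sigma$ and $|a_i|_\sigma$, and these are unchanged. So $\varphi_j$ still holds for the new factorization, which gives $u' \in K_j \subseteq (ab)^+$. But every word of $(ab)^+$ alternates, while $u'$ contains $aa$ or $bb$. This is a contradiction.

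The only delicate point is making sure the rearrangement genuinely breaks alternation in every case. The pigeonhole step is what guarantees a block of length at least $2$ that contains both letters. The length-$2$ case is the one where the surrounding letter $a_{i-1}$ has to be used, and that letter always exists because the normal form begins with the single letter $a_0$ before $w_1$.
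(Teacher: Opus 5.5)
Your proof is correct and follows the paper's argument: you decompose via the normal form of the lemma, choose $(ab)^N$ long enough that pigeonhole forces a block $w_i$ of length at least $2$, and then permute letters inside that block, which keeps the counts seen by $\varphi_j$ but breaks alternation. The only difference is cosmetic: the paper simply swaps two adjacent letters of $w_i$ and leaves implicit why alternation breaks, whereas your case split on $|w_i| = 2$ versus $|w_i| \geq 3$, using the flanking letter $a_{i-1}$, makes that step explicit.
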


The class \(\upcom\) is defined similarly as \(\pcom\), except that we require that
every word \(w\) in the marked concatenation has a \emph{unique} decomposition
\(w = w_0a_1w_1\cdots a_nw_n\) with \(w_i \in L_i\).  It is known~\cite{PlaceZeitoun24} that
\(\upcom = \pcom \cap \overline{\pcom}\).  We naturally postulate, in 
\Cref{con:wuPOPA}, that the regular languages of \(\PP[<]\) are exactly \(\upcom\).

\section{Expressiveness of \(\IOPP[\cN]\) and \(\PP[\cN]\) when successor is available}\label{sec:beyond}

We now consider protocols where transitions may be fired only when two agents
are adjacent. Formally, we look at \(\IOPP[\cN]\) and \(\PP[\cN]\) with
\(+1 \in \cN\), where \(+1\) (sometimes written \(\mathbf{succ}\) in the literature) is
the predicate \(\{(x, x+1) \mid x \in \bbN\}\). This study is reminiscent of the
classic study of \(\FO^2[<, +1]\)~\cite{EVW02,LPS10,TW98,WI07} that followed that
of \(\FO^2[<]\), the two-variable fragment of \(\FO[<]\) that is as expressive as
\(\IOPP[<]\).

\begin{example}
 \newcommand{\mArr}[1]{\xrightarrow{\mathmakebox[15pt][c]{\mathclap{#1}}}}
 \newcommand{\qaR}{a^\text{\raisebox{-1pt}{\faAngleRight}}}
 \newcommand{\qbL}{b^\text{\raisebox{-1pt}{\faAngleLeft}}}
 \newcommand{\qaS}{a^\text{\tiny\faHourglassHalf}}
 \newcommand{\qbS}{b^\text{\tiny\faHourglassHalf}}
 \newcommand{\qT}{\text{\scriptsize\faCheck}}
 \newcommand{\qF}{\text{\scriptsize\,\faTimes\,}}
 Recall that the set of all $b$-stable configurations
 of a $\PP[<]$ is subword-closed, and hence regular. The following
 $\PP[+1]$ has a non-regular set of \(\top\)-stable
 configurations:
 \begin{align*}
   \qaS, \qbS &\mArr{\bfsf{true}} \qaR, \qbL &
   \qaR, a  &\mArr{+1} \qT, \qaS &
   \qaR, \qbL &\mArr{+1} \qF, \qF \\
   &&
   b, \qbL &\mArr{+1} \qbS, \qT
 \end{align*}

 Let us set the opinion of each state to $\top$, except for $\qF$. 
 The protocol is constructed so that from a configuration of the form $\qaS a^* b^* \qbS$, one marker moves to the right along the $a$'s, while the other synchronously moves to the left along the $b$'s. If they meet at the frontier between $a$'s and $b$'s, it means there are as many $a$'s and $b$'s, and that state $\qF$ appears. 
 
 The
 set of \(\top\)-stable configurations cannot be regular since, from
 $w \in \qaS a^* b^* \qbS$, the state $\qF$ will appear iff $|w|_a =
 |w|_b$, \eg
 \[
 \qaS a b \qbS \to
 \qaR a b \qbL \to
 \qT \qaS b \qbL \to
 \qT \qaS \qbS \qT \to
 \qT \qaR \qbL \qT \to
 \qT \qF \qF \qT.
 \]
\end{example}

The main result of this section is: 
\begin{theorem} Let \(\cN\) be a set of \(\NSPACE(n)\)-decidable numerical predicates that contains \(+1\). We have \(\IOPP[\cN] = \PP[\cN] = \NSPACE(n)\).
 \end{theorem}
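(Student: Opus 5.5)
The upper bound \(\PP[\cN] \subseteq \NSPACE(n)\) is exactly \Cref{thm:ub}, and \(\IOPP[\cN] \subseteq \PP[\cN]\) holds trivially. It therefore remains to show \(\NSPACE(n) \subseteq \IOPP[+1]\), since \(\IOPP[+1] \subseteq \IOPP[\cN]\) as soon as \(+1 \in \cN\). By \Cref{lem:semidec}, and because \(\NSPACE(n)\) is closed under complement (Immerman--Szelepcsényi), it suffices to prove that every \(L \in \NSPACE(n)\) is \(\IOPP[+1]\)-semi-decidable. I would take a linear-bounded nondeterministic Turing machine \(M\) for \(L\), normalized as follows: it uses exactly the \(n\) input cells (after compressing the constant factor into a larger tape alphabet), it has a unique accepting halting configuration, and it can restart itself nondeterministically. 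Each agent stores its input letter, the current tape symbol of its cell, possibly the head together with the control state, and an opinion bit. The key point is that adjacency lets a single move of the head be simulated locally.

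The main obstacle is that, under immediate observation, a head move must change two agents at once: the cell being left and the cell being entered. I would break each move into a handshake of several one-sided steps using intermediate marker states. Suppose the head sits at agent \(i\) in state \(q\) and \(M\) chooses to write \(c\) and move right to state \(q'\). First, agent \(i\) rewrites itself to a ``pending'' state that records \((c, q')\). Next, agent \(i+1\), observing its left neighbour pending via \(+1\), takes the head in state \(q'\) while remembering that it must acknowledge. Then agent \(i\), observing its right neighbour holding an unacknowledged head, finalizes itself to plain symbol \(c\). Finally, agent \(i+1\), observing that its left neighbour is finalized, clears its acknowledgment flag. Left moves are symmetric, using the predicate \(+1\) with the roles reversed. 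Each step changes only the observer, and the sequence guarantees that exactly one head token exists at all times modulo these transient phases. A boundary marker (first/last agent) cannot be detected directly. Instead I would let the head fail to move when no neighbour responds; since moves are nondeterministic guesses, a blocked move is simply a path not taken.

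The next issue is the initial head placement, since all agents start symmetric. I would let every agent begin as a candidate head and let any candidate observe a candidate immediately to its left and then demote itself. Only the leftmost agent remains a head, and it can never be demoted. A simulation started prematurely by a non-leftmost head is harmless, provided that demotion also erases its work. To handle this, I would add a global reset mechanism: any agent seeing an inconsistency or an error marker restores its cell to its saved input letter, and resets spread along neighbours. This matches the "change of mind" style of \Cref{lem:renaming} and \Cref{prop:ord:semi}.

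For the semi-decision, all agents hold opinion \(\bot\) until the head reaches the accepting state of \(M\). At that point it becomes \(\top\) and a frozen "accept" mark spreads by observation to all agents, after which no transition is enabled, so the resulting configuration is \(\top\)-stable. If the run does not accept, nondeterministic restarts let the head reset the tape, and by fairness every reachable computation, including an accepting one if \(w \in L\), is eventually explored. Conversely, if \(w \notin L\), no accept mark can ever appear, because every accept mark traces back to a genuine accepting computation of \(M\) on the restored input. The delicate verification is that leftover debris from aborted or early simulations can never fabricate an accept mark. I expect this to be the hardest part to argue, and I would handle it with an invariant stating that any cell content differing from the saved input is attributable to the unique surviving head's current simulation.
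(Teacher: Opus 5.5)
Your outer reduction matches the paper's: $\PP[\cN]\subseteq\NSPACE(n)$ by \Cref{thm:ub}, and $\NSPACE(n)\subseteq\IOPP[+1]$ via semi-deciders, closure of $\NSPACE(n)$ under complement, and \Cref{lem:semidec}. The gap is in the simulation of the machine.

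The first problem is the tape boundaries. You concede that the first and last agents cannot be detected, and you let a move off the tape be ``a path not taken.'' Then the simulated machine never learns where its tape ends. An accepting simulation on $w$ only ever touches cells $1,\dots,|w|$. So on any input $wv$, once the extra agents are demoted, the same simulation can be replayed on the prefix, and the frozen accept mark then spreads over $v$ as well. Hence whenever $w$ is accepted, some fair run on $wv$ reaches a $\top$-stable configuration. Your protocol therefore cannot semi-decide any language that is not closed under appending letters, e.g.\ $a^+$ over $\{a,b\}$. A linear-bounded machine genuinely needs end-of-tape detection. The paper's ``stay put at the ends'' convention is harmless only because the machine can mark a cell and test whether the head really moved. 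Since a population can never observe the absence of a neighbour, this information must be supplied some other way.

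The second problem is that your irrevocable acceptance is unsound. Head election is only eventually correct, and no agent can ever know it is the survivor. A candidate at position $i>1$ that has not yet been demoted can start simulating the machine on $w[i..n]$ and reach the accepting state first. It then spreads the frozen accept mark, after which, by your own design, no transition is enabled, in particular no demotion or reset. This can give a $\top$-stable configuration for $w\notin L$, so the invariant you hope for cannot be maintained. Moreover, with the rule ``demote upon observing a \emph{candidate} on the left,'' if agent $2$ demotes before agent $3$, then agent $3$ stays a head forever.

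The missing idea is to guess and verify rather than detect and elect. The paper works over the enlarged alphabet $(P\cup\{-\})\times\Sigma'\times\{\text{fst},-,\text{lst}\}$, where the head position, control state and end markers are part of the (stabilizing) input. It adds error transitions that stay enabled forever whenever the guess is invalid:
\begin{itemize}
\item two heads;
\item two $\text{fst}$ or two $\text{lst}$ agents;
\item an $\text{lst}$ agent with a right neighbour, or an $\text{fst}$ agent with a left neighbour, both tested with $+1$.
\end{itemize}
Consequently, $\top$-stability is reachable only from valid configurations. The verified markers also implement stay-put moves at the ends and anchor a right-to-left reset sweep (\Cref{prop:simulate:tm}). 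The guesses are produced by intersecting with the valid initial configurations and projecting with \Cref{lem:renaming}, which re-guesses whenever $\bot$ is observed. Finally, the two-agent $+1$ steps are made immediate-observation generically by \Cref{lem:PP-to-IOPP-succ}, rather than being interleaved with resets inside the simulation.
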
 
 To show this result, it is enough to prove that
 \(\NSPACE(n) \subseteq \IOPP[+1]\), since \(\PP[\cN] \subseteq \NSPACE(n)\) has been established
 beforehand in \Cref{thm:ub}.

\subsection{$\NSPACE(n) \subseteq \IOPP[+1]$}
\label{ssec:iopp-pp-succ} 
\label{ssec:nspace-in-pp-succ}

We start by showing that transitions using the successor predicate can
be simulated with immediate-observation transitions.

\begin{restatable}{lemma}{lemPPtoIOPPsucc}\labelandarrows{lem:PP-to-IOPP-succ}
  Let $L$ be a $\PP[+1]$-semi-decidable language. There exists a
  $\PP[+1]$-protocol $\cP'$ semi-deciding $L$ and such that each
  $+1$-transition of $\cP'$ is immediate-observation.
\end{restatable}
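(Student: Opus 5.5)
The plan is to replace each $+1$-transition $p, q \xrightarrow{+1} p', q'$ of a given $\PP[+1]$ semi-decider $\cP$ by a short handshake between the two adjacent agents. Every step of the handshake is an immediate-observation $+1$-transition, and the handshake can be rolled back until it commits. Non-$+1$ transitions (those guarded by $\bfsf{true}$) are kept unchanged, since the statement only requires the $+1$-transitions to be immediate-observation.

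\emph{Construction.} The states of $\cP'$ are the states $Q$ of $\cP$ together with auxiliary states $\langle p, t\rangle$ and $\langle q, t\rangle$, one pair for each transition $t = (p, q \xrightarrow{+1} p', q')$. Opinions are inherited from $p$ and $q$ respectively. For such a $t$, the agent at position $x$ in state $p$ observes its right neighbour in state $q$ and moves to $\langle p,t\rangle_1$ (``proposed''). The agent at $x+1$, in state $q$, observes a left neighbour in $\langle p,t\rangle_1$ and moves to $\langle q',t\rangle_2$ (``committed''). The left agent, in $\langle p,t\rangle_1$, observes a right neighbour in $\langle q',t\rangle_2$ and moves to $p'$. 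Finally the right agent, in $\langle q',t\rangle_2$, observes a left neighbour no longer in $\langle p,t\rangle_1$ and moves to $q'$. To avoid deadlock, we also add an abort step: an agent in $\langle p,t\rangle_1$ whose right neighbour is not in $\langle \cdot,t\rangle_2$ may return to $p$. Adjacency guarantees that each pair of neighbours is uniquely determined, so no third agent can interfere with a handshake. However, an agent may sit in a committed state while its other neighbour starts a handshake of its own. This is exactly why we only allow an agent to leave the plain states $Q$ when it is not already involved in a handshake. The agent $x+1$ may be proposed to by $x$ while also proposing to $x+2$; we forbid committing while proposed, or we let the abort step resolve the conflict.

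\emph{Correctness.} We define a projection $\pi$ from configurations of $\cP'$ to configurations of $\cP$. It maps $\langle p,t\rangle_1$ to $p$. It maps a committed pair to the post-state $p', q'$ of $t$, so that the effect of $t$ is applied atomically at the commit step. All other states project to themselves, and uncommitted proposals project to their original state. We will prove two things: every step of $\cP'$ projects to zero or one step of $\cP$, and every configuration of $\cP'$ can reach a clean configuration, with no auxiliary states, having the same projection. Since auxiliary states carry the projected opinion, a configuration is a consensus iff its projection is. It follows that stable configurations correspond. Any reachable configuration can first be cleaned, and conversely runs of $\cP$ lift to runs of $\cP'$. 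Fairness then transfers in both directions. Hence $u \in L$ iff every fair run of $\cP'$ visits a $\top$-stable configuration.

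\emph{Main obstacle.} The delicate point is the interplay of overlapping handshakes along a chain, for instance $x$ proposing to $x+1$ while $x+1$ proposes to $x+2$. We must ensure that cleaning is always possible, so that there is no deadlock, and that a committed pair always projects consistently. We would first try forbidding an agent in any auxiliary state from being the target of a commit, and let proposals abort freely. A commit then happens only on a genuinely idle right neighbour, and its completion depends only on the two agents involved.
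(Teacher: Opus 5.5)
Your proposal is correct and follows the paper's proof. It uses the same propose/acknowledge/update/update handshake along $+1$, the same two clean-up rules (abort an unacknowledged proposal; release a committed right agent once its partner has left the proposed state), the same neighbour-dependent projection, and the same normalize-then-lift argument. One small fix: the projection of $\langle p,t\rangle_1$ depends on its right neighbour, so its opinion $O(p)$ need not match its projected state $p'$, and ``a configuration is a consensus iff its projection is'' is false as stated. Argue instead that $\top$-stability is tested on the clean configuration $\pi(v)$ reached by normalization, and that every auxiliary agent reached from a clean $\top$-stable configuration entered its state from a plain state appearing in a reachable (hence $\top$-consensus) projection.
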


\begin{proof}[Proof sketch.]
  The construction turns a $\PP[+1]$-protocol $\protocol$ into a new $\PP[+1]$-protocol $\protocol'$ by replacing each transition
  $\delta = (a,b)\xrightarrow{+1}(c,d)$ with a short \emph{four-step handshake} along
  the successor relation, using auxiliary markers $a^\delta$ and
  $b^{\mathsf{ack},\delta}$. The four rules are:
  \begin{align*}
    a,b &\xrightarrow{+1} a^\delta,b & 
                                  a^\delta,b &\xrightarrow{+1}  a^\delta,b^{\mathsf{ack},\delta}\\
    a^\delta,b^{\mathsf{ack},\delta} &\xrightarrow{+1}  c,b^{\mathsf{ack},\delta} &
                                                                     c,b^{\mathsf{ack},\delta} &\xrightarrow{+1}  c,d
  \end{align*}
  Intuitively, one agent announces that it wants to perform $\delta$, the neighbor
  acknowledges, and the two agents then update one after the other; additional
  clean-up rules erase incomplete handshakes so that executions cannot block
  when several handshakes overlap.
\end{proof}

\newcommand{\Lmk}[1]{\underline{#1}}   
\newcommand{\Rmk}[1]{\overline{#1}}    

\begin{restatable}{lemma}{lemPPsuccSemiDecidesLTM}\labelandarrows{lem:PP-succ-SemiDecides-LTM}
  The language of a linear-bounded nondeterministic Turing machine is
  $\IOPP[+1]$-semi-decidable.
\end{restatable}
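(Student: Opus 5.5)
The plan is to first build a $\PP[+1]$-protocol that semi-decides $L(M)$ for a linear-bounded nondeterministic Turing machine $M$, and then make it immediate-observation. By \Cref{lem:PP-to-IOPP-succ}, we may assume every $+1$-transition is immediate-observation. It therefore suffices to design the protocol so that every non-successor transition is already immediate-observation; in fact we will avoid $\bfsf{true}$-transitions except for IO propagation of opinions.

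\textbf{Simulation of $M$.} Without loss of generality, $M$ uses exactly $n$ cells, agent $i$ holding cell $i$; tape compression handles $O(n)$ space. We also assume $M$ has a unique accepting configuration reachable only by accepting, e.g.\ one that erases the tape and parks the head at cell $1$ in state $q_{\mathrm{acc}}$. Each agent stores its input letter, the current tape symbol, possibly the head together with the control state, and an opinion bit.

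\textbf{Initialization.} This is the delicate point, since agents cannot know whether they are first. Every agent starts believing it may carry the head at cell $1$. A head-carrying agent that observes, via $+1$, a left neighbour deletes its head and becomes an \emph{error} agent. Error agents trigger a global reset that restores all tape symbols from the saved inputs. Only the leftmost agent can keep a head permanently, but resets may occur while spurious heads still exist.

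A cleaner alternative is to let the first agent be the unique one with no left neighbour. Every agent nondeterministically may \emph{start} a head only while it has no marker; any agent seeing, through $+1$, that it has a left neighbour carrying ``started'' or a head discards its own head and resets. Because resets are always possible and fairness is assumed, from any reachable configuration a configuration with the clean initial tape and a single head at position $1$ is reachable, provided resets are triggered by any anomaly (two heads, head not leftmost). Anomalies are detectable locally: an extra head to the right is seen by its left neighbour's marker. I would add a ``timestamp'' bit that propagates rightwards to certify that no other head exists.

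\textbf{Moves and acceptance.} A head move of $M$ is a pair interaction on adjacent agents: rewrite the symbol, change state, and shift the head to $i\pm1$. Acceptance occurs when $q_{\mathrm{acc}}$ is reached. The head agent then switches to a frozen opinion $\top$, and $\top$ is propagated by IO. At any time, a $\bot$-opinion agent may also nondeterministically reset the computation, so that every nondeterministic branch is eventually explored by fairness.

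\textbf{Correctness.} Let $w\in L(M)$. From any reachable configuration, fairness allows a reset followed by an accepting branch, and this reaches a configuration with all agents frozen at $\top$. Since resets are only enabled from $\bot$ or anomaly states, this configuration is $\top$-stable. Conversely, frozen $\top$ arises only from a genuine run of $M$ from the correct initial tape. The correct initial tape is guaranteed because resets rewrite the tape from the saved inputs and require a unique head.

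\textbf{Main obstacle.} I expect the hard part to be the initialization and reset: ensuring that a spurious acceptance cannot come from a partially reset tape or from multiple heads. I would first attempt to make resets sweep left-to-right with a wave marker. The head is created only when the wave hits the right end, and then returned to cell $1$, which guarantees that the tape is clean and that there is a single head.
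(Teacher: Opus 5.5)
There is a genuine gap, and it is the one you flag yourself: nothing in the proposal makes initialization and reset sound. With $+1$, an agent can learn that it is \emph{not} first (resp.\ \emph{not} last) by meeting a left (resp.\ right) neighbour, but it can never learn that it \emph{is} first or last.

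Your ``started'' marker, read as a persistent claim ``I am first'' that any left neighbour refutes via $+1$, is the right device for the left end. Nothing plays this role for the right end, yet you need it twice. First, the simulation of $M$ needs it for moves at the boundary and for erasing the tape before accepting. Second, your fallback reset needs it (``the head is created only when the wave hits the right end, and then returned to cell $1$'').

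More importantly, the global reset is never specified. Without an anchor where a wave starts and a point where it is known to be complete, two things can go wrong. A fresh head can start computing on a partially restored tape. A stale head standing on the wave front can step onto an already-reset neighbour and survive the reset. So your central soundness claim, that a frozen $\top$ ``arises only from a genuine run of $M$ from the correct initial tape'', is asserted rather than derived, and the ``timestamp bit'' is left unspecified.

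The missing idea, which is the paper's route, is to make position information part of a \emph{guessed} input. Over $\Gamma = (P\cup\{-\})\times\Sigma'\times\{\text{fst},-,\text{lst}\}$, \Cref{prop:simulate:tm} semi-decides, with stabilizing inputs, the valid configurations of $M$ that lead to acceptance:
\begin{itemize}
\item Duplicated heads or endpoint markers are caught by immediate-observation $\bfsf{true}$-rules.
\item A misplaced marker (an fst with a left neighbour, an lst with a right neighbour) is caught by a $+1$-rule.
\item Hence a wrong guess makes $\bot$ recur forever and can never coexist with a $\top$-stable configuration.
\item The guessed markers also anchor the reset: whenever some agent changes its mind on its input, a reset starts at lst and terminates at fst.
\end{itemize}
This protocol is intersected with the valid initial configurations $\Gamma_0\Gamma_1^*\Gamma_2$, obtained from \Cref{prop:ord:semi,prop:io:eqone}. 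The result is projected onto the tape letter with \Cref{lem:renaming}, whose re-guessing upon seeing $\bot$ also supplies the restarts you wanted for exploring other branches of $M$. Only then is it made immediate-observation with \Cref{lem:PP-to-IOPP-succ}, as in your plan, and one-letter words are added by a union. Repairing your design with persistent ``I am first''/``I am last'' claims refutable by a neighbour would essentially reproduce this construction.

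Your worry about stale heads is also well founded beyond your own design. The paper's simulation rules (1)--(4) do not test the reset flag, so a head on the reset front can escape there as well. Forbidding simulation steps that involve an agent in reset mode closes this hole in either construction.
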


\begin{proof}[Proof sketch.]
  Given a linear-bounded nondeterministic Turing machine $M$, we
  construct an $\IOPP[+1]$-protocol $\cP$ that simulates $M$ under
  stabilizing inputs. Thanks to \Cref{lem:PP-to-IOPP-succ}, we only
  need to ensure that $\bfsf{true}$-transitions are
  immediate-observation.

  Let $\Sigma'$ be the tape alphabet of $M$. The protocol $\cP$ works
  over $\Gamma^*$ where $\Gamma \defeq (P \cup \{-\}) \times \Sigma'
  \times \{\text{fst}, -, \text{lst}\}$. For example, $w = (-, a,
  \text{fst}) (p, a, -) (-, b, \text{lst})$ represents the
  configuration of $M$ in state $p$, with the head on the second cell,
  and the tape containing $aab$.

  The protocol $\cP$ simulates $M$ using $+1$-transitions with the
  agent containing the current state $p$. If $M$ accepts, then $\cP$
  spreads $\top$ across the population. There are two challenges:
  \begin{enumerate}
  \item Since we want to semi-decide under \emph{stabilizing inputs},
    $\cP$ must reset the population and the simulation of $M$ whenever
    an agent changes its mind on its input. This is implemented as
    follows. Any agent who has changed its mind can flag the last
    agent. Using $+1$-transitions, this can spread from right to left,
    resetting the population along the way.

  \item If the input of $\cP$ is an invalid configuration of $M$, then
    the population should not reach a stable $\top$-consensus. For
    example, if two agents both represent the head of $M$, then they
    change their belief to $\bot$ upon meeting. Similarly, we must
    detect whether ``fst'' and ``lst'' occur exactly in the first and
    last agent.
  \end{enumerate}

  The resulting protocol $\cP$ does not semi-decide the language of
  $M$, but rather the set of configurations of $M$ leading to
  acceptance. So, we need to intersect with the set of initial
  configurations, and then project onto the second component of
  $\Gamma$. This can be done thanks to
  \Cref{prop:ord:semi,prop:io:eqone,lem:inter:stab,lem:renaming}.
\end{proof}

\begin{corollary}\label{prop:NSPACE-IOPP-succ}
  $\NSPACE(n) \subseteq \IOPP[+1]$.
\end{corollary}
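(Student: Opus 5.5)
The plan is to combine \Cref{lem:PP-succ-SemiDecides-LTM} with the closure of $\NSPACE(n)$ under complement and then apply \Cref{lem:semidec}.

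Let $L \in \NSPACE(n)$, so there is a linear-bounded nondeterministic Turing machine $M$ with $L(M) = L$. By the Immerman--Szelepcsényi theorem, $\NSPACE(n)$ is closed under complement. Hence there is also a linear-bounded nondeterministic machine $\overline{M}$ recognizing $\Sigma^+ \setminus L$. The empty word is disregarded by convention.

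Applying \Cref{lem:PP-succ-SemiDecides-LTM} to $M$ and to $\overline{M}$ gives two $\IOPP[+1]$-protocols: one semi-deciding $L$ and one semi-deciding $\Sigma^+\setminus L$. By \Cref{lem:semidec}, whose statement holds for $\IOPP[\cN]$ and in particular for $\cN = \{+1\}$, these two semi-deciders can be combined into an $\IOPP[+1]$ decider for $L$. Thus $L \in \IOPP[+1]$.

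The step needing care is that the combination in \Cref{lem:semidec} must preserve immediate observation. Its statement explicitly covers $\IOPP[\cN]$, so we simply invoke it. One minor subtlety is matching the linear space bound to the tape length. A machine using space $cn$ can be compressed into $n$ cells by enlarging the tape alphabet, so we may assume $M$ and $\overline{M}$ use exactly $n$ cells, which is the setting of \Cref{lem:PP-succ-SemiDecides-LTM}. Together with \Cref{thm:ub}, this also yields the main theorem of the section: for any set $\cN$ of $\NSPACE(n)$-decidable numerical predicates containing $+1$, we have $\IOPP[\cN] = \PP[\cN] = \NSPACE(n)$.
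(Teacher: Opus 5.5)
Your proof is correct and matches the paper's argument: closure of $\NSPACE(n)$ under complement gives linear-bounded machines for $L$ and $\Sigma^+\setminus L$, \Cref{lem:PP-succ-SemiDecides-LTM} turns each into an $\IOPP[+1]$ semi-decider, and \Cref{lem:semidec}, which preserves immediate observation, combines them into a decider. Your remark on compressing linear space into exactly $n$ cells is a harmless clarification that the paper leaves implicit.
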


\begin{proof}
  Let $L\in \NSPACE(n)$. As $\NSPACE(n)$ is closed under complement, there
  exist two linear-bounded nondeterministic Turing machines $M$ and
  $\overline{M}$ recognizing $L$ and $\Sigma^+ \setminus L$, respectively. By \Cref{lem:PP-succ-SemiDecides-LTM}, we can build two \(\IOPP[+1]\)-protocols, say $\protocol_M$
  and $\protocol_{\overline{M}}$, that semi-decide $L$ and $\Sigma^+ \setminus L$,
  respectively. \Cref{lem:semidec} allows to conclude.
\end{proof}

\section{Decidability of checking whether a population protocol is a decider}
\label{sec:decider-dec}

A good portion of our results assumes that a given population protocol is a
decider or a semi-decider in order to \emph{construct} an object (a Turing
machine in \Cref{thm:ub}, a formula in \Cref{thm:iopp:da}, etc.).  To fully
understand how constructive these proofs are, we ought to study whether it is
decidable, given a population protocol, to check if it is a decider. In other
words, \emph{is the syntax of deciders decidable?} From the perspective of formal verification, this is also a natural problem, known as the well-specification problem. It is decidable for $\PP[\emptyset]$~\cite{EGLM15}, and undecidable when protocols are extended with an infinite alphabet (where agents carry data from an infinite domain, and transitions can use equality constraints between those data), except for immediate-observation protocols~\cite{BGKMWW24}.

We show that the problem is
undecidable already for \(\PP[<]\) and \(\IOPP[+1]\), and provide a natural conjecture that would entail that
the problem is decidable for \(\IOPP[<]\).

\subsection{The syntax of \(\PP[<]\), $\IOPP[+1]$ and $\PP[+1]$ deciders are undecidable}\label{ssec:syntax:undec}

First, we consider the \emph{emptiness problem}: given a
\(\PP[\mathcal{N}]\) protocol $\mathcal{P}$, is there an execution from an initial configuration
$u$ to a $\top$-stable one?
We use a technique from~\cite{BGKMWW24} to show a general reduction to the syntax problem.

\begin{restatable}{lemma}{lemEmptinessToSyntax}\labelandarrows{lem:emptiness-to-syntax}
  The emptiness problem for \(\PP[\mathcal{N}]\) (\resp \(\IOPP[\mathcal{N}]\)) reduces to deciding the
  complement of the syntax of \(\PP[\mathcal{N}]\) (\resp \(\IOPP[\mathcal{N}]\)) deciders.
\end{restatable}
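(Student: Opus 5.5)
Given a protocol $\cP = (Q, \Sigma, O, \Delta)$ in $\PP[\cN]$ (\resp $\IOPP[\cN]$), we build a protocol $\cP'$ in the same class such that $\cP'$ is \emph{not} a decider iff some initial configuration of $\cP$ reaches a $\top$-stable configuration. This mirrors the technique of~\cite{BGKMWW24}. The idea is to let $\cP'$ run $\cP$ and, on top of it, carry a deliberately non-stabilizing ``oscillation'' gadget. The gadget is activated only once the simulation has reached a $\top$-stable configuration. Otherwise $\cP'$ converges to a $\bot$-stable consensus.

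Concretely, I take $Q' \defeq Q \cup (Q \times \{\top,\bot\})$ with the same initial states $\Sigma$. States of $Q$ form the ``simulation mode'', and all of them get opinion $\bot$ in $\cP'$. In simulation mode $\cP'$ has all transitions of $\Delta$ unchanged, together with immediate-observation transitions with predicate $\bfsf{true}$: an agent in state $q$ observing any agent may nondeterministically move to $(q,\bot)$, which also has opinion $\bot$. States $(q,o)$ have opinion $o$. Among them, $\cP'$ has the following immediate-observation transitions with $\bfsf{true}$: $(q,\bot)$ observing $(p,\bot)$ may flip to $(q,\top)$, and $(q,\top)$ observing anything may flip to $(q,\bot)$. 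An agent in $(q,\cdot)$ observing an agent in a simulation state $p$ may flip to $(q,\bot)$. There are no further transitions, and in particular the $\cP$-component is frozen once marked.

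Hence no transition of $\cP'$ changes the underlying $Q$-word except those of $\Delta$. Every added transition is immediate-observation, so $\cP'$ lies in $\IOPP[\cN]$ whenever $\cP$ does.

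I then argue the two directions.

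First, suppose no initial configuration of $\cP$ reaches a $\top$-stable configuration. I would like to show that $\cP'$ is a decider whose output is always $\bot$. From any reachable configuration, every agent can be moved to a marked state $(q,\bot)$, and the resulting configuration is $\bot$-stable. But the flip rule $(q,\bot) \to (q,\top)$ breaks this, since that configuration is no longer stable. So I have to refine the gadget. The flip to $(q,\top)$ should be allowed only when the frozen underlying word is $\top$-stable for $\cP$, and this is not locally checkable.

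The fix I would try is to make the oscillation self-generated by the $\top$-opinion states of $\cP$. A marked agent $(q,o)$ with $O(q)=\top$ may toggle $o$ freely, while $(q,\cdot)$ with $O(q)=\bot$ is forced to $\bot$ and spreads $\bot$ to everyone. In addition, marked agents may unmark back to $q$, so the $\cP$-simulation is never truly frozen. If the underlying $Q$-word can ever reach a word with a $\bot$-opinion state, then $\bot$ can spread, and I would like all bottom SCCs to be $\bot$-stable.

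If instead $\cP$ reaches a $\top$-stable word $v$, then every reachable underlying word consists only of $\top$-opinion states. The toggling then produces a bottom SCC containing both a $\top$-consensus and a $\bot$-consensus, so $\cP'$ is not a decider.

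Conversely, if no $\top$-stable word is reachable, I need every bottom SCC of $\cP'$ to be $\bot$-stable. From any reachable $Q$-word a non-$\top$-consensus is reachable, and I would use this to argue that $\bot$ can always be re-injected and absorbed.

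The main obstacle is exactly this converse. Absorbing $\bot$ permanently requires a terminal $\bot$ sink, yet the sink must not be enterable when $\cP$ is $\top$-stable. I would resolve this by letting an agent enter an absorbing state $q_\bot$ only upon observing a state of opinion $\bot$ in the $\cP$-simulation, with $q_\bot$ then converting everybody, as in \Cref{ex:ab-star}. That state is unreachable from $\top$-stable words and reachable otherwise, which yields the desired equivalence.
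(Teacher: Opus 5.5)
Your final design has four ingredients: marked copies whose opinion toggles while the underlying $\cP$-state has opinion $\top$; an unmarking rule; an absorbing $q_\bot$ that an agent enters only by observing a state $q$ with $O(q)=\bot$; and a rule by which $q_\bot$ converts everyone. This design behaves correctly on inputs with at least two agents. However, the claimed equivalence fails on one-agent inputs, which you never treat.

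If some $a\in\Sigma$ has $O(a)=\top$, the one-agent configuration $a$ admits no transition. It is therefore an initial $\top$-stable configuration, and $\cP$ is non-empty. In $\cP'$ that configuration also admits no transition, so it can never show your oscillation. Concretely, take $Q=\{a,r\}$, $\Sigma=\{a\}$, $O(a)=\top$, $O(r)=\bot$, and the immediate-observation rules $a,a\xrightarrow{\bfsf{true}}a,r$ and $r,a\xrightarrow{\bfsf{true}}r,r$. Here $\cP$ is non-empty only because of the input $a$: no input of length $n\geq 2$ reaches a $\top$-stable configuration. So in your $\cP'$, every configuration reachable on $n\geq 2$ agents can reach $q_\bot^n$, while one-agent inputs are trivially stable. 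Hence $\cP'$ is a decider, and $\cP\mapsto\cP'$ is not a reduction.

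The paper closes exactly this hole with a preliminary syntactic case split. If some initial state has opinion $\top$, it outputs a fixed non-decider. Otherwise every witness of non-emptiness has length at least $2$. You need this step, or an equivalent one.

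Beyond that, the proposal never settles on a single construction. You admit the first gadget is wrong, and one of its rules (the $\cP$-component being frozen once marked) contradicts your later unmarking rule. When you write it up, list the final rules once.

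State explicitly that entering $q_\bot$ is conditioned on the $\cP$-opinion $O(q)=\bot$. In your design every simulation state has $\cP'$-opinion $\bot$. If you conditioned on that instead, $q_\bot$ would be enterable from $v$, and the non-decider direction would collapse.

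The converse also needs an easy observation that you leave implicit. In a reachable $\cP'$-configuration without $q_\bot$, the underlying $Q$-word is reachable in $\cP$, because marking, unmarking and toggling never change it. This is what lets you invoke the hypothesis, reach a $\bot$-opinion state, and then reach $q_\bot^n$.

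With these repairs, your route is a valid alternative to the paper's. You witness non-decidership by an oscillating bottom component reachable from $v$. The paper instead adds barred initial copies $\overline{a}$ of opinion $\bot$. Then $\overline{u}$ can reach both the $\bot$-stable configuration $q_\bot^{|u|}$ and $v$. The configuration $v$ stays $\top$-stable because every added rule requires a barred or $\bot$-opinion agent. The paper's approach needs no marking or toggling gadget at all.
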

\begin{proof}[Proof sketch.]
From a protocol $\cP=(Q,\Sigma,O,\Delta)$, we construct a protocol $\cP'$ such that
$\cP'$ is \emph{not} a decider iff $\cP$ can reach a $\top$-stable configuration.
The construction adds a fresh sink state $q_\bot$ with $O(q_\bot)=\bot$, and for every
state $q$ with $O(q)=\bot$ we add a transition allowing an agent in $q$ to switch to $q_\bot$.

As a consequence, from any configuration that contains a $\bot$-agent, a fair execution
can drive the system to the $\bot$-stable consensus $q_\bot^*$.
On the other hand, if a $\top$-stable configuration is reachable in $\cP$, then the
same configuration is reachable in $\cP'$ and remains $\top$-stable there: none of
the added transitions to $q_\bot$ are enabled from $\top$-states.
Therefore $\cP'$ is not a decider.

Conversely, if $\cP$ has no reachable $\top$-stable configuration, then no fair run of
$\cP'$ stabilizes to $\top$.  Every fair run stabilizes
to $q_\bot^*$ and $\cP'$ decides the empty language.
The construction preserves immediate observation, as each new transition updates at
most one agent.
\end{proof}

This reduction, combined with the translation from linear-bounded Turing machines to \(\IOPP[+1]\) from \Cref{sec:beyond}, already yields the following result.

\begin{restatable}{corollary}{corSyntaxSucc}\labelandarrows{cor:syntax:succ}
	The syntax of \(\IOPP[+1]\) and $\PP[+1]$ deciders are undecidable.
\end{restatable}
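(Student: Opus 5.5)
We reduce from the emptiness problem using \Cref{lem:emptiness-to-syntax}: it suffices to show that emptiness is undecidable for \(\IOPP[+1]\). Since every \(\IOPP[+1]\) protocol is also a \(\PP[+1]\) protocol, and the construction of \Cref{lem:emptiness-to-syntax} preserves immediate observation, the same reduction then yields undecidability of the complement of the syntax of deciders for both classes, and hence of the syntax itself.

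For undecidability of emptiness, we reduce from the emptiness problem for linear-bounded nondeterministic Turing machines, which is undecidable (it is the emptiness problem for context-sensitive languages). Given such a machine \(M\), we apply the effective construction of \Cref{lem:PP-succ-SemiDecides-LTM} to obtain an \(\IOPP[+1]\) protocol \(\cP_M\) that semi-decides \(L(M)\). By the definition of semi-deciding, for every input \(u\), some (equivalently, every) fair run from \(u\) visits a \(\top\)-stable configuration iff \(u\in L(M)\). Moreover, a finite execution from \(u\) reaching a \(\top\)-stable configuration exists iff some fair run from \(u\) visits one, since any finite run extends to a fair run. Hence \(\cP_M\) has an execution from an initial configuration to a \(\top\)-stable one iff \(L(M)\neq\emptyset\).

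Composing the two reductions, \(M\mapsto\cP_M\mapsto\cP_M'\) is computable, and \(\cP_M'\) is not a decider iff \(L(M)\ne\emptyset\). So deciding the syntax of \(\IOPP[+1]\) deciders would decide emptiness of linear-bounded automata, which is impossible.

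The main point to verify is that \Cref{lem:PP-succ-SemiDecides-LTM} is effective. This includes the closure lemmas it invokes (\Cref{lem:inter:stab,lem:renaming,lem:PP-to-IOPP-succ}), all of which are explicit syntactic constructions. A second point is that the emptiness notion required by \Cref{lem:emptiness-to-syntax} (reachability of a \(\top\)-stable configuration) matches the semi-decider semantics. This holds by the fairness argument above, so I expect no real obstacle beyond bookkeeping.
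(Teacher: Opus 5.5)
Your proposal is correct and follows the paper's proof: it reduces emptiness of linear-bounded automata to emptiness of \(\IOPP[+1]\) protocols through the effective construction of \Cref{lem:PP-succ-SemiDecides-LTM}, then applies \Cref{lem:emptiness-to-syntax}. Two of your remarks are left implicit in the paper's short proof. First, a semi-decider suffices because, under fairness, reachability of a \(\top\)-stable configuration coincides with membership in the semi-decided language. Second, the \(\PP[+1]\) case follows because the constructed protocol is in particular a \(\PP[+1]\) protocol.
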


\begin{restatable}{theorem}{thmSyntaxPP}\labelandarrows{thm:syntax:pp}
  The emptiness problem is undecidable for \(\PP[<]\).  Hence the syntax of
  \(\PP[<]\) deciders is also undecidable.
\end{restatable}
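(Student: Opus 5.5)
The second sentence follows from the first by \Cref{lem:emptiness-to-syntax}: the reduction sends a protocol $\cP$ to a protocol $\cP'$ over the same numerical predicates. So the real work is to show that emptiness is undecidable for $\PP[<]$. I would reduce from a standard undecidable problem with a natural ``ordered'' flavour. My first choice is the halting problem for two-counter (Minsky) machines. The alternative is the Post correspondence problem, since a PCP solution is essentially a matching between two ordered sequences, and $<$-transitions can enforce order-respecting matchings.

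The plan is to have the input word itself encode a \emph{guessed} candidate witness: a computation of the Minsky machine written as a sequence of blocks $c_0 \# c_1 \# \cdots \# c_k$, each block storing the counter values in unary, say $a^{x}b^{y}$ together with a control-state letter. The protocol accepts, that is, can reach a $\top$-stable configuration, iff the witness is correct.

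The checks split into two kinds. Local, regular checks are that the format lies in a $\DA$-like pattern, that the first block is initial, and that the last block is halting. These are handled with the $\IOPP[<]$ tools of \Cref{sec:iopp}: misorder detection as in \Cref{prop:ord:semi}, and spreading $\bot$ upon detection. The nonregular check is that consecutive blocks are related by the instruction, i.e.\ $x_{i+1}=x_i\pm 1$ or $x_{i+1}=x_i$. For this I would use a non-IO matching mechanism in the spirit of the halving rule~(1) of \Cref{ex:median-language}. Each $a$ of block $i$ pairs, via a $<$-transition, with an unpaired $a$ appearing later, and both agents get marked. The delimiter agent of block $i+1$ keeps the state of block $i$'s instruction and holds a single ``extra/deficit'' token. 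A pairing between block $i$ and block $i+1$ can be made order-respecting by first letting each agent learn its block parity via comparisons with the $\#$-agents (a nondeterministic guess, refuted by a $<$-observation of an inconsistent neighbour, which triggers $\bot$).

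Since we only need \emph{existence} of a run to a $\top$-stable configuration, nondeterminism is harmless: a correct witness admits a run in which every agent is correctly matched and no error transition remains enabled. Such a configuration then has every agent in a $\top$ state with no $\bot$-producing transitions enabled, hence it is $\top$-stable. Conversely, a $\top$-stable configuration must have no enabled error-detection transitions, so every agent has been matched consistently, and the input encodes a genuine halting computation.

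The main obstacle is this converse: ensuring that a $\top$-stable configuration certifies the matching, even though agents cannot tell when matching is complete and $<$ cannot single out the ``next'' block. I would enforce stability-based checks that become permanently enabled on failure. An unmatched $a$ in a block that claims to be finished can always fire a transition to $\bot$ with any later $\#$-agent. Cross-block mismatches are caught by requiring that matched partners carry the same block-parity tag and are separated by exactly one $\#$-agent of the appropriate parity. This constraint is checkable with $<$ because each agent stores the parity of its own block and of its partner's block, and a third agent lying strictly between them with an incompatible tag can trigger $\bot$. If getting this exactly right proves delicate, the fallback is PCP: the tiles are guessed as letters, top and bottom strings are written as marked sequences, and the $k$-th top letter is matched with the $k$-th bottom letter by the same order-respecting pairing. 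Any crossing pair can then be detected by a $<$-comparison, giving a $\bot$ transition that stays enabled.
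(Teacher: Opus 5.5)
Your reduction of the second sentence to the first via \Cref{lem:emptiness-to-syntax} is right, and PCP is what the paper reduces from. The gap is in the soundness direction of your construction, exactly where you place the ``main obstacle''.

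You want a $\top$-stable configuration to certify that the pairing respected block boundaries (or order), through error transitions that ``stay enabled'' on failure. But a transition involves only two agents. Once $x$ and $y$ are matched, neither retains anything identifying the other: a parity tag is constant information shared by unboundedly many agents. So ``a third agent lies strictly between $x$ and its partner'' and ``two matched pairs cross'' are three- and four-agent patterns that no transition can observe.

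Any check attempted during the run is also optional. Emptiness only asks for \emph{some} run, so a run can match across non-adjacent blocks, never fire the check, and end in a configuration carrying no trace of the error. Parity tags only exclude same-parity jumps. For example, counter values $1,0,0,1$ in four consecutive blocks with zero tests in between admit a complete, tag-consistent matching: block $0$'s $a$ is matched with block $3$'s $a$. Zero tests can thus be faked, and undecidability of Minsky machines hinges precisely on them.

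Structurally, by \Cref{lem:b-consensuses} the $\top$-stable configurations of any $\PP[<]$ form a subword-closed set. Static checks on the final configuration can therefore only enforce downward-closed conditions on per-agent states. ``Partners lie in adjacent blocks'' and ``block parities alternate'' are not of this kind, and there is no ``neighbour'' in $\PP[<]$ to observe. Relatedly, your block format, with alternating control letters and $\#$'s, is an $(ab)^+$-type pattern outside $\DA$. So the $\IOPP[<]$ tools of \Cref{sec:iopp} do not perform those ``local'' checks either.

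The missing idea is to put the certificate in the \emph{history} of the run rather than in the final configuration. The paper uses a unique head that only moves rightwards, e.g.\ $a^{next}, c \xrightarrow{<} q_\top, c^{head}$, and gives opinion $\bot$ to every state except the inert $q_\top$. Only agents to the right of the head can ever be modified, so any agent it skips stays in a $\bot$-state forever. Reaching $q_\top^*$, which is trivially $\top$-stable, therefore certifies a gap-free left-to-right scan.

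On $\#_1 v \#_2 u \#_3$, one head sweeps $v$ and another sweeps $u$. They interact in lockstep via $a^{head}, (b,\pi) \xrightarrow{<} a^{next}, (b,\pi a)$ and hand over at $(b, h_i(b))$, which certifies $h_i(u) = v$ letter by letter. A conjunctive product of an $\IOPP[<]$ decider for the $\DA$ format $\#_1 A^* \#_2 B^* \#_3$ with the protocols for $h_1$ and $h_2$ completes the reduction. Your Minsky encoding could be repaired along the same lines, using lockstep heads sweeping adjacent blocks with each agent passed twice. As written, though, the converse does not go through.
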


\begin{proof}[Proof sketch.]

	We reduce from the Post correspondence problem.
	Specifically, we take two homomorphisms $h_1, h_2 \colon B^* \to A^*$ and construct a $\PP[<]$ that can reach a $\top$-stable configuration if and only if there is a word $u$ such that $h_1(u) = h_2(u)$.
	We obtain it as a product of three protocols. One makes sure that we can only reach a $\top$-stable configuration from an initial one of the form $\#_1 v \#_2 u \#_3$ with $v \in B^*$ and $u \in A^*$.
	The two others check that we can reach a $\top$-stable configuration only when $h_1(u) = v$ and $h_2(u) = v$, respectively.
	Each one does so by making agents simulate two reading heads going through $u$ and $v$ in lockstep and verifying that $h_i(u) = v$.
	Initially all agents have opinion $\bot$, and in order to change it they must carry the reading head at some point, hence we cannot skip any letter in order to reach a $\top$-stable configuration.
	
	Hence, a $\top$-stable configuration is reachable iff there exist $u, v$ with $h_1(u) = v = h_2(u)$. 
\end{proof}

\subsection{The syntax of \(\IOPP[<]\) deciders is decidable, conditionally}
\label{ssec:syntax:dec}

We present a conjecture on the reachability relation of \(\IOPP[<]\), then show
that it entails decidability of the syntax of their deciders:
\begin{conjecture}\label{conj:DAtoDA}
  The set of configurations reachable \emph{from} a $\DA$ language in an
  \(\IOPP[<]\) protocol is also a $\DA$ language.
\end{conjecture}

\begin{restatable}{theorem}{thmSyntaxIOPP}\labelandarrows{thm:syntax:iopp}
  If \Cref{conj:DAtoDA} holds, then the syntax of \(\IOPP[<]\) deciders is
  decidable.
\end{restatable}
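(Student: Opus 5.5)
Assuming \Cref{conj:DAtoDA}, the idea is to reduce the decider property to a finite number of inclusion tests between effectively constructible regular languages. Fix an \(\IOPP[<]\) protocol \(\cP=(Q,\Sigma,O,\Delta)\). By \Cref{lem:b-consensuses}, the set \(S_b\) of \(b\)-stable configurations is subword-closed and computable. It is therefore a regular language, given by a finite union of expressions \(A_1^*B_1^\eps\cdots A_k^*B_k^\eps\). Its complement \(N_b \defeq Q^+\setminus S_b\) is also regular and computable. Let \(\mathrm{Post}^*(X)\) denote the set of configurations reachable from \(X\).

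By definition, \(\cP\) is a decider iff for every input \(w\in\Sigma^+\) there is \(b\) such that every \(u\) with \(w\to^* u\) can reach \(S_b\). I first want to eliminate the quantification over \(b\), which depends on \(w\). Suppose some \(u\) reachable from \(w\) can reach \(S_\top\). Then, in a decider, no reachable configuration can reach \(S_\bot\): otherwise fair runs would exist that visit \(S_\top\) and runs that visit \(S_\bot\), and a configuration cannot be both \(\top\)-stable and \(\bot\)-stable. Hence \(\cP\) is a decider iff both of the following hold:
\begin{enumerate}
\item every configuration in \(\mathrm{Post}^*(\Sigma^+)\) can reach \(S_\top\cup S_\bot\);
\item no \(w\in\Sigma^+\) reaches both a \(\top\)-stable and a \(\bot\)-stable configuration.
\end{enumerate}
For the converse direction, take any \(u\) reachable from \(w\). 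By condition~1, \(u\) reaches some \(S_b\). By condition~2, this \(b\) is the same for every such \(u\). This gives exactly the definition of a decider.

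The key tool is therefore \emph{backward} reachability. For condition~1 I need the set \(\mathrm{Pre}^*(S_\top\cup S_\bot)\). I would obtain it by reversing the protocol: the reverse of an IO transition \(a,b\xrightarrow{P}a,c\) is \(a,c\xrightarrow{P}a,b\), which is again immediate-observation over \(<\) (the observer's state is unchanged, so the reverse step is valid). The set \(S_\top\cup S_\bot\) is subword-closed, so it is a finite union of languages \(A_1^*B_1^\eps\cdots\). These piecewise-testable-like languages lie in \(\Sigma_2\); I need them in \DA. Here I would exploit the fact that for a subword-closed language the complement is upward closed, a finite union of \(Q^*a_1Q^*\cdots a_kQ^*\), which is also in \(\Sigma_2\). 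So the set is in \(\Delta_2=\DA\). Applying \Cref{conj:DAtoDA} to the reversed protocol shows that \(\mathrm{Pre}^*(S_\top\cup S_\bot)\) is in \DA, and likewise \(\mathrm{Post}^*(\Sigma^+)\) is in \DA, since \(\Sigma^+\) is clearly in \DA.

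The main obstacle is \emph{effectiveness}: the conjecture only asserts membership in \DA, not that an automaton can be computed. I would handle this by enumeration, using semi-algorithms in both directions.
\begin{itemize}
\item Enumerate candidate \(\Delta_2\) descriptions \(R\), for instance as pairs of \(\Sigma_2\)/\(\Pi_2\) formulas, or as \DA automata.
\item Check that \(R\) is closed under the one-step relation. For \(\mathrm{Post}^*\) this means checking the inclusion \(\mathrm{Post}(R)\subseteq R\). Whether \(\mathrm{Post}(R)\) is regular and computable is the step I expect to need care: I would build it with a transducer that guesses two positions \(i<j\), which is rational for the order predicate. The check also requires that \(R\) contain the initial set.
\item Confirm minimality by checking that every element of \(R\) is indeed reachable. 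This direction is the delicate one.
\end{itemize}
Rather than proving minimality, I would certify conditions~1 and~2 directly via inductive invariants:
\begin{itemize}
\item Condition~1 holds iff there exist regular \(I\supseteq\mathrm{Post}^*(\Sigma^+)\) and \(J\subseteq\mathrm{Pre}^*(S)\) with \(I\subseteq J\). Take \(I\) inductive and \(J\) backward-built, each certifiable: \(J\) can be certified by a finite bound on the number of steps per regular piece. The exact sets are \DA by the conjecture, so such certificates exist whenever the condition holds.
\item Condition~1 fails iff some reachable configuration cannot reach \(S\). This is witnessed by enumerating a finite run together with an inductive regular set disjoint from \(S\) that is closed under \(\mathrm{Post}\); such a set exists by the conjecture applied to the singleton, which is in \DA.
\item Condition~2 is handled the same way: its failure is witnessed by finite runs, and when it holds it is certified by inductive invariants separating the two reachable sets.
\end{itemize}
Running all these semi-procedures in parallel decides the problem.
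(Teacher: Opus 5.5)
There is a genuine gap, and it is at the hard part of the problem: certifying that condition~1 \emph{holds}. Your certificate needs the inclusion $J\subseteq\mathrm{Pre}^*(S_\top\cup S_\bot)$ for a candidate regular $J$. This is a liveness statement: every configuration in $J$ must actually reach a stable one.

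Inductive invariants cannot establish it. A set that contains $S$ and is closed under predecessors is a \emph{superset} of $\mathrm{Pre}^*(S)$, not a subset. ``A finite bound on the number of steps per regular piece'' only yields $J\subseteq\mathrm{Pre}^{\le k}(S)$ for a fixed $k$, and that is false in general, because stabilization time grows with the population. Already with the single immediate-observation rule $b,a\xrightarrow{\bfsf{true}}b,b$ (opinions $\top$ for $a$, $\bot$ for $b$), reaching the only stable configuration $b^{n+1}$ from $ba^n$ takes exactly $n$ steps. Hence $\mathrm{Pre}^*(S)\neq\mathrm{Pre}^{\le k}(S)$ for every $k$.

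Knowing from the conjecture (applied to the reversed protocol) that $\mathrm{Pre}^*(S)$ is in \DA guarantees that a correct candidate appears in your enumeration. It gives you no way to recognize that candidate.

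The rest of your plan is sound:
\begin{itemize}
\item the reformulation via conditions~1 and~2;
\item checking post-closure with a transducer that guesses two positions;
\item witnessing failures by finite runs (no conjecture is needed there, since each length has a finite configuration graph);
\item certifying condition~2 by predecessor-closed over-approximations of $\mathrm{Pre}^*(S_b)$.
\end{itemize}

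The missing idea is the paper's small-counterexample property. If $K\in\DA$ and $K\not\subseteq\mathrm{Pre}^*(C_b)$, then some counterexample has computable bounded length. The proof combines three ingredients.
\begin{enumerate}
\item A Ramsey-based pumping property of \DA: there is a computable $m$ such that $u_1\cdots u_n\equiv_K u_1\cdots u_n\,z\,u_1\cdots u_n$ whenever $n\ge m$ and $\emptyset\neq\alpha(u_1)=\cdots=\alpha(u_n)\supseteq\alpha(z)$.
\item The unavoidability of such patterns in long words, i.e.\ the claim inside the proof of \Cref{lem:reg}.
\item The argument of \Cref{bigpump}. By \Cref{lem:forward-extension} and the $A_1^*B_1^{\varepsilon}\cdots A_k^*B_k^{\varepsilon}$ shape of the subword-closed set $C_b$, if the shortened word $x\,u_1\cdots u_n\,y$ reaches $C_b$, then so does $x\,u_1\cdots u_n\,z\,u_1\cdots u_n\,y$.
\end{enumerate}
So a long counterexample can be shortened while staying in $K$ and outside $\mathrm{Pre}^*(C_b)$. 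The inclusion then reduces to finitely many words, each checked in a finite configuration graph.

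With this lemma applied to $K=\mathrm{Post}^*(\Sigma^+)$, which is in \DA by the conjecture, your decomposition would go through. The paper instead applies it to invariants built from $\mathrm{Post}^*(L)$ and $\mathrm{Post}^*(\Sigma^+\setminus L)$, using \Cref{thm:iopp:da}. Without this lemma, the ``is a decider'' side is not semi-decidable as you have set it up.
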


\begin{proof}[Proof sketch.]
	We use two semi-decision procedures.  The first one checks, for each length $n$, if the protocol is a decider on words of length $n$. If the input protocol is not a decider, we will observe it for some $n$.
	The second one looks for invariants witnessing that the protocol is a decider. We look for two disjoint languages $K_\top, K_\bot$ such that every input word is in one of the two, they are closed under transitions of the protocol, and from everywhere in $K_b$ we can reach a $b$-stable configuration.
	Assuming the conjecture, we can assume that those invariants are in \DA.
	We show that we can enumerate potential regular invariants and check the requirements, yielding the second semi-decision procedure.
\end{proof}

In fact, one could slightly strengthen the statement above: it suffices to know that the set of configurations reachable from a $\DA$ language in $\protocol$ (instead of in every \(\IOPP[<]\)-protocol) is also a $\DA$ language, to be able to check whether $\protocol$ is a decider. 
This means that if we can prove \Cref{conj:DAtoDA} over only a subclass of \(\IOPP[<]\)-protocols, then we know that the syntax of that subclass is decidable.

\section{Open questions}

The most tantalizing open question left open by this work is the
characterization of \(\PP[<]\).  We conjecture that the logic and automata models
introduced in \Cref{sec:lbppo} are tight:
\begin{conjecture}\label{con:wuPOPA}
  \(\PP[<]\) is the class of languages recognized by
  \(\Deltaint_1[<, +, \equiv]\) and the class of languages recognized by weakly
  unambiguous poPA.  The regular languages of \(\PP[<]\) are exactly \(\upcom\).
\end{conjecture}

Conjecture~\ref{con:wuPOPA} would entail in particular that weakly unambiguous poPA
are closed under complement.  We note that it is not known whether weakly
unambiguous PA themselves are closed under complement, but it may be easier to
show such closure under the partially ordered assumption.

\Cref{conj:DAtoDA} is also very natural and left open.

\bibliography{references}

\clearpage
\appendix

\section{Appendix}

\subsection{Missing proofs from \Cref{ssec:semidec}}

\lemSemidec*

\begin{proof}
  $\Rightarrow$) This is immediate, since a protocol deciding $L$ is a semi-decider for $L$,
  and the complement protocol, obtained by swapping \(\top\) and \(\bot\) in the opinion
  function, is a semi-decider for $\Sigma^+ \setminus L$.

  $\Leftarrow$) This requires combining the two semi-deciding protocols $\cP_+, \cP_-$, for $L$
  and $\Sigma^+ \setminus L$ respectively, into a protocol $\cP$.  We let \(\cP_s = (Q_s, \Sigma,
  O_s, \Delta_s)\), for \(s \in \{+, -\}\), and we define \(\cP = (Q, \Sigma', O, \Delta)\) as follows.

  We let \(Q = Q_+ \times Q_- \times \{+, -\}\), with $\Sigma' = \set{(a,a,s) \mid a \in \Sigma, s = + \text{ iff } a \in L }$.  We identify \(a \in \Sigma\) with the state \((a, a, s) \in Q\)
  where \(s = +\) iff \(a \in L\).  By this identification, the protocol \(\cP\) recognizes a language over $\Sigma$ and is
  correct on single agent runs by definition (since with a single agent no transition can be
  taken).
  The third component is called the
  \emph{belief} of the agent.  The function \(O\) maps each \((q_+, q_-, s)\) to
  \(O_+(q_+)\) if \(s = +\), and \(\neg O_-(q_-)\) otherwise.  Finally, the transitions
  of \(\cP\) are of two kinds:
  \begin{itemize}
  \item \emph{(Simulation.)} A transition of \(\cP_+\) or \(\cP_-\) can be used, on
    the appropriate component \(Q_+\) or \(Q_-\). The two other components stay the same.  
    
  \item \emph{(Flip.)} If two agents have different beliefs, then one of them can flip its belief.  The 
    \bfsf{true} numerical predicate is used for these transitions.
    An agent in state $(q_+, q_-, s)$ can also flip its belief by itself if $O_s(q_s) = \bot$.
  \end{itemize}
  
  Consider \(u \in \Sigma^+\) of length at least 2.  Assume that \(u \in L\), the case where
  \(u \notin L\) being symmetric.
  Let \(v \in \post^*(u)\). Let $v_+$ and $v_-$ be the projections of $v$ on $Q_+$ and $Q_-$ respectively.
  Observe that for both $s \in \set{+.-}$, $v_s$ is reachable from $u$ in $\cP_s$.
  In consequence, since $u \in L$, there is an execution of $\cP_+$ from $v_+$ to a $\top$-stable $w_+$.
  There is also an execution of $\cP_-$ from $v_-$ to a configuration $w_-$ where at least one agent has opinion $\bot$.
  By following both those executions one after the other, we can go from $v$ to a configuration $w$ in $\cP$, so that the projection of $w$ on $Q_+$ and $Q_-$ are $w_+$ and $w_-$.
  
  \begin{itemize}
  	\item If the agent which has opinion $\bot$ in $w_-$ has opinion $-$, it can switch it to $+$.
  	
  	\item Then, we can flip all the other agents to \(+\), by making them observe this agent. The resulting configuration is  $\top$-stable.
  \end{itemize}
  
  As mentioned, the case \(u \notin L\) is symmetric, showing that \(\cP\) is a decider.
  Furthermore, observe that if both $\cP_+$ and $\cP_-$ are \emph{immediate
  observation} protocols, then so is $\cP$.
\end{proof}

\lemInterSemi*
\begin{proof}
  The proof is standard, and we will give, in \Cref{lem:inter:stab}, a
  slightly more intricate proof with the same ideas, so we omit it here.
\end{proof}

\subsection{Missing proofs from \Cref{ssec:stab}}

\propOrdSemi*

\begin{proof}[Proof of correctness.]
  Note that $\cP$ is input-saving and immediate-observation. Let us
  show that $\cP$ semi-decides $L$ with stabilizing inputs. Let $u \in
  \Sigma^+$ and $u \dynto^* v$. Let $v = w_0 \to w_1 \to \cdots$ be a
  fair run. For the special case of $|v| = 1$, note that each state
  starts with output $\top$ and this remains so as no transition is
  ever enabled. Let us assume that $|v| \geq 2$.

  \emph{Case $\inpt{v} \in L$}. The first rule is permanently disabled
  in $w_0$. Hence, by fairness, the second rule must swap the last
  component of each agent to $\top$. Once this happens, the population
  has reached a $\top$-consensus, and the configuration cannot change
  anymore.

  \emph{Case $\inpt{v} \notin L$}. Recall that the first component of
  each agent is fixed from $w_0$ onwards. Let $i < j$ be positions
  such that $\inpt{v}[i] > \inpt{v}[j]$. Using the second rule, all
  agents can change their opinion to $\top$, and then, using the first
  rule, agent $i$ can change its opinion to $\bot$. By fairness, this
  happens infinitely often, which means that the run does not
  stabilize.
\end{proof}

\lemInterStab*

\begin{proof}
  Let $\cP_1 = (Q_1, \Sigma, \Delta_1, O_1)$ and $\cP_2 = (Q_2,
  \Sigma, \Delta_2, O_2)$ be the protocols that respectively
  $\PP[\cN]$-semi-decide $L_1$ and $L_2$ with stabilizing inputs.

  \medskip
  \textbf{Intersection.} We simply take the product of both protocols;
  simulate them independently with a common input component; and
  output the conjunction of their outputs.

  Let $Q_1 = \Sigma \times R_1$ and $Q_2 = \Sigma \times
  R_2$. Formally, we define $\cP = (Q, \Sigma, \Delta, O)$ by $R
  \defeq R_1 \times R_2$, $Q \defeq \Sigma \times R$, $O((\sigma, q_1,
  q_2)) \defeq O_1((\sigma, q_1)) \land O_2((\sigma, q_2))$, and the
  following rules:
  \begin{align*}
    (\sigma, q_1, r), (\sigma', q_2, s) &\xrightarrow{P} (\sigma, q_3,
    r), (\sigma', q_4, s) && \text{for $(\sigma, q_1), (\sigma', q_2)
      \xrightarrow{P} (\sigma, q_3), (\sigma', q_4) \in \Delta_1$} \\
    &&& \text{and $r, s \in R_2$}, \\
    (\sigma, r, q_1), (\sigma', s, q_2) &\xrightarrow{P} (\sigma, r,
    q_3), (\sigma', s, q_4) && \text{for $(\sigma, q_1), (\sigma',
      q_2) \xrightarrow{P} (\sigma, q_3), (\sigma', q_4) \in
      \Delta_2$} \\
    &&& \text{and $r, s \in R_1$}.
  \end{align*}
  We identify each $\sigma \in \Sigma$ with $(\sigma, \sigma, \sigma)$.

  Note that $\cP$ is input-saving, and immediate-observation if it is
  the case of $\cP_1$ and $\cP_2$. Let us show that $\cP$ semi-decides
  $L \defeq L_1 \cap L_2$ with stabilizing inputs. For every $w \in
  Q^+$, let $\pi_1(w) \in Q_1^+$ be the projection of $w$ onto its
  first two components, and let $\pi_2(w) \in Q_2^+$ be the projection
  of $w$ onto its first and third components. Let $u \in \Sigma^+$ and
  $u \dynto^* v$. Let $v = w_0 \to w_1 \to \cdots$ be a fair run from $v$. By
  definition of $\cP$, the sequence $\pi_i(w_0), \pi_i(w_1), \cdots$
  is a fair run of $\cP_i$. Moreover, $\inpt{v} = \inpt{\pi_1(v)} =
  \inpt{\pi_2(v)}$.

  \emph{Case $\inpt{v} \in L$}. Since $\inpt{v} \in L_1 \cap L_2$, the
  runs of $\cP_1$ and $\cP_2$ eventually stabilize to $\top$-stable
  configurations. Since $O$ is a conjunction, the same holds for the
  run of $\cP$.
  
  \emph{Case $\inpt{v} \notin L$}. Let $i \in \{1, 2\}$ be such that
  $\inpt{v} \notin L_i$. The run of $\cP_i$ visits infinitely many
  configurations containing a state $q \in Q_i$ with $O_i(q) = \bot$.
  Since $O$ is a conjunction, the same holds for the run of
  $\cP$. Furthermore, if $\cP_i$ is a decider, then the run of $\cP_i$
  eventually stabilizes to $\bot$-stable configurations, which implies
  the same for the run of $\cP$.

  \medskip
  \textbf{Union.} We cannot simply take the previous construction and
  redefine the output mapping to $O((\sigma, q_1, q_2)) \defeq
  O_1((\sigma, q_1)) \lor O_2((\sigma, q_2))$. For example, if $\cP_1$
  and $\cP_2$ have two agents with outputs $\top \bot$ and $\bot \top$
  respectively, then their disjunction yields $\top \top$, which is
  incorrect as neither $\cP_1$ nor $\cP_2$ is in a $\top$-consensus.

  Instead, we extend $Q$ with an extra component $\{1, 2\}$ that
  indicates which output should be used, and define $O((\sigma, q_1,
  q_2, i)) \defeq O_i((\sigma, q_i))$. The transitions from $\Delta$
  simply ignore this new component. However, we need to add this
  (immediate-observation) rule:
  \begin{multline*}
    (\sigma, q_1, q_2, i), (\sigma', q_1', q_2', i')
    \xrightarrow{\bfsf{true}} (\sigma, q_1, q_2, 3 - i), (\sigma',
    q_1', q_2', i') \\ \text{for $i \neq i' \lor O_i(q_i) = \bot \lor
      O_i(q_i') = \bot$}.
  \end{multline*}
  This way, as long as agents have not agreed on a common choice, or
  as long as their choice is not in a $\top$-consensus, they may
  change their mind.

  Note that this is correct even for ``deciders'': If $\cP_1$ and
  $\cP_2$ both reach $\bot$-stability, then agents will indefinitely
  change their choice, but nonetheless steadily output $\bot$.
\end{proof}

\lemRenaming*

\begin{proof}  
  Let $\cP = (Q, \Sigma, \Delta, O)$ be the protocol that semi-decides
  $L$ with stabilizing inputs. Since $\cP$ is input-saving, its states
  are of the form $Q = \Sigma \times R$.

  We provide a protocol $\cP'$ where, on input $u \in \Gamma^+$, the
  population internally starts with an arbitrary word $v$ such that
  $u \in f(v)$ and runs $\cP$ on $v$. With luck, it may be the case
  that $v \in L$, but this needs not be the case. Therefore, the
  population may change its choice: If an agent of $\cP'$ has input
  $\gamma$ and encounters a $\bot$-state of $\cP$, then it can change
  its internal input of $\cP$ to any letter $\sigma$ such that $\gamma
  \in f(\sigma)$.

  Let us now describe the protocol formally. For every $\gamma \in
  \Gamma$, let $\sigma_\gamma$ be an arbitrary letter\footnote{If no
  such $\sigma_\gamma$ exists, then we can simply map $\gamma$ to a
  dummy state with opinion $\bot$, as no word containing $\gamma$
  belongs to $f(L)$. That being said, we will never invoke the lemma
  with such an $f$.} of $\Sigma$ such that $\gamma \in
  f(\sigma_\gamma)$. To handle the corner case of populations with a
  single agent, if there is choice such that $\sigma_\gamma \in L$,
  then we take one. We define $\cP' = (Q', \Gamma, \Delta', O')$ as
  follows:
  \begin{itemize}
  \item $Q' \defeq \Gamma \times Q$;

  \item We identify each $\gamma \in \Gamma$ with $(\gamma, \sigma_\gamma)$;

  \item $O'((\gamma, q)) \defeq O(q)$;

  \item The transitions of $\Delta'$ are defined by
    \begin{align*}
      (\gamma, q_1), (\gamma', q_2)
      &\xrightarrow{\mathmakebox[15pt][c]{P}} (\gamma, q_3), (\gamma',
      q_4) && \text{for $\gamma, \gamma' \in \Gamma$ and $(q_1, q_2)
        \xrightarrow{P} (q_3, q_4) \in \Delta$}, \\
      (\gamma, q), (\gamma', q')
      &\xrightarrow{\mathmakebox[15pt][c]{\bfsf{true}}} (\gamma,
      \sigma), (\gamma', q') && \text{for $\gamma, \gamma' \in
        \Gamma$, $q, q' \in Q$ and $\sigma \in \Sigma$ such that} \\
      &&& \text{$\gamma \in f(\sigma)$, and $O(q) = \bot$ or $O(q') =
        \bot$}.
    \end{align*}
  \end{itemize}

  The protocol is input-saving. The second rule is
  immediate-observation. Moreover, if $\cP$ is immediate-observation,
  then it is also the case of the first rule. It remains to show that
  $\cP'$ semi-decides $f(L)$ with stabilizing inputs. For all $w \in
  (Q')^+$, let $\pi(w)$ be the projection of $w$ onto its second
  component. Recall that $\inpt{w}$ is the projection of $w$ onto its
  first component. Let $u \in \Gamma^+$ and $u \dynto^* v$. Let $v =
  w_0 \to w_1 \to \cdots$ be a fair run. The case where $|v| = 1$, and
  hence $\inpt{v} = \gamma \in \Gamma$, is trivially correct by the
  choice of $\sigma_\gamma$. Thus, suppose that $|v| \geq 2$.

  \emph{Case $\inpt{v} \in f(L)$}. By hypothesis, there exists $v' \in
  L$ such that $\inpt{v} \in f(v')$. For the sake of contradiction,
  suppose that no $\top$-stable configuration is visited by the fair
  run. By assumption, there exist indices $i_0 < i_1 < \cdots$ such
  that $O'(w_{i_j}) = \bot$ for all $j \geq 0$. By definition of $O'$,
  this means that $O(\pi(w_{i_j})) = \bot$ for all $j \geq 0$. By
  fairness and the second rule, we can change the second component of
  the population to $v'$ (and hence the input of $\cP$ to $v'$). Since
  $\cP$ semi-decides $L$ with stabilizing inputs, we can reach a
  $\top$-stable configuration of $\cP$ in the second
  component. Consequently, the second rule becomes permanently
  disabled, which implies that $\cP'$ visits a $\top$-stable
  configuration, a contradiction.

  \emph{Case $\inpt{v} \notin f(L)$}. For the sake of contradiction,
  suppose that there exists $j \geq 0$ such that $O'(w_j) =
  O'(w_{j+1}) = \cdots = \top$. This means that $w_j \to w_{j+1} \to
  \cdots$ only uses the first rule. By definition of $O'$, we have
  $O(\pi(w_j)) = O(\pi(w_{j+1})) = \cdots = \top$. Since $\cP$
  semi-decides $L$ with stabilizing inputs, this implies that
  $\inpt{\pi(w_j)} \in L$ and hence $\inpt{w_j} \in f(L)$, which is a
  contradiction.
\end{proof}

\subsection{Missing proofs from \Cref{sec:ioppoinda}}

\propBConsensuses*

\begin{proof}
  In a \(\PP[<]\), we have $u \to u'$ if and only if there exist $a,b,c,d \in
  Q$ and $u_1, u_2, u_3 \in Q^*$ such that $u = u_1 a u_2 b u_3$, $u' = u_1 c u_2 d
  u_3$ and $(a,b) \xrightarrow{<} (c,d) \in \Delta$. Let $w\in Q^*$ be such that $u \preceq w$. We can write $w$ as $w = w_1 a w_2 b w_3$ with $u_1 \preceq w_1$, $u_2 \preceq w_2$ and $u_3 \preceq
  w_3$. We obtain $w \to w' = w_1 c w_2 d w_3$ and $u' \preceq w'$ as desired.
  
  Let $U_0 = Q^* \setminus O^{-1}(b)^*$, the set of configurations which are \emph{not}
  $b$-consensuses.  For all $i$, let $U_{i+1} = U_i \cup \set{u \in Q^* \mid (\exists v \in
  U_i)[u \to v]}$, the set of configurations that can reach $U_0$ in at
  most $i+1$ steps.  We let $D_i = Q^{*} \setminus U_i$.  The set of $b$-stable
  configurations is $D = Q^* \setminus \bigcup_{i \in \N} U_i$.
  
  A key observation is that since our only predicate is $<$, all $D_i$ are
  subword-closed. As a consequence, so is their intersection $D$. Furthermore,
  the sequence $(D_i)_{i \in \N}$ is a descending chain of subword-closed sets.
  Since $\preceq$ is a well quasi-order, every such chain eventually stabilizes,
  hence there exists $i \in \N$ such that $D_i = D_{i+1}$.  We can compute $D_i$ for
  each index until we find one such that $D_i = D_{i+1}$.  Such $D_i$ is then
  the set of $b$-stable configurations.
\end{proof}

  We first show that our protocols are robust to the insertion of
repeated letters, under specific contexts:

\ForwardExtension*
	\begin{proof}
		We prove this for a single step, the result follows immediately by induction.
		Suppose there is a step from $u w v$ to $u' w' v'$, let $(a,b) \xrightarrow{<} (a',b')$ be
		the associated transition.  Since the protocol is IO, we have $a = a'$ or
		$b=b'$.  We assume $a=a'$, the other case being symmetric.
		
		If the observing agent is in $u$ or $v$, then $w= w'$ and we can go from $u w
		z w v$ to $u' w' z' w' v' = u' w z w v'$ with the same transition.  If the
		observing agent is in $w$, then $u = u'$, $v = v'$ and, from $u w z w v$, we
		start by applying that transition for each $b$ in $z$, observing the same
		agent labeled $a$ as in the step from $u w v$ to $u' w' v'$.  We thus turn
		the factor $z$ into $z'$ where all $b$ are turned into $b'$.  We can then
		apply that same transition twice to turn each of the two copies of $w$ to
		$w'$.  Note that $\alpha(z') = (\alpha(z) \cup \set{b'}) \setminus \set{b}$, and $\alpha(w') \supseteq (\alpha(w) \cup
		\set{b'}) \setminus \set{b} \supseteq \alpha(z')$ as $w'$ is obtained from $w$ by deleting an occurrence of $b$
		and adding an occurrence of $b'$.  We obtain $u w' z' w' v$, which is equal to $u' w' z' w'
		v'$ (since here $u=u'$ and $v = v'$) with $\alpha(z') \subseteq \alpha(w')$.
	\end{proof}

\lemBigPump*

\begin{proof}

  
  \Cref{lem:b-consensuses} indicates that for both $b \in \set{\top,\bot}$, the set of
  $b$-stable configurations is subword-closed.  As a consequence, writing
  \(B^\eps\) for \(B \cup \{\eps\}\), it is a finite union of languages of the form
  $A_1^* B_1^\eps \cdots A_k^* B_k^\eps$ with
  $A_i, B_i$ subsets of $Q$, the set of states of our protocol~\cite[Sect.~6.1.1]{Halfon18}.  Let $K_b$ be
  the maximal factorization size $k$ over all those languages,
  $K = \max(K_\top, K_\bot)$ and $m = 2K+1$.
  \Cref{lem:b-consensuses} also states that the set of
  $b$-stable configurations is computable for both $b$, in particular we can compute $K_\top, K_\bot$ and $m$.

  Let $w_1,\dots, w_m, z\in \Sigma^*$ be such that $\alpha(w_1) =\dots =\alpha(w_m) \supseteq\alpha(z)$, and let
  $u, v\in \Sigma^*$.  We show that for both $b \in \{\top, \bot\}$, if we can reach a
  $b$-stable configuration from $u w_1\cdots w _m v$ then it is also possible from $u w_1\cdots w
  _m z w_1\cdots w _m v$. This implies that $u w_1\cdots w _m v$ is accepted if
  and only if $u w_1\cdots w _m z w_1\cdots w _m v$ is, proving the lemma.

  Let $b\in \set{\top, \bot}$, suppose we can reach a $b$-stable configuration $w'$ from
  $u w_1\cdots w _m v$.  Since steps are length-preserving, we can divide $w'$
  into $u' w'_1 \cdots w' _m v'$ where $u'$, $v'$ and each $w'_i$ have the same
  length as $u$, $v$ and each $w_i$, respectively.

  Since $w'$ is $b$-stable, there exists $k < K_b$ and an expression $A_1^*
  B_1^\eps\cdots A_k^* B_k^\eps$ whose language contains $w'$, and only contains
  $b$-stable words.  Since $m> 2 K_b \geq 2k$, there exist $i, j$ such that $u'
  w'_1\cdots w'_{i-1}\in A_1^* B_1^\eps\cdots A_j^*$, $w'_{i+1}\cdots w'_m v'\in A_j^*
  B_j^\eps\cdots B_k^\eps$, and $w'_i\in A_j^*$.

  Since $w = u w_1\cdots w_m v \to^* u' w'_1\cdots w'_m v' = w'$,
  by \Cref{lem:forward-extension} there is a partial run from
  \[
  \widetilde{w} := (u w_1\cdots w_{i-1}) w_i (w_{i+1}\cdots w _m) z (w_{1}\cdots w_{i-1}) w_i (w_{i+1}\cdots w _m v)
  \] to
  \[
  \widetilde{w'} := (u' w'_1\cdots w'_{i-1}) w'_i z' w'_i (w'_{i+1}\cdots w'_m v')
  \] for some $z'$
  with $\alpha(z')\subseteq\alpha(w_i')$, and $|z'| = |(w_{i+1}\cdots w _m) z (w_{1}\cdots w_{i-1})|$.
  
  Furthermore, since we have $u' w'_1\cdots w'_{i-1}\in A_1^* B_1^*\cdots A_j^*$,
  and $w'_{i+1}\cdots w'_m v'\in A_j^* B_j^\eps\cdots B_k^\eps$, and $z'\in\alpha(w_i')^*\subseteq
  A_j^*$, we infer $\widetilde{w'}\in A_1^* B_1^\eps\cdots A_k^* B_k^\eps$, hence $\widetilde{w'}$ is
  $b$-stable.

  We have shown that if there is an execution from $u w_1\cdots w _m v$ to a
  $b$-stable configuration, then there is also one from $u w_1\cdots w _m z
  w_1\cdots w _m v$ to a $b$-stable configuration, for all $u,v\in\Sigma^*$ and both $b\in
  \set{\top, \bot}$.  By definition of the language of $\mathcal{P}$, this means that $w_1\cdots
  w _m$ and $w_1\cdots w _m z w_1\cdots w _m$ are equivalent with respect to $\equiv_L$.
\end{proof}

\lemReg*

\begin{proof}
  We start by showing that some patterns, reminiscent of
  \emph{sesquipowers}~\cite{Lothaire02}, are unavoidable in long strings.  An
  \emph{$m$-reducible pattern} is a word \(w_1 \cdots w_m z w_1 \cdots w_m\) such
  that $\alpha(w_1) = \dots = \alpha(w_m) \supseteq \alpha(z)$.  We have:
  \begin{claim}\label{alphabet-induction}
    For all alphabet $Q$ of size $\ell$ and all $m \in \mathbb{N}$, there exists
    $B(\ell, m) \in \mathbb{N}$ such that for all $w \in Q^*$, if $|w| \geq B(\ell, m)$ then
    $w$ contains an $m$-reducible pattern.
  \end{claim}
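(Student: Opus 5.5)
We prove the claim by induction on the alphabet size \(\ell\), for fixed \(m\) (or rather for all \(m\) simultaneously). The idea is to cut a long word into many consecutive blocks that each use its entire alphabet, and then apply the pigeonhole principle to the blocks.

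\emph{Base case.} For \(\ell = 0\) the only word is \(\eps\), so we may take \(B(0,m) = 1\): no word of length at least \(1\) exists, and the claim holds vacuously. For \(\ell = 1\), take \(w_1 = \dots = w_m = a\) and \(z = \eps\). Then \(a^{2m}\) is an \(m\)-reducible pattern, so \(B(1,m) = 2m\) works.

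\emph{Inductive step.} Let \(w \in Q^*\) with \(|Q| = \ell\). We may assume \(\alpha(w) = Q\); otherwise \(w\) is over a smaller alphabet and the induction hypothesis applies directly. We greedily factor \(w = v_1 v_2 \cdots v_r\,v'\), where each \(v_j\) is the shortest prefix of the remaining suffix satisfying \(\alpha(v_j) = Q\). The remainder \(v'\) then misses some letter. Each \(v_j\) consists of a prefix avoiding its last letter, followed by that letter.

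We split into two cases according to the lengths of these pieces.

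\begin{itemize}
\item \emph{Some piece is long.} Suppose some \(v_j\), or \(v'\), has length at least \(B(\ell-1,m)+1\). Then it contains a factor of length at least \(B(\ell-1,m)\) over an alphabet of size \(\ell-1\). By induction, that factor contains an \(m\)-reducible pattern, and hence so does \(w\).
\item \emph{All pieces are short.} Otherwise every piece has length at most \(B(\ell-1,m)\). Choosing
\[
B(\ell,m) = (2m+1)\bigl(B(\ell-1,m)+1\bigr)
\]
guarantees \(r \geq 2m\). Now set \(w_i = v_i\) for \(i \le m\), take \(z = \eps\), and use the second copy \(v_{m+1}\cdots v_{2m}\).
\end{itemize}

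The main obstacle is in this second case. An \(m\)-reducible pattern requires the \emph{same} sequence \(w_1\cdots w_m\) to occur twice, whereas the blocks \(v_{m+1},\dots,v_{2m}\) need not equal \(v_1,\dots,v_m\). The fix I would try first is pigeonhole on the blocks themselves. There are finitely many words of length at most \(B(\ell-1,m)\); call this number \(N\). Take
\[
r \geq m(N+1),
\]
so that some word \(x\) with \(\alpha(x) = Q\) occurs as a block at least \(2m\) times, say at block indices \(j_1 < \dots < j_{2m}\).

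We then set \(w_1 = \dots = w_m = x\), but these blocks need not be adjacent, so we still need to bridge the gaps. To handle this, I would instead pigeonhole on consecutive \(m\)-tuples of blocks. With enough blocks, some tuple \((v_j,\dots,v_{j+m-1})\) reoccurs later, at position \(j'\geq j+m\), as \((v_{j'},\dots,v_{j'+m-1})\). We take
\[
z = v_{j+m}\cdots v_{j'-1}.
\]
Its alphabet is contained in \(Q = \alpha(v_i)\), so \(\alpha(w_1) = \dots = \alpha(w_m) \supseteq \alpha(z)\) as required. It therefore suffices that \(r > m(N^m+1)\), so that the number of disjoint \(m\)-tuples exceeds the number of possible tuples. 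This gives a computable bound \(B(\ell,m)\) and completes the induction.
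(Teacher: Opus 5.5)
Your final argument is correct and is essentially the paper's proof. Both go by induction on $\ell$, invoke the hypothesis on a long factor that misses a letter, and otherwise pigeonhole on disjoint consecutive $m$-tuples of full-alphabet blocks, taking the gap as $z$; the paper uses fixed-length blocks of length $B(\ell-1,m)$ instead of your greedy minimal ones. You should discard the displayed choice $B(\ell,m)=(2m+1)(B(\ell-1,m)+1)$ and the miscounted ``$2m$ occurrences'' step. Take instead, e.g., $B(\ell,m)=\bigl(m(N^m+1)+2\bigr)B(\ell-1,m)$, so that Case~2 really yields $r>m(N^m+1)$.
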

    \begin{proof}
      We proceed by induction on the size $\ell$ of the alphabet $Q$.  If $Q$ is a
      singleton, then the result is clear with $B(1,m) = 2m +1$.
    
      Otherwise, define $B(\ell, m) = m B(\ell-1, m) \ell^{m B(\ell-1, m)}$.  Let
      $w \in Q^*$ be such that $|w| \geq B(\ell,m)$, let $u$ be an infix of $w$ of
      maximal length such that $|\alpha(u)| < \ell$, and let $M = |u|$.
      \begin{itemize}
      \item If $M \geq B(\ell-1, m)$, then by induction hypothesis $u$ contains an
        $m$-reducible pattern, thus so does $w$.
      \item Otherwise, we can split $w$ into
        $r = B(\ell, m) / B(\ell-1, m) = m \ell^{m B(\ell-1, m)}$ factors of length
        $B(\ell-1, m)$, plus a suffix $v$: $w = u_1 \cdots u_{r}v$. Since
        $M < B(\ell-1, m)$, by definition of $M$, each $u_k$ contains all letters
        in $Q$.  Since $r \geq m \ell^{m B(\ell-1, m)}$, there must exist indices
        $i,j$ such that $i+m < j$ and
        $u_i \cdots u_{i+m-1} = u_j \cdots u_{j+m-1}$. As all $u_k$ contain
        the same set of letters, we obtain an $m$-reducible pattern
        $(u_i \cdots u_{i+m-1}) (u_{i+m} \cdots u_{j-1}) (u_j \cdots u_{j+m-1})$.\qedhere
      \end{itemize}
    \end{proof}

  By \Cref{alphabet-induction} applied with $m$ taken from
  \Cref{bigpump}, there is a bound $B$ such that every word of length more
  than $B$ is equivalent under $\equiv_L$ to a shorter word.  As a consequence, every
  word is equivalent to a word of length at most $B$.
  
  This means that $\equiv_L$ has finitely many equivalence classes, yielding the
  result by the Myhill-Nerode theorem.
  To construct an automaton for this language from an \(\IOPP[<]\)-protocol, it suffices to have one state for each word of length at most $B$ (computable by \Cref{bigpump} and \Cref{alphabet-induction}). For each of those words $w$ and each letter $a$, there is an $a$-transition from $w$ to $wa$ if it has length $\leq B$. Otherwise, $wa$ has length $B+1$ and contains an $m$-reducible pattern. Let $v$ be the word obtained by reducing it. The $a$-transition from $wa$ goes to $v$. The empty word marks the initial state, and final states are those marked by words accepted by the \(\IOPP[<]\)-protocol, which can be checked by exploring the finite configuration space reachable from them using the protocol.
\end{proof}

\subsection{Missing proofs from \Cref{ssec:da:iopp}}

\propIOEqone*

\begin{proof}
  We define $\cP = (Q, \Sigma, \Delta, O)$ by $Q \defeq \Sigma \times
  \Sigma \times \{\top, \bot\}$, $O((\sigma, \sigma', o)) \defeq (o
  \land \sigma = \sigma')\lor ((\sigma = a \neq \sigma')$, and the
  following rules:
  \begin{align}
    (\sigma, a, o), q &\xrightarrow{\bfsf{true}} (\sigma, \sigma, \bot), q
    & \text{for } \sigma \neq a,\label{rule:eq:1} \\
    (a, \sigma, o), q &\xrightarrow{\bfsf{true}} (a, a, \top),
    q,\label{rule:eq:2} \\[5pt]
    (\sigma, \sigma, o), (a, a, \top) &\xrightarrow{\bfsf{true}}
    (\sigma, \sigma, \top), (a, a, \top)
    & \text{for $\sigma \neq a$},\label{rule:eq:3} \\
    (\sigma, \sigma, o), (\sigma', \sigma', \bot)
    &\xrightarrow{\bfsf{true}} (\sigma, \sigma, \bot), (\sigma',
    \sigma', \bot),\label{rule:eq:4} \\
    (a, a, o), (a, a, o') &\xrightarrow{\bfsf{true}} (a, a, \bot), (a,
    a, o').\label{rule:eq:5}
  \end{align}
  We identify $a$ with $(a, a, \top)$, and each $\sigma \in \Sigma
  \setminus \{a\}$ with $(\sigma, \sigma, \bot)$.

  Note that $\cP$ is input-saving and immediate-observation. Let us
  show that $\cP$ decides the language of the statement with stabilizing inputs. Let $u \in
  \Sigma^+$ and $u \dynto^* v$. Let $v = w_0 \to w_1 \to \cdots$ be a
  fair run.

  If $|v| = 1$, then $v$ is of the form $(a, a, \top)$, $(a, \sigma',
  \bot)$, $(\sigma, a, \top)$, or $(\sigma, \sigma', \bot)$ where
  $\sigma, \sigma' \neq a$. By definition of $O$, the first
  (\resp last) two states have output $\top$ (\resp $\bot$) as
  desired.

  Suppose $|v| \geq 2$. By fairness and
  rules~\eqref{rule:eq:1}--\eqref{rule:eq:2}, we can assume without
  loss of generality that each agent has its first two components
  equal and immutable along the fair run.
  
  \emph{Case $|\inpt{v}|_a = 0$}. Since each $\sigma \neq a$ is
  identified with $(\sigma, \sigma, \bot)$, and since an $a$ can only
  disappear via rule~\eqref{rule:eq:1}, there must exist a position
  $i$ and $\sigma \neq a$ such that $v[i] = (\sigma, \sigma,
  \bot)$. By fairness and rule~\eqref{rule:eq:4}, a $\bot$-consensus
  can be reached. From there, all rules are disabled.

  \emph{Case $|\inpt{v}|_a = 1$}. The single agent whose first
  component is $a$ can set its third component to $\top$ with
  rule~\eqref{rule:eq:2}, and set the third component of other agents
  to $\top$ with rule~\eqref{rule:eq:3}. From there, all rules are
  disabled and hence a $\top$-stable configuration has been reached.

  \emph{Case $|\inpt{v}|_a \geq 2$}. By fairness,
  rules~\eqref{rule:eq:2} and~\eqref{rule:eq:5} cause the third
  component of $a$-agents to alternate indefinitely between $\top$
  and $\bot$.
\end{proof}

\subsection{Missing proofs from \Cref{ssec:fo:int}}

\lemSigNormalForm*

\begin{proof}
  Let $\varphi = (\exists x_1, \ldots,
  x_m)[\psi] \in \Sigmaint_1[<, +, \equiv]$ where $\psi $ is quantifier-free. Without loss of
  generality, $\psi$ may assume that $1 = x_0 < x_1 < \cdots < x_m <
  x_{m+1} = \lmax$. Indeed,
  \begin{itemize}
  \item For the first and last positions, we add the two extra
    variables with these two constraints, and replace each other
    occurrence of $1$ and $\lmax$ by $x_0$ and $x_{m+1}$;

  \item For the ordering, we can change the formula so that it tests
    all orderings:
    \[
    \bigvee_{\pi \in S_m} (\exists x_0, x_1, \ldots, x_m, x_{m+1})[x_1
    \leq \cdots \leq x_m \land \psi_\pi],
    \]
    where $S_m$ is the set of permutations over $[1..m]$, and
    $\psi_\pi$ is $\psi$ with $x_i$ replaced by $x_{\pi(i)}$;

  \item For distinctness, we can similarly enumerate all possible
    equivalence classes of ``$=$'' and associate a variable to each
    class, \eg if we guess that $x_0 < x_1 = x_2 < x_3 = x_4
    < x_5$, then we change the subformula to $(\exists y_0, y_1, y_2,
    y_3)[\psi']$ where $\psi'$ is the formula obtained by renaming
    variables in $\psi$ as follows: $x_0 \mapsto y_0, \{x_1, x_2\}
    \mapsto y_1$, $\{x_3, x_4\} \mapsto y_2$ and $x_5 \mapsto y_3$.
  \end{itemize}

  Once variables are strictly ordered, we can further assume that each
  expression $\cnt{a}(x_i, x_j)$ satisfies $j \in \{i, i +
  1\}$. Indeed, if $i < j$, then $\cnt{\sigma}(x_i, x_j)$ can be
  substituted with
  \[
  \sum_{\ell=i}^{j - 1} \cnt{\sigma}(x_\ell, x_{\ell+1}) -
  \sum_{\ell=i+1}^{j - 1} \cnt{\sigma}(x_\ell, x_\ell).
  \]

  Let $\psi'$ denote the formula $\psi$ where each expression
  $\cnt{\sigma}(x_i, x_{i+1})$ is replaced with $x_{i, \sigma}$, and
  each $\cnt{\sigma}(x_i, x_i)$ is replaced with $y_{i, \sigma}$. Consider $K = \set{a_0 w_1 a_1 \cdots w_n a_m \mid w_i \in \Sigma^\ast, a_i \in \Sigma, \psi'(x_{i, \sigma} \mapsto |w_i|_\sigma, y_{i, \sigma} \mapsto |a_i|_{\sigma})}$. We
  claim that $v \models (\exists x_0, \ldots, x_m)[\psi]$ iff $v \in K$.

  $\Rightarrow$) We take $a_i \defeq v[x_i]$ for each $i \in [0..m]$,
  and $w_i \defeq v[x_{i-1}+1..x_i-1]$ for each $i \in [1..m]$. By the
  assumptions on $\psi$, it is the case that $\psi'(x_{i, \sigma}
  \mapsto |w_i|_\sigma, y_{i, \sigma} \mapsto |a_i|_\sigma)$ holds.

  $\Leftarrow$) We have $v = a_0 w_1 a_1 w_2 \cdots w_m a_m$ where
  $w_i \in \Sigma^*$, $a_i \in \Sigma$ and $\psi'(x_{i, \sigma}
  \mapsto |w_i|_\sigma, y_{i, \sigma} \mapsto |a_i|_\sigma)\}$
  holds. By taking $x_i = |a_0 w_1 \cdots a_i|$ for $i \in [0..m]$,
  $\psi(x_0, \ldots, x_m)$ holds.
\end{proof}

\corDeltaPP*

\begin{proof}
  Let $L \in \Deltaint_1[<, +, \equiv]$. Since $L \in \Sigmaint_1[<,
    +, \equiv]$, by \Cref{prop:sigma:pp}, $L$ is
  $\PP[<]$-semi-decidable. As the complement of $L$ also belongs to
  $\Sigmaint_1[<, +, \equiv]$, the same applies. Thus, we are done by
  \Cref{lem:semidec}.
\end{proof}

\subsection{Missing proofs from \Cref{ssec:popa}}

\propSigPopa*

\begin{proof}
  $\Leftarrow$) Let $L \in \Sigmaint_1[<, +,
    \equiv]$. By \Cref{lem:sig:normal:form}, $L$ can be written as a
  finite union of languages of the form $K = \{a_0 w_1 a_1 \cdots w_m
  a_m \mid w_i \in \Sigma^*, a_i \in \Sigma, \varphi(x_{i, \sigma}
  \mapsto |w_i|_\sigma, y_{i, \sigma} \mapsto |a_i|_\sigma)\}$ where
  $m \geq 0$ and $\varphi$ is a Presburger formula. Since poPA are
  nondeterministic, they are trivially closed under union. Thus, it
  suffices to give a poPA for $K$, which we do below. Here, each
  transition labeled with ``$\sigma$'' stands for $|\Sigma|$ distinct
  transitions:
  \begin{center}
    \begin{tikzpicture}[->, thick, node distance=2.25cm, initial text=, auto]
  \tikzstyle{astate} = [state, minimum size=18pt, inner sep=1pt];
  \newcommand{\ltrans}[1]{{\color{colA}$#1$:}}
  
  \node[astate, initial]     (q0) {$q_0$};
  \node[astate, right of=q0] (q1) {$q_1$};
  \node[astate, right of=q1] (q2) {$q_2$};
  \node[astate, right of=q2] (qm) {$q_m$};
  \node[astate, right of=qm, accepting, fill=colA!50] (f) {};

  \node[right=2pt of f, colA] {
    $\Psi \defeq \varphi$
  };

  \path[->, dotted]
  (q2) edge node {} (qm)  
  ;
  
  \path[->]
  (q0) edge node {\ltrans{y_{0, \sigma}} $\sigma$} (q1)
  (q1) edge node {\ltrans{y_{1, \sigma}} $\sigma$} (q2)
  (qm) edge node {\ltrans{y_{m, \sigma}} $\sigma$} (f)

  (q1) edge[loop above] node {\ltrans{x_{1, \sigma}} $\sigma$} ()
  (q2) edge[loop above] node {\ltrans{x_{2, \sigma}} $\sigma$} ()
  (qm) edge[loop above] node {\ltrans{x_{m, \sigma}} $\sigma$} ()
  ;
\end{tikzpicture}
  \end{center}

  $\Rightarrow$) Let $\A$ be a poPA. Let $K$ be the words of
  $L(\A)$ of length at most one. We can construct a formula for each
  word of $K$:
  \[
  \varphi_\ew \defeq \bigwedge_{\sigma \in \Sigma} \cnt{\sigma}(1, \lmax) = 0
  \text{ and }
  \varphi_a \defeq \cnt{a}(1, \lmax) = 1 \land
  \bigwedge_{\sigma \neq a} \cnt{\sigma}(1, \lmax) = 0.
  \]

  Let us now consider $K' \defeq L(\A) \setminus K$. As $\A$ is
  partially ordered, its language can be described as the finite union
  of poPA organized as straight lines alternating between
  self-loops and progress transitions. As we consider $K'$ (and so words of length at least 2), we can
  explicitly read the first and last letters, which yields automata of
  the following form, where $a_i \in \Sigma$ and $A_i \subseteq
  \Sigma$:
  \begin{center}
    \begin{tikzpicture}[->, thick, node distance=2.5cm, initial text=, auto]
  \tikzstyle{astate} = [state, minimum size=18pt, inner sep=1pt];
  \newcommand{\ltrans}[1]{{\color{colB}$#1$:}}
  
  \node[astate, initial]     (q0) {$q_0$};
  \node[astate, right of=q0] (q1) {$q_1$};
  \node[astate, right of=q1] (q2) {$q_2$};
  \node[astate, right of=q2] (qm) {$q_m$};
  \node[astate, right of=qm, accepting, fill=colB!50] (f) {};

  \node[right=2pt of f, colB] {
    $\Psi$
  };

  \path[->, dotted]
  (q2) edge node {} (qm)  
  ;
  
  \path[->]
  (q0) edge node {\ltrans{t_0} $a_0$} (q1)
  (q1) edge node {\ltrans{t_1} $a_1$} (q2)
  (qm) edge node {\ltrans{t_m} $a_m$} (f)

  (q1) edge[loop above] node {\ltrans{s_{1, \sigma}} $\sigma \in A_1$} ()
  (q2) edge[loop above] node {\ltrans{s_{2, \sigma}} $\sigma \in A_2$} ()
  (qm) edge[loop above] node {\ltrans{s_{m, \sigma}} $\sigma \in A_m$} ()
  ;
\end{tikzpicture}
  \end{center}
  We convert such an automaton into this formula:
  \begin{multline*}
    (\exists x_0, x_1, \ldots, x_m)
    [1 = x_0 < x_1 < \cdots < x_m = \lmax] \land
    \bigwedge_{i=0}^m \cnt{a_i}(x_i, x_i) = 1 \land {} \\[-5pt]
    \bigwedge_{i=1}^m \sum_{\sigma \in \Sigma \setminus A_i}
    \cnt{\sigma}(x_{i-1}+1, x_i-1) = 0 \land 
    \Psi(t_i \mapsto 1, s_{i, \sigma} \mapsto \cnt{\sigma}(x_{i-1}+1, x_i-1)).
  \end{multline*}
  We are done by taking the disjunction of all formulas.
\end{proof}

\obsPaComp*

\begin{proof}
  First, observe that the two automata of \Cref{fig:automata} for the
  median and coDyck-witness languages can be complemented as follows,
  while remaining weakly unambiguous:
  \begin{center}
    \begin{tikzpicture}[->, thick, initial text=, auto]
  \tikzstyle{astate} = [state, minimum size=13pt, inner sep=1pt];
  \newcommand{\ltrans}[1]{{\color{colA}$#1$:}}
  
  \node[astate, initial, accepting, fill=colA!50] (q0) {};
  \node[astate, right=2.5cm of q0, accepting, fill=colA!50] (q1) {};
  
  \node[below=2pt of q0, xshift=42pt, colA] {
    \begin{tabular}{c}
      $\Psi \defeq (s = t \land \bigvee_{\sigma \neq a} u_\sigma = 1) \lor {}$ \\
      $(s \equiv_2 0 \land \bigwedge_{\sigma \neq a} u_\sigma = 0)$
    \end{tabular}
  };
  
  \path[->]
  (q0) edge[loop above] node {\ltrans{s} $\Sigma$} ()
  (q0) edge node {\ltrans{u_\sigma} $\sigma \in \Sigma \setminus \{a\}$} (q1)
  (q1) edge[loop above] node {\ltrans{t} $\Sigma$} ()
  ;

  \renewcommand{\ltrans}[1]{{\color{colB}$#1$:}}
  
  \node[astate, initial, right=70pt of q1, accepting, fill=colB!50] (q0) {};
  \node[astate, right=1.50cm of q0, accepting, fill=colB!50] (q1) {};
  \node[astate, right=1.50cm of q1, accepting, fill=colB!50] (q2) {};
  
  \node[below=25pt of q1, colB] {
    \begin{tabular}{r}
      $\Psi \defeq (u > s + s' + t + t' \land s'' + t'' = 0)$ \\
      ${} \lor (u = s + s' + t + t' \land t + t' \leq s + s')$
    \end{tabular}
  };
  
  \path[->]
  (q0) edge[loop above] node {\ltrans{u} $a$} ()

  (q0) edge[bend  left=20] node[]     {\ltrans{s} $\sqsubset$} (q1)
  (q0) edge[bend right=20] node[swap] {\ltrans{t} $\sqsupset$} (q1)
  
  (q1) edge[loop above] node {\ltrans{s'} $\sqsubset$} ()
  (q1) edge[loop below] node {\ltrans{t'} $\sqsupset$} ()

  (q1) edge[bend  left=20] node[]     {\ltrans{s''} $\sqsubset$} (q2)
  (q1) edge[bend right=20] node[swap] {\ltrans{t''} $\sqsupset$} (q2)
  
  (q2) edge[loop above] node {$\sqsubset, \sqsupset$} ()
  ;
\end{tikzpicture}
  \end{center}
  
  In general, let $\A$ and $\A'$ be poPA with $L(\A') =
  \overline{L(\A)}$. By \Cref{prop:sig:popa}, $L(\A), L(\A') \in
  \Sigma_1[<, +, \equiv]$. Thus, $L(\A) = \overline{L(\A')} \in
  \Pi_1[<, +, \equiv]$ and so $L(\A) \in {\Delta_1[<, +, \equiv]}$.
\end{proof}

\subsection{Missing proofs from \Cref{ssec:regpp}}

\propNotAb*

\begin{proof}
  To derive a contradiction, suppose that $(ab)^+$
  belongs to $\Sigmaint_1[<, +, \equiv]$. By
  \Cref{lem:sig:normal:form}, $(ab)^+ = K_1 \cup \cdots \cup K_\ell$
  where $K_j$ is of the form $\{a_{j,0} w_{j,1} a_{j,1} \cdots
  w_{j,m_j} a_{j,m_j} \mid w_{j,i} \in \Sigma^*, a_{j,i} \in \Sigma,
  \varphi_j(x_{i, \sigma} \mapsto |w_{j,i}|_\sigma, y_{i, \sigma}
  \mapsto |a_{j,i}|_\sigma)\}$, $m_j \geq 0$ and $\varphi_j$ is a
  Presburger formula.

  Let $v \defeq (ab)^{n + 1}$ where $n \defeq \max(m_1, \ldots,
  m_\ell)$. We have $v \in L$ and so $v \in K_j$ for some $j \in
  [1..\ell]$. Let $v = a_{j,0} w_{j,1} a_{j,1} \cdots w_{j,m_j}
  a_{j,m_j}$ be decomposed as in $K_j$. By $|v| = 2n + 2 > 2m_j +
  1$ and the pigeonhole principle, there is $i \in [1..m_j]$ with
  $|w_{j,i}| \geq 2$. Let $w_{j,i}'$ be a word obtained from $w_{j,i}$
  by swapping two adjacent letters. Let $v'$ be obtained from $v$ by
  this change. As the letter counts within $w_{j,i}$ and $w_{j,i}'$
  are the same, we have $v' \in K_j$, a contradiction.
\end{proof}

\subsection{Missing proofs from \Cref{ssec:iopp-pp-succ}}

\lemPPtoIOPPsucc*

\begin{proof}
	Let $\protocol=(Q,\Sigma,O,\Delta)$ be a $\PP[+1]$-protocol.
	We construct a new $\IOPP[+1]$-protocol $\protocol'=(Q',\Sigma,O',\Delta')$ semi-deciding the same language.

        \medskip
	\noindent\emph{States and outputs.}
	For each transition $\delta=(a,b) \xrightarrow{+1} (c,d)$ in $\Delta$, we introduce auxiliary states
	$a^\delta$ and $b^{\mathsf{ack},\delta}$.
	Formally, $Q' \defeq Q \cup \{\, a^\delta,\ b^{\mathsf{ack},\delta}\mid \delta=(a,b) \xrightarrow{+1} (c,d)\in\Delta\,\}$.
	The output function ignores the auxiliary markers: $O'(q)=O(q)$, $O'(a^\delta)=O(a)$ and $O'(b^{\mathsf{ack},\delta})=O(b)$.
		
        \medskip
	\noindent\emph{Transitions.}
        The $\bfsf{true}$-transitions are left unchanged. Furthermore, 
	for each $\delta=(a,b) \xrightarrow{+1} (c,d) \in \Delta$, the set $\Delta'$ contains these four transitions:
	\begin{align*}
		a,b &\xrightarrow{+1} a^\delta,b\\
		a^\delta,b &\xrightarrow{+1}  a^\delta,b^{\mathsf{ack},\delta}\\
		a^\delta,b^{\mathsf{ack},\delta} &\xrightarrow{+1}  c,b^{\mathsf{ack},\delta}\\
		c,b^{\mathsf{ack},\delta} &\xrightarrow{+1}  c,d.
			\end{align*}
			Each of these steps updates at most one of the two agents at a time, and hence is immediate-observation.
			To avoid blocking in the presence of overlapping partial simulations, we also include the following ``clean-up'' rules:
			\begin{align*}
				a^\delta,x &\xrightarrow{+1} a,x && \text{for } x\neq b^{\mathsf{ack},\delta},\\
				y,b^{\mathsf{ack},\delta} &\xrightarrow{+1} y,d && \text{for } y\neq a^\delta.
			\end{align*}

			\noindent\emph{Step simulation.}
			\begin{claim}\label{claim:succ:step-P-to-Pprime}
				If $u,v\in Q^*$ and $u\to v$ in $\protocol$, then $u\to^* v$ in $\protocol'$.
			\end{claim}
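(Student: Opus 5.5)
The plan is to prove the claim by a case split on the kind of transition used in the step \(u \to v\) of \(\protocol\). In each case I will exhibit an explicit short sequence of steps in \(\protocol'\) that goes from \(u\) to \(v\). Since \(u \in Q^*\), \(u\) contains no auxiliary markers. The sequence will start and end in configurations over \(Q\), and every intermediate configuration will differ from \(u\) only at the two positions involved in the step.

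\emph{Case 1: a \(\bfsf{true}\)-transition.} The step uses some \(p,q \xrightarrow{\bfsf{true}} r,s \in \Delta\) at positions \(i \neq j\). This transition was copied unchanged into \(\Delta'\), and the pair \((i,j)\) satisfies \(\bfsf{true}\). So \(u \to v\) holds directly in \(\protocol'\) in a single step.

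\emph{Case 2: a successor transition.} The step uses \(\delta = (a,b) \xrightarrow{+1} (c,d)\) at positions \(i\) and \(j = i+1\), so \(u[i] = a\), \(u[i+1] = b\), and \(v\) equals \(u\) except that \(v[i] = c\) and \(v[i+1] = d\). I will apply the four handshake rules of \(\delta\) in order, always at the same pair \((i,i+1)\), which is in the \(+1\) predicate:
\begin{align*}
  u_1\, a\, b\, u_2 &\to u_1\, a^\delta\, b\, u_2 \to u_1\, a^\delta\, b^{\mathsf{ack},\delta}\, u_2 \\
  &\to u_1\, c\, b^{\mathsf{ack},\delta}\, u_2 \to u_1\, c\, d\, u_2 = v .
\end{align*}
Here \(u_1 = u[1..i-1]\) and \(u_2 = u[i+2..|u|]\). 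Each rule's left-hand side matches the current pair of states by construction. The other agents are never touched, so no clean-up rule or interference from other positions has to be considered.

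The main point needing care is a degenerate case: the handshake states of \(\delta\) might coincide with ordinary states, or the rules of \(\delta\) might be confused with those of another transition. This cannot happen, because the markers \(a^\delta\) and \(b^{\mathsf{ack},\delta}\) are fresh states indexed by \(\delta\), so each rule above is literally present in \(\Delta'\). Apart from this, the claim is immediate. The real difficulty of \Cref{lem:PP-to-IOPP-succ} lies in the converse direction, where interleaved partial handshakes have to be undone or completed using the clean-up rules, and the present claim does not need that.
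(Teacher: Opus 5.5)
Your proof is correct and matches the paper's argument. The $\mathsf{true}$ case is immediate because those transitions are copied unchanged into $\Delta'$. The $+1$ case is exactly the paper's four-step handshake $u_1 a b u_2 \to u_1 a^\delta b u_2 \to u_1 a^\delta b^{\mathsf{ack},\delta} u_2 \to u_1 c\, b^{\mathsf{ack},\delta} u_2 \to u_1 c d u_2$, applied at the same adjacent pair of positions.
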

			\begin{proof}
				This is trivial for $\bfsf{true}$-transitions. Let $\delta=(a,b)\xrightarrow{+1}(c,d)$ be the transition used in $\protocol$.
				Let $u=u_1ab\,u_2$ and $v=u_1cd\,u_2$.
				In $\protocol'$, the four transitions associated with $\delta$ yield 
				$u_1ab\,u_2\to u_1a^\delta b\,u_2\to u_1a^\delta b^{\mathsf{ack},\delta}\,u_2\to u_1c\,b^{\mathsf{ack},\delta}\,u_2\to u_1cd\,u_2$.
			\end{proof}

			\noindent\emph{Projection to $\protocol$.}
			We define a projection $f \colon Q'^*\to Q^*$ that interprets a $\protocol'$-configuration as a $\protocol$-configuration.
			For $w=w_1\cdots w_n$, define $f(w)=w'_1\cdots w'_n$ by:
			\begin{itemize}
				\item if $w_i\in Q$, then $w'_i=w_i$;
				\item if $w_i=a^\delta$ for some $\delta=(a,b) \xrightarrow{+1} (c,d)$ and $i<n$ with $w_{i+1}=b^{\mathsf{ack},\delta}$, then $w'_i=c$;
				\item if $w_i=a^\delta$ but the above condition fails, then $w'_i=a$;
				\item if $w_i=b^{\mathsf{ack},\delta}$ for some $\delta=(a,b)\to(c,d)$, then $w'_i=d$.
			\end{itemize}
			
			\begin{claim}\label{claim:succ:run-Pprime-to-P}
			  Let $u\in Q^*$ and $v\in Q'^*$. If $u \to^* v$ in $\protocol'$, then $u \to^* f(v)$ in $\protocol$.
			\end{claim}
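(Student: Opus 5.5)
By induction on the length of the run $u \to^* v$ in $\protocol'$, it suffices to show the following single-step property: for every configuration $w$ of $\protocol'$ with $f(w)$ reachable in $\protocol$ and every step $w \to w'$ in $\protocol'$, we have $f(w) \to^* f(w')$ in $\protocol$, in fact with either $f(w')=f(w)$ or $f(w)\to f(w')$. The base case holds because $u\in Q^*$ gives $f(u)=u$. I would also note an invariant: every marker $b^{\mathsf{ack},\delta}$ at position $i+1$ was created by the second handshake rule while position $i$ held $a^\delta$. Hence a marker $b^{\mathsf{ack},\delta}$ not preceded by $a^\delta$ arises only after its partner has already moved to $c$, through the third rule.

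I would then do a case analysis on the transition used in $w\to w'$.

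\begin{enumerate}
\item \textbf{A $\bfsf{true}$-transition} between two agents in $Q$ is mirrored verbatim in $f$. If the $\bfsf{true}$-transitions are allowed to act on auxiliary states, they must be defined through the same projection, or restricted to $Q$-states; I would fix this in the construction so that $f$ commutes with them.
\item \textbf{The rule $a,b\to a^\delta,b$} leaves $f$ unchanged, since $a^\delta$ without its acknowledgement projects to $a$.
\item \textbf{The rule $a^\delta,b\to a^\delta,b^{\mathsf{ack},\delta}$} changes $f$ from $\dots ab\dots$ to $\dots cd\dots$. This is exactly one application of $\delta$ at adjacent positions, so it is a valid $+1$ step of $\protocol$.
\item \textbf{The rule $a^\delta,b^{\mathsf{ack},\delta}\to c,b^{\mathsf{ack},\delta}$} leaves $f$ unchanged ($c,d$ before and after).
\item \textbf{The rule $c,b^{\mathsf{ack},\delta}\to c,d$} leaves $f$ unchanged.
\item \textbf{The clean-up rule $a^\delta,x\to a,x$ with $x\neq b^{\mathsf{ack},\delta}$} leaves $f$ unchanged.
\item \textbf{The clean-up rule $y,b^{\mathsf{ack},\delta}\to y,d$ with $y\neq a^\delta$} leaves $f$ unchanged.
\end{enumerate}

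The delicate point is context. A step at positions $(i,i+1)$ may change how a neighbour projects: the value of $f$ at an $a^\delta$ at position $i-1$ depends on position $i$. So I must check that such a step never breaks an existing pair $a^\delta b^{\mathsf{ack},\delta}$ except through the fourth rule, which keeps $f$ consistent. Breaking a pair means changing an agent in state $b^{\mathsf{ack},\delta}$ or $a^\delta$. The only rules that move $a^\delta$ are the fourth rule, which requires the pair, and the clean-up rule, which requires the absence of the pair. The only rules that move $b^{\mathsf{ack},\delta}$ are the fifth rule and the last clean-up rule, which both require the left neighbour not to be $a^\delta$.

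The other concern is a state that is simultaneously a $Q$-state $a$ and the target of several handshakes. Since auxiliary states are tagged by $\delta$, the projection is unambiguous. Overlapping handshakes are also harmless: an agent in $b^{\mathsf{ack},\delta}$ cannot itself start another handshake, as it lies outside $Q$, so no chain of pending markers ever double-counts a transition.

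I expect the main obstacle to be the bookkeeping for these neighbour effects and overlapping handshakes, in particular showing that creating an acknowledgement (the third case) is the only rule that changes $f$, and that it changes $f$ by exactly one legal $\delta$-step. The simplest framing is to check, rule by rule, the $f$-value of the at most three affected positions $i-1,i,i+1$.
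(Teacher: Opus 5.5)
Your proposal is correct and follows the paper's proof: you induct on the length of the run, and check rule by rule that the acknowledgement step $a^\delta, b \xrightarrow{+1} a^\delta, b^{\mathsf{ack},\delta}$ is the only handshake or clean-up rule that changes $f$, and that it does so by exactly one application of $\delta$; the $\bfsf{true}$-transitions, which in the construction act only on $Q$-states, are mirrored verbatim. Your explicit check of the neighbouring positions $i-1,i,i+1$ is a useful refinement that the paper leaves implicit, while your invariant about the origin of $b^{\mathsf{ack},\delta}$ markers is true but not needed, since $f$ sends every acknowledgement state to $d$ regardless of its left neighbour.
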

			\begin{proof}
				We prove the claim by induction on the length of $u \to^* v$ in $\protocol'$.
				The only step that can change the projected configuration is the acknowledgement step
				$(a^\delta,b)\xrightarrow{+1}(a^\delta,b^{\mathsf{ack},\delta})$, in which case $f$ precisely applies the corresponding transition
				$\delta=(a,b)\xrightarrow{+1}(c,d)$ on the projected word. All other rules leave $f(\cdot)$ unchanged.
			\end{proof}
			
			\begin{claim}\label{claim:succ:normalize-to-f}
				For every $v\in Q'^*$ reachable in $\protocol'$, it is the case that $v \to^* f(v)$ in $\protocol'$.
			\end{claim}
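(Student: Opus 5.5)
We must show that every configuration $v$ reachable in $\protocol'$ can reach $f(v)$ in $\protocol'$. Since the claim is about a single configuration, the plan is to exhibit an explicit sequence of $\protocol'$-steps that resolves every auxiliary marker in $v$, one at a time, so that the resulting word equals $f(v)$ letter by letter. The first step is an invariant, proved by induction along the run from an initial configuration in $Q^*$: whenever $v_i = b^{\mathsf{ack},\delta}$ with $\delta=(a,b)\xrightarrow{+1}(c,d)$, either $v_{i-1}=a^\delta$ or the acknowledgement is ``orphaned''. In both cases $f$ assigns $d$ to position $i$, and $f$ assigns $c$ to position $i-1$ exactly when the pair is still matched. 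Reachability matters only because it guarantees that markers arise from the four handshake rules. The normalization itself does not rely on any finer invariant, since $f$ is defined locally.

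The normalization proceeds in three phases, processing positions from left to right.
\begin{itemize}
\item \textbf{Matched pairs.} For each matched pair $a^\delta b^{\mathsf{ack},\delta}$ at positions $i-1,i$, fire $a^\delta,b^{\mathsf{ack},\delta}\xrightarrow{+1} c,b^{\mathsf{ack},\delta}$ and then $c,b^{\mathsf{ack},\delta}\xrightarrow{+1}c,d$. This yields $c\,d$, which is exactly $f$ on these two positions.
\item \textbf{Orphaned acknowledgements.} For each remaining $b^{\mathsf{ack},\delta}$ at position $i$ whose left neighbour is not $a^\delta$, fire the clean-up rule $y,b^{\mathsf{ack},\delta}\xrightarrow{+1}y,d$, which produces $d=f(v)_i$. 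If $i=1$ there is no left neighbour, and this corner case needs care; the invariant should exclude it, because an acknowledgement is only created to the right of some $a^\delta$.
\item \textbf{Unmatched announcements.} For each remaining $a^\delta$ whose right neighbour is not $b^{\mathsf{ack},\delta}$, fire $a^\delta,x\xrightarrow{+1}a,x$, which yields $a=f(v)_i$. A marker in the last position again has no right neighbour. To handle this, I would either argue that an $a^\delta$ can only be created with a right neighbour, which holds since it is created by a $+1$-transition, or slightly adjust $f$ or the rules. I expect this to hold automatically, because $+1$-transitions only fire on adjacent pairs.
\end{itemize}

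The main obstacle is interference between neighbouring markers when the phases are carried out. Performing one resolution can change the left neighbour $y$ or the right neighbour $x$ of another marker, and thereby change whether that marker counts as matched. For example, in a chain $a^{\delta}\,b^{\mathsf{ack},\delta}$ where the acknowledging agent is itself an $a^{\delta'}$ marker, rewriting one letter can break or create matchings. To control this, I will order the steps so that every rewrite is computed against the matching status in the \emph{original} $v$. Concretely, matched pairs are resolved first; this is safe because the positions involved become plain $Q$-states equal to their $f$-images, and the clean-up rules only inspect whether the neighbour is the specific partner marker. Then orphaned acknowledgements are resolved, and finally unmatched announcements.

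For each phase I must check that the neighbour condition of the rule fired, namely $y\neq a^\delta$ or $x\neq b^{\mathsf{ack},\delta}$, still holds at the moment of firing. This holds because a state can only turn into such a partner marker through handshake rules, which this sequence never uses. Each step fixes its positions to their $f$-values and never revisits them, so the process terminates in $f(v)$.
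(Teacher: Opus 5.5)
Your proposal is correct and follows essentially the paper's argument: complete each matched pair $a^\delta b^{\mathsf{ack},\delta}$ with the last two handshake rules, and discharge every remaining marker with a clean-up rule. The order of these steps does not matter, because every rule you fire outputs only $Q$-states and so never creates a marker. Strictly, you do use two handshake rules, just not the two marker-creating ones, so the phrase ``handshake rules, which this sequence never uses'' should be narrowed to those two.

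Your version is also slightly more complete than the paper's sketch. It explicitly turns orphaned acknowledgements into $d$ via the second clean-up rule, and it records that no $b^{\mathsf{ack},\delta}$ sits at position $1$ and no $a^\delta$ sits at the last position, which is exactly where reachability is used.
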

			\begin{proof}
				Starting from $v$, repeatedly apply the clean-up rules to remove every pending marker $a^\delta$
				not followed by the matching $b^{\mathsf{ack},\delta}$, and then complete each matching pair
				$a^\delta b^{\mathsf{ack},\delta}$ by the two remaining simulation steps
				$(a^\delta,b^{\mathsf{ack},\delta}) \xrightarrow{+1} (c,b^{\mathsf{ack},\delta}) \xrightarrow{+1} (c,d)$.
				This yields $f(v)$.
			\end{proof}
			
			We are now ready to prove the lemma. Let us show that $w \in L(\protocol)$ iff $w \in L(\protocol')$. 
			
			 $\Rightarrow$) Let $w \to^* u$ in $\protocol'$. Thanks to \Cref{claim:succ:run-Pprime-to-P}, we know that $w \to^* f(u)$ in $\protocol$. Since $w \in L(\protocol)$, configuration $f(u)$ can reach a $\top$-stable configuration $v$ in $\protocol$. Hence, thanks to \Cref{claim:succ:normalize-to-f} and \Cref{claim:succ:step-P-to-Pprime}, we conclude that $u \to^* f(u) \to^* v$ in $\protocol'$. To derive a contradiction, suppose that $v$ is not $\top$-stable in $\protocol'$. Configuration $v$ can reach a non-$\top$-consensus $v'$ in $\protocol'$.
			Consider $v'$ to be such a configuration reachable in a minimal number of steps. The last step must be a transition of the form $(a^{\delta}, b^{\textsf{ack}, \delta}) \xrightarrow{+1} (c, b^{\textsf{ack}, \delta})$ with $O'(c)=O(c)=\bot$, or 
			$(c, b^{\textsf{ack}, \delta}) \xrightarrow{+1} (c, d)$, with  $O'(d)=O(d)=\bot$.
			Hence, $f(v')$ is not a $\top$-consensus in $\protocol$.
			Therefore, $f(v) = v \to^* f(v')$ in $\protocol$ by \Cref{claim:succ:run-Pprime-to-P}, contradicting the $\top$-stability of $v$ in $\protocol$.			
			
			$\Leftarrow$) Let $w \to^* u$ in $\protocol$. 
			By \Cref{claim:succ:step-P-to-Pprime}, $w \to^* u$ in $\protocol'$. As $w \in L(\protocol')$, configuration $u$ can reach a $\top$-stable configuration $v$ in $\protocol'$. From \Cref{claim:succ:normalize-to-f}, we have $v \to^* f(v)$ in $\protocol'$. Note that $f(v)$ is $\top$-stable in $\protocol'$ by construction. Moreover, from \Cref{claim:succ:run-Pprime-to-P}, we have $u \to^* f(v)$ in $\protocol$. To derive contradiction, suppose that $f(v)$ is not $\top$-stable in $\protocol$. We have $f(v) \to^* v'$ in $\protocol$ for some non-$\top$-consensus $v'$. Thanks to \Cref{claim:succ:step-P-to-Pprime}, we obtain $f(v) \to^* v'$ in $\protocol'$, contradicting the $\top$-stability of $f(v)$ in $\protocol'$. Hence, $f(v)$ is $\top$-stable in $\protocol$, and so $w \in L(\protocol)$.
\end{proof}

\newcommand{\qstate}[1]{\mathsf{sta}(#1)}
\newcommand{\qpos}[1]{\mathsf{pos}(#1)}

The rest of the subsection is dedicated to proving the following
lemma.

\lemPPsuccSemiDecidesLTM*

Let us fix a linear-bounded nondeterministic Turing machine $M = (P,
\Sigma, \Sigma', \delta, p_0, p_\text{acc}, p_\text{rej})$ where $P$
is the set of states; $\Sigma$ is the input alphabet; $\Sigma'
\supseteq \Sigma$ is the tape alphabet; $\delta \subseteq (P \times
\Sigma') \times (P \times \Sigma' \times \{\Tleft, \Tright\})$ is the
transition relation; and $p_0, p_\text{acc}, p_\text{rej}$ are
respectively the initial, accepting and rejecting states.

Traditionally, on input $w \in \Sigma$, a linear-bounded automaton has
access to $\lmarker w \rmarker$; has no left-move on $\lmarker$; has
no right-move on $\rmarker$; and cannot overwrite the endmarkers. For
convenience, we consider instead that there are no endmarkers, but
that the head stays put when moving left (resp.\ right) from the first
(resp.\ last) cell. It is simple to construct such a machine with the
same language. We further assume that $M$ may only enter
$p_\text{acc}$ with its head on the rightmost cell, and that there is
no transition from $p_\text{acc}$.

Let $\Gamma \defeq (P \cup \{-\}) \times \Sigma' \times \{\text{fst},
-, \text{lst}\}$. For every $s = (p, a, x) \in \Gamma$, let
$\qstate{s} \defeq p$ and $\qpos{s} \defeq x$. We say that a word $w
\in \Gamma^+$, with $|w| \geq 2$, is a \emph{valid configuration} of
$M$ if
\begin{itemize}  
\item $\qpos{w[i]} = \text{fst}$ iff $i = 1$,

\item $\qpos{w[i]} = \text{lst}$ iff $i = |w|$,

\item there is a unique position $i$ with $\qstate{w[i]} \in P$.
\end{itemize}

For example, $w = (-, a, \text{fst}) (p, a, -) (-, b, \text{lst})$
represents the configuration where $M$ is in state $p$, with its head
on the second cell, and its tape containing $aab$.

\begin{proposition}\label{prop:simulate:tm}
  Let $$L \defeq \{w \in \Gamma^{\geq 2} \mid \text{$w$ is a valid
    configuration of $M$ and it leads to acceptance}\}.$$ One can
  construct $\PP[+1]$-protocol that semi-decides $L$ with stabilizing
  inputs. Furthermore, its $\bfsf{true}$-transitions are
  immediate-observations.
\end{proposition}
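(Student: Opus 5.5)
We construct an input-saving protocol $\cP$ whose states are $\Gamma \times \Gamma \times \{\top,\bot\}$ plus a small flag component. The first component is the saved input, the second is the current simulated cell of $M$, and the third is a belief; the opinion is the belief. Initially an agent with input $s$ is $(s, s, \bot)$.

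\emph{Simulation of $M$.} A transition $(p,a)\to(p',a',\Tright)$ of $M$ becomes a $+1$-transition acting on the second components of two adjacent agents. It turns $(p,a,x),(-,b,y)$ into $(-,a',x),(p',b,y)$. Left moves are symmetric, with the head agent on the right. At a border cell (pos $=$ fst or lst), a move that would leave the tape keeps the head in place and is simulated by a transition updating only the head agent. An agent whose simulated state is $p_{\text{acc}}$ turns its belief to $\top$; it necessarily sits on the rightmost cell, by our assumption on $M$. Beliefs then spread by $\bfsf{true}$ immediate-observation rules: an agent observing a $\top$ agent may become $\top$, and an agent observing a $\bot$ agent may become $\bot$. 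The agent holding $p_{\text{acc}}$ never becomes $\bot$, so the belief race ends in $\top$ by fairness. $\bot$ is propagated only while no $p_{\text{acc}}$ is present.

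\emph{Invalid inputs.} These are detected by $\bfsf{true}$ immediate-observation rules. An agent carrying a state in $P$ (in the saved input) that observes another such agent switches to $\bot$ and may later reset to $\top$-eligibility, oscillating forever as in \Cref{prop:io:eqone}. Agents flagged fst (resp.\ lst) observing an agent to their left (resp.\ right) behave the same way. The latter check needs order, so we instead use a $+1$-check: an fst agent with a left neighbour, or an lst agent with a right neighbour, is detected by a $+1$-transition on the pair. A position $i\neq 1$ never observes fst, so missing markers are handled through the opinion. If there is no head at all, or fst or lst is missing, nothing ever produces $p_{\text{acc}}$, so no agent becomes $\top$. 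Such checks only need to prevent $\top$-stability, which is all a semi-decider requires.

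\emph{Stabilizing inputs.} This is the main obstacle: after changes of mind, the simulated tape may be an arbitrary mixture of old computations. We add a reset mechanism. An agent whose saved input differs from its last-seen input (kept as an extra component, as in \Cref{prop:io:eqone}) records the new input and raises a \emph{reset} flag. By a $\bfsf{true}$ immediate-observation rule, the flag is copied to the last agent; the last agent identifies itself by pos $=$ lst in its saved input. From the last agent, a reset wave travels leftwards via $+1$-transitions. Each agent reached restores its second component to its saved input and belief $\bot$, and the wave dies at the fst agent. These $+1$-transitions need not be immediate-observation, since only $\bfsf{true}$-transitions must be.

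For correctness, after the last input change only finitely many resets are triggered. Any accepting run of $M$ can be replayed after the final reset, so by fairness a valid accepted input reaches the $\top$-consensus, which is then stable since no reset or simulation step can reintroduce $\bot$. Conversely, a $\top$-stable configuration requires some $p_{\text{acc}}$ derived from a genuine simulation of $M$ on $\inpt{v}$ after the last completed wave. Ensuring this is the delicate point. I would attempt to guarantee it by making a partially reset tape unable to reach $\top$-stability, since any pending flag keeps a wave and hence $\bot$ available. A run of $M$ whose simulation started before the last wave is wiped out by that wave.
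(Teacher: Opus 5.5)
Your overall architecture is the paper's: a saved input plus a working copy, $+1$-simulation of $M$ with single-agent border moves, a $+1$-check for misplaced markers, an observation check for two heads, and a reset wave started at the lst agent and travelling leftwards. However, the soundness direction, which you only sketch as an attempt, fails for the protocol as you describe it.

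The root cause is threefold: $\top$ is created by any agent holding $p_{\text{acc}}$, it spreads by position-insensitive $\bfsf{true}$-observations, and the opinion is just the belief. Once every agent believes $\top$, none of your rules can recreate a $\bot$, so every $\top$-consensus is stable. Concretely:
(i) Your claim that a missing marker prevents $p_{\text{acc}}$ is false. On the invalid input $(-,a,-)\,(p_{\text{acc}},b,\text{lst})$, agent~$2$ turns $\top$, agent~$1$ copies it, and none of your checks fires.
(ii) The wave does not erase $\top$. Start from $(-,a,\text{fst})\,(-,a,-)\,(p_{\text{acc}},b,\text{lst})\in L$ with all agents at $\top$, and let agent~$3$ change its mind to $(-,b,\text{lst})$. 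After the wave resets agent~$3$, it re-acquires $\top$ by observing agent~$1$, which the wave has not reached yet. Then agent~$2$ is reset and copies $\top$ from agent~$3$, and finally agent~$1$ is reset and copies $\top$ from agent~$2$. The tape now has no head and no flag is pending, so this $\top$-consensus is stable although $\inpt{v}\notin L$. The same run works if the new input is valid but rejected.
(iii) If a change of mind deletes the lst marker, no wave ever starts. The stale $\top$-consensus is then stable, because your opinion ignores the pending flag.

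The paper's construction is shaped precisely to block these runs.
Its opinion is $(x=y)\land(o=\top)$, so an agent that changed its mind counts as $\bot$ until a wave actually resets it; this handles (iii).
A fresh $\top$ is produced only by rule~(5), when the fst agent observes the lst agent holding $p_{\text{acc}}$, so both markers must be present; this handles (i).
Finally, $\top$ spreads only rightwards along $+1$ (rule~(6)), while the reset token travels leftwards (rule~(8)) and ends by resetting the fst agent to $\bot$ (rule~(9)). Hence no $\top$ survives a completed wave, and a new $\top$ requires a fresh $p_{\text{acc}}$; this handles (ii).

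For this to work, and for your claim that an old simulation ``is wiped out by the wave'', the token holder must also be excluded from $\top$-propagation and from simulation steps. Otherwise $\top$ or an old head can slip past the token, a point the paper's $\ast$-notation glosses over as well.

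Independently, completeness ``by fairness'' needs that the simulation cannot commit to a dead branch of the nondeterministic $M$ (e.g.\ let $M$ restart from a stored copy of its input). The paper also leaves this implicit.
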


\newcommand{\sRes}{\text{\raisebox{1pt}{\scriptsize\faUserCog}}}

\begin{proof}
  Let us construct a protocol $\cP = (Q, \Gamma, O, \Delta)$. We
  define $Q \defeq \Gamma \times \Gamma \times \Gamma \times \{\bot,
  \top, \sRes\}$ and $O((x, y, z, o)) \defeq (x = y \land o =
  \top)$. We identify each input $\gamma \in \Gamma$ with $(\gamma,
  \gamma, \gamma, \bot)$.

  Intuitively, the state $(x, y, z, o)$ indicates that the agent has
  input $x$; has been acting as if its input was $y$; currently holds
  $z$; and has output belief $o$. Whenever $x \neq y$, the agent has
  changed its mind on its input, and must consequently reset the whole
  population.

  The transitions of $\Delta$ are defined by these rules:
  \begin{center}
    \begingroup
    \setlength{\tabcolsep}{2pt}
    \resizebox{\textwidth}{!}{%
    \begin{tabular}{crclp{0pt}l}
      \toprule
      & \multicolumn{3}{c}{\textbf{\emph{Simulation of $M$}}} \\
      \midrule
      (1) &
      $(\ast, \ast, (p, a, \ast), \ast), (\ast, \ast, (-, \ast, \ast), \ast)$
      &
      $\xrightarrow{+1}$ &
      $(\ast, \ast, (-, b, \ast), \ast), (\ast, \ast, (q, \ast, \ast), \ast)$
      &&
      for $((p, a), (q, b, \Tright)) \in \delta$ \\
      (2) &
      $(\ast, \ast, (-, \ast, \ast), \ast), (\ast, \ast, (p, a, \ast), \ast)$
      &
      $\xrightarrow{+1}$ &
      $(\ast, \ast, (q, \ast, \ast), \ast), (\ast, \ast, (-, b, \ast), \ast)$
      &&
      for $((p, a), (q, b, \Tleft)) \in \delta$ \\
      (3) &
      $(x, \ast, (p, a, \ast), \ast)$
      &
      $\xrightarrow{\bfsf{true}}$ &
      $(x, \ast, (q, b, \ast), \ast)$
      &&
      for $((p, a), (q, b, \Tright)) \in \delta$ and $\qpos{x} = \text{lst}$ \\
      (4) &
      $(x, \ast, (p, a, \ast), \ast)$
      &
      $\xrightarrow{\bfsf{true}}$ &
      $(x, \ast, (q, b, \ast), \ast)$
      &&
      for $((p, a), (q, b, \Tleft)) \in \delta$ and $\qpos{x} = \text{fst}$ \\
      \midrule
      & \multicolumn{3}{c}{\textbf{\emph{Belief propagation}}} \\
      \midrule
      (5) &
      $(x, \ast, \ast, \ast), (x', \ast, z', \ast)$ &
      $\xrightarrow{\bfsf{true}}$ &
      $(x, \ast, \ast, \top), (x', \ast, z', \ast)$ &&
      if $\qpos{x} = \text{fst} \land \qpos{x'} = \text{lst} \land
      \qstate{z'} = p_\text{acc}$ \\
      (6) &
      $(\ast, \ast, \ast, \top), (\ast, \ast, \ast, \ast)$ &
      $\xrightarrow{+1}$ &
      $(\ast, \ast, \ast, \top), (\ast, \ast, \ast, \top)$ \\
      \midrule
      & \multicolumn{3}{c}{\textbf{\emph{Reset on input changes}}} \\
      \midrule
      (7) &
      $(x, y, \ast, \ast), (x', y', \ast, \ast)$ &
      $\xrightarrow{\bfsf{true}}$ &
      $(x, y, \ast, \ast), (x', y', \ast, \sRes)$ &&
      if $\qpos{x'} = \text{lst} \land (x \neq y \lor x' \neq y')$ \\
      (8) &
      $(\ast, \ast, \ast, \ast), (x', \ast, \ast, \sRes)$ &
      $\xrightarrow{+1}$ &
      $(\ast, \ast, \ast, \sRes), (x', x', x', \bot)$ \\
      (9) &
      $(x, \ast, \ast, \sRes)$ & $\xrightarrow{\bfsf{true}}$ &
      $(x, x, x, \bot)$ && if $\qpos{x} = \text{fst}$ \\
      \midrule
      & \multicolumn{3}{c}{\textbf{\emph{Erroneous configuration detection}}} \\
      \midrule
      (10) &
      $(x, \ast, \ast, \ast), (x', \ast, \ast, \ast)$ &
      $\xrightarrow{\bfsf{true}}$ &
      $(x, \ast, \ast, \ast), (x', \ast, \ast, \bot)$ &&
      if
      $\qstate{x}, \qstate{x'} \in P$ \\
      (11) &
      $(x, \ast, \ast, \ast), (x', \ast, \ast, \ast)$ &
      $\xrightarrow{\bfsf{true}}$ &
      $(x, \ast, \ast, \ast), (x', \ast, \ast, \bot)$ &&
      if $\qpos{x} = \qpos{x'} \in \{\text{fst}, \text{lst}\}$ \\
      (12) &
      $(x, \ast, \ast, \ast), (x', \ast, \ast, \ast)$ &
      $\xrightarrow{+1}$ &
      $(x, \ast, \ast, \bot), (x', \ast, \ast, \bot)$ &&
      if $\qpos{x} = \text{lst} \lor \qpos{x'} = \text{fst}$ \\
      \bottomrule
    \end{tabular}}
    \endgroup
  \end{center}

  Note that all $\bfsf{true}$-transitions are
  immediate-observation. Let us show that $\cP$ semi-decides $L$ with
  stabilizing inputs. Let $u \in \Gamma^n$. For the special case of $n
  = 1$, note that each state starts with output $\bot$ and this
  remains so as no transition is ever enabled. Thus, let us assume
  that $n \geq 2$.

  Let $u = u_0 \dynto u_1 \dynto \cdots \dynto u_\ell = v$, and let $v
  = w_0 \to w_1 \to \cdots$ be a fair run. Without loss of generality,
  we may assume that $\ell = 0$ or $\inpt{u_\ell} \neq
  \inpt{u_{\ell-1}}$. Indeed, it suffices to take $\ell$ as the last
  moment where the input changes.

  We make a case distinction, where each case assumes that the
  previous ones do not hold.
  
  \medskip
  \noindent\emph{Case~1: $\inpt{v}$ is a valid configuration.} By
  validity, rules~(10--12) are permanently disabled from $v$ onwards.

  If $\ell > 0$, then there exists a position $i$ such that $v[i] =
  (x, y, z, o)$ with $x \neq y$. By validity of $\inpt{v}$, we have
  $\qpos{\inpt{v}[1]} = \text{fst}$ and $\qpos{\inpt{v}[n]} =
  \text{lst}$. Thus, by fairness, the population must eventually use
  rule~(7) to assign \sRes\ to agent $n$; use rule~(8) to reset the
  population from right to left; and use rule~(9) to complete this
  reset with agent~$1$. From there, rules~(7--9) are permanently
  disabled. Thus, by fairness, $M$ is simulated faithfully with
  rules~(1--4), and $\top$ eventually spreads with rules~(5--6).

  The case of $\ell = 0$ is the same except for the fact that
  rules~(7--9) are disabled from the very beginning, where the
  simulation of $M$ already begins.

  \medskip
  \noindent\emph{Case~2: $\inpt{v}$ has several ``fst'', ``lst'' or
  states from $P$; or a wrongly placed ``fst'' or ``lst''.} Since the
  first component of each agent stops changing from $v$ onwards,
  fairness and rules~(10--12) guarantee that at least one $\bot$-belief
  must occur infinitely often.

  \medskip
  \noindent\emph{Case~3: $\ell = 0$ and $\inpt{v}$ is invalid.}
  Rule~(5) is never enabled and hence no $\top$ ever appears.
  
  \medskip
  \noindent\emph{Case~4: $\ell > 0$ and $\inpt{v}$ is invalid.} By
  assumption, there is a position $i$ such that $v[i] = (x, y, z, o)$
  with $x \neq y$. By the latter, we have $O(v[i]) = \bot$. To obtain
  a $\top$-consensus, agent $i$ must be reset. This requires having at
  least one \sRes\ in the population. To get rid of all \sRes,
  rule~(9) must eventually be used. If this happens, then a
  ``fst''-agent now has belief $\bot$. By assumption, this agent must
  be the first one. The only way to set its belief to $\top$ is by
  using rule~(5), which is impossible by assumption on $\inpt{v}$.
\end{proof}

\begin{proposition}
  The language $L(M)$ is $\IOPP[+1]$-semi-decidable.
\end{proposition}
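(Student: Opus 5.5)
The plan follows the outline in the proof sketch of \Cref{lem:PP-succ-SemiDecides-LTM}. We start from \Cref{prop:simulate:tm}: it gives a $\PP[+1]$-protocol that semi-decides, with stabilizing inputs, the language $L$ of valid configurations of $M$ that lead to acceptance. Its $\bfsf{true}$-transitions are already immediate-observation. The remaining work is to restrict to initial configurations, project onto the input alphabet $\Sigma$, and make every transition immediate-observation.

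\textbf{Restricting to initial configurations.} We intersect $L$ with the language $I \subseteq \Gamma^{\geq 2}$ of initial configurations, namely words of the form $(p_0, a_1, \text{fst})(-, a_2, -)\cdots(-, a_{n-1}, -)(-, a_n, \text{lst})$ with $a_i \in \Sigma$. This language is $0\,1^*\,2$ over the subalphabets $\{(p_0,a,\text{fst})\}$, $\{(-,a,-)\}$ and $\{(-,a,\text{lst})\}$ with $a \in \Sigma$. We obtain it from \Cref{prop:ord:semi} for $0^*1^*2^*$ together with \Cref{prop:io:eqone} (exactly one $0$ and exactly one $2$), combined with \Cref{lem:inter:stab}, and then transfer it to $\Gamma$ via \Cref{lem:renaming}. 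The letters of $\Gamma$ outside these three subalphabets must be excluded. We handle this by applying \Cref{lem:renaming} to the marked version and then intersecting; all of these languages are $\IOPP[<]$-semi-decidable with stabilizing inputs.

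\textbf{Avoiding the $<$ predicate.} The protocol should live in $\IOPP[+1]$, so we should not use $<$. Instead, we check the shape $0\,1^*\,2$ with a successor-based variant of the protocol of \Cref{prop:ord:semi}. Any misordering in a word over $\{0,1,2\}$ shows up at some adjacent pair, so the rule $(x,\top),(y,o) \xrightarrow{+1} (x,\bot),(y,o)$ for $x>y$, together with the reset rule, works verbatim. Intersecting with the protocol of \Cref{prop:simulate:tm} via \Cref{lem:inter:stab} then gives a $\PP[+1]$-protocol semi-deciding $L \cap I$ with stabilizing inputs, whose $\bfsf{true}$-transitions are all immediate-observation.

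\textbf{Projection and immediate observation.} Next we apply \Cref{lem:renaming} with $f \colon \Gamma \to 2^{\Sigma}$ defined by $f((p,a,x)) = \{a\}$. Since $L \cap I$ consists exactly of the initial configurations of accepted words, $f(L \cap I) = L(M) \cap \Sigma^{\geq 2}$. The renaming construction adds only $\bfsf{true}$-rules, and these are immediate-observation. Finally, \Cref{lem:PP-to-IOPP-succ} replaces each $+1$-transition by immediate-observation ones. This yields an $\IOPP[+1]$-protocol that semi-decides $L(M)$; words of length one are handled by choosing $\sigma_\gamma$ appropriately, as in the footnote of \Cref{lem:renaming}.

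\textbf{Main obstacle.} The delicate step is \Cref{lem:PP-to-IOPP-succ}. It is stated only for plain semi-deciders, not with stabilizing inputs, and it must be applied last. We therefore need to check that its handshake construction does not alter the $\bfsf{true}$-transitions, so that they remain immediate-observation. We also need the lemma's correctness argument, which rests on the projection $f$ and on normalization, to go through with the mixed $\bfsf{true}$ and $+1$ rule set. A second point to check is that every closure lemma we use, including \Cref{lem:inter:stab} and \Cref{lem:renaming}, holds over $\cN=\{+1\}$; the proofs in the paper are parametric in $\cN$, so this should be fine.
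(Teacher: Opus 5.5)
Your route is the paper's: intersect the language of \Cref{prop:simulate:tm} with the initial configurations (via the successor variant of \Cref{prop:ord:semi}, \Cref{prop:io:eqone}, \Cref{lem:inter:stab} and \Cref{lem:renaming}), project onto the tape letters with \Cref{lem:renaming}, and apply \Cref{lem:PP-to-IOPP-succ} last. All of this is sound. However, what it produces is an $\IOPP[+1]$-protocol semi-deciding $L(M)\cap\Sigma^{\geq 2}$, not $L(M)$.

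The step that fails is your handling of one-letter words. The choice of $\sigma_\gamma$ in \Cref{lem:renaming} can only make a lone agent accept if some single letter belongs to the language being renamed. Here that language, $L\cap I$, contains only words of length at least $2$, since valid configurations have length $\geq 2$ by definition. Concretely, every initial state $(\gamma,\gamma,\gamma,\bot)$ of the protocol of \Cref{prop:simulate:tm} has opinion $\bot$. The intersection takes a conjunction of opinions, and the renaming and \Cref{lem:PP-to-IOPP-succ} copy opinions. So every initial state of your final protocol has opinion $\bot$, whatever $\sigma_a$ you pick. Since a single agent can never fire a transition, every input $a\in\Sigma$ is rejected. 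Hence, if $M$ accepts some one-letter word, your protocol is wrong on it.

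The fix is the paper's. The language $L(M)\cap\Sigma$ is trivially $\IOPP[\emptyset]$-decidable: take states $\Sigma\cup\{q_\bot\}$, opinion $\top$ exactly on the letters of $L(M)\cap\Sigma$, and the immediate-observation rules $x,y\xrightarrow{\bfsf{true}}q_\bot,y$. You then conclude by taking the union with your semi-decider for $L(M)\cap\Sigma^{\geq 2}$ using \Cref{lem:inter:semi}, which preserves immediate observation.
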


\begin{proof}
  Let $K$ denote the set of valid initial configurations of
  $M$. Formally, let $K \defeq \Gamma_0 \Gamma_1^* \Gamma_2$ where
  $\Gamma_0 \defeq \{p_0\} \times \Sigma \times \{\text{fst}\}$,
  $\Gamma_1 \defeq \{-\} \times \Sigma \times \{-\}$ and $\Gamma_2
  \defeq \{-\} \times \Sigma \times \{\text{lst}\}$.

  Let $f(i) \defeq \Gamma_i$ and $U_i \defeq \{w \in \{0, 1, 2\}^+ :
  |w|_i = 1\}$. We have
  \[
  K = f(0^* 1^* 2^* \cap U_0 \cap U_2).
  \]

  By \Cref{prop:ord:semi}, the language $0^* 1^* 2^*$ is
  $\IOPP[<]$-semi-decidable with stabilizing inputs. Replacing ``$<$''
  by ``$+1$'' in its protocol directly yields
  $\IOPP[+1]$-semi-decidability with stabilizing inputs. Thus, by
  \Cref{prop:io:eqone} and by \Cref{lem:renaming,lem:inter:stab}, the
  language $K$ is $\IOPP[+1]$-semi-decidable with stabilizing inputs.

  Let $K' \defeq \{w \in \Gamma^{\geq 2} \mid \text{$w$ is a valid
    configuration of $M$ and it leads to acceptance}\}$. By
  \Cref{prop:simulate:tm}, there is a $\PP[+1]$-protocol that
  semi-decides $K'$ with stabilizing inputs and whose
  $\bfsf{true}$-transitions are immediate-observations. Let $g((x, y,
  z)) \defeq y$. We have
  \[
  L(M) \cap \Sigma^{\geq 2} = g(K \cap K').
  \]
  Thus, by \Cref{lem:inter:stab,lem:renaming}, the language $L(M) \cap
  \Sigma^{\geq 2}$ is $\PP[+1]$-semi-decidable. Moreover, these two
  lemmas preserve immediate-observation
  $\bfsf{true}$-transitions. Hence, by \Cref{lem:PP-to-IOPP-succ}, the
  language $L(M) \cap \Sigma^{\geq 2}$ is $\IOPP[+1]$-semi-decidable.

  We can trivially show that $L(M) \cap \Sigma$ is
  $\IOPP[\emptyset]$-decidable. Thus, we are done by taking the union,
  with \Cref{lem:inter:semi}.
\end{proof}

\subsection{Missing proofs from \Cref{ssec:syntax:undec}}

\lemEmptinessToSyntax*
\begin{proof}
	Take a protocol $\mathcal{P} = (Q, \Sigma, O, \Delta)$.  We build a protocol $\mathcal{P}'$ such that $\mathcal{P}'$ is \emph{not}
	a decider if and only if a $\top$-stable configuration is reachable in $\mathcal{P}$.
	
	If $\Sigma$ contains a letter $a$ with opinion $\top$, then the configuration $a$ is $\top$-stable and reachable. In that case, we set $\cP'$ as an arbitrary protocol which is not a decider.
	
	Otherwise, we build $\mathcal{P}' = (Q', \overline{\Sigma}, O', \Delta')$ from $\cP$ as follows: For each initial state $a \in \Sigma$ in $\cP$, we
	add another state $\overline{a}$ with opinion $\bot$. We denote $\overline{\Sigma}$
	the set of those additional states. 
	We also add another state $q_\bot$ with opinion $\bot$.
	The set of initial states of $\cP'$ is $\overline{\Sigma}$.  
	For each $\overline{a}\in \overline{\Sigma}$ and $q \in Q'$ we add a transition $(\overline{a},q) \xrightarrow{\bfsf{true}} (a,q)$.
	As a consequence, from every initial configuration $\overline{a_1} \cdots \overline{a_k}$ with $k\geq 2$, we can reach its counterpart in $a_1 \cdots a_k$ in $\Sigma^*$.
	
	Further, for every pair of states $q_1, q_2$ such that at least one of the two has opinion $\bot$, we add a transition $(q_1, q_2) \xrightarrow{\bfsf{true}} (q_1, q_\bot)$.
	Note that by this last transition, from every configuration of length $\geq 2$ which is not a $\top$-consensus, we can reach a configuration of ${q_\bot}^*$.

	Suppose that there is a run in $\cP$ from an initial configuration $u = a_1
	\cdots a_k \in \Sigma^+$ to a $\top$-stable configuration $v$ in $\cP$. We must have $k \geq 2$
	since we assumed that all $a \in \Sigma$ have opinion $\bot$.
	In $\cP'$, the configuration
	$\overline{u} = \overline{a}_1 \cdots \overline{a}_k$ is initial, and it can reach both ${q_\bot}^k$ and $v$,
	respectively a $\bot$ and a $\top$-stable configuration.  As a result, the protocol is not
	a decider.
	
	Now suppose that a $\top$-stable configuration is unreachable in $\cP$.  Let $u$ be an
	initial configuration and let $u \to^* v$. Since $v$ is not 
	$\top$-stable, we have $v \to^* w$ for some configuration $w$
	containing a state with opinion $\bot$. Thus $w \to^* q_\bot^*$ and the latter is a $\bot$-consensus.  As a result, the protocol is a decider
	(and recognizes the empty language).
	
	Finally, observe that if $\cP$ is immediate-observation then so is $\cP'$.
\end{proof}

\corSyntaxSucc*

\begin{proof}
	In \Cref{sec:beyond}, we presented an effective construction translating a linear-bounded Turing machine into an $\IOPP[+1]$ protocol with the same language.
	Since emptiness is undecidable for linear-bounded Turing machines~\cite[Thm.~5.10]{Si06}, it must also be undecidable for $\IOPP[+1]$ protocols.
	By \Cref{lem:emptiness-to-syntax}, so is the syntax of \(\IOPP[+1]\) deciders.
\end{proof}

\thmSyntaxPP*

\begin{proof}
  We reduce from the Post correspondence problem (PCP), which is undecidable.  We are given two finite
  alphabets $A, B$, and two homomorphisms $h_1, h_2 \colon B^* \to A^*$, and must determine
  whether there exists $w \in B^*$ such that $h_1(w) = h_2(w)$.
  
  We construct three protocols, each in charge of verifying a property of the
  initial word:
  \begin{itemize}
  \item $\cP_0$ checks that the input word is in $\#_1 A^* \#_2 B^* \#_3$;
    
  \item $\cP_1$ checks that $h_1(u) =v$ for a word $\#_1 v \#_2 u \#_3$;
    
  \item $\cP_2$ checks that $h_2(u) =v$ for a word $\#_1 v \#_2 u \#_3$.
  \end{itemize}

  For the first one, we simply use the fact that $\#_1 A^* \#_2 B^* \#_3$ is in
  $\DA$, and hence recognized by an (immediate-observation) protocol
  $\cP_0$.  In particular an input word can reach a $\top$-stable configuration if and only if
  it is in that language.
  
  We now construct $\cP_i$ for $i \in \set{1,2}$.  The set of states of both protocol is
$$
    \set{a, a^{head}, a^{next} \mid a \in A}\\ \cup~  B \cup\set{(b, \pi) \mid b \in B, \pi \text{
    prefix of } h(b)}\\ \cup~  \set{\#_1, \#_2, \overline{\#_2}, \#_3, q_\top}.
  $$The opinion of $q_\top$ is $\top$, all other states have opinion $\bot$.
  
  Protocol rules are as follows, for all $a,c\in A$, $b,d\in B$ and prefix $\pi$ of
  $h_i(b)$:
  \begin{center}
    \begingroup
    \setlength{\tabcolsep}{2pt}
    \begin{tabular}{rclp{25pt}rclp{5pt}l}
      \toprule
      \multicolumn{3}{c}{\textbf{\emph{Start}}} &&
      \multicolumn{3}{c}{\textbf{\emph{Head}}} \\
      \midrule
      $\#_1, a$ & $\xrightarrow{<}$ & $q_{\top}, a^{head}$ &&
      ${a}^{head}, (b, \pi)$ & $\xrightarrow{<}$ & $a^{next}, (b, \pi a)$ \\
      $\#_2, b$ & $\xrightarrow{<}$ & $\overline{\#_2}, (b, \eps)$
      && && \multicolumn{3}{r}{if $\pi a$ is a prefix of $h_i(b)$} \\
      \midrule
      \multicolumn{3}{c}{\textbf{\emph{Next (A)}}} &&
      \multicolumn{3}{c}{\textbf{\emph{Next (B)}}} \\
      \midrule
      ${a}^{next}, c$ & $\xrightarrow{<}$ & $q_{\top}, c^{head}$ &&
      $(b, h_i(b)), d$ & $\xrightarrow{<}$ & $q_{\top}, (d, \eps)$ \\
      ${a}^{next}, \overline{\#_2}$ & $\xrightarrow{<}$ & $q_{\top}, q_{\top}$ &&
      $(b, h_i(b)), \#_3$ & $\xrightarrow{<}$ & $q_{\top}, q_{\top}$ \\
      \bottomrule
    \end{tabular}
    \endgroup
  \end{center}
  
  Our final protocol $\cP$ is simply the product of those three protocols, the
  opinion of a triple of states being the conjunction of their opinions.  This
  way, we can reach a $\top$-stable configuration from an initial configuration if and only if
  we can reach a $\top$-stable configuration from it in all three protocols.
  
  We now show that this is the case if and only if that initial configuration is of
  the form $\#_1 u \#_2 v \#_3$ with $u = h_1(v) = h_2(v)$.
  
  If the initial configuration $\#_1 h(a_1) \cdots h(a_k) \#_2 a_1 \cdots a_k \#_3$
  satisfies those conditions, then it is straightforward to build an execution
  from it to $q_\top^*$ in both $\cP_1$ and $\cP_2$.  Furthermore, it can reach a
  $\top$-stable configuration in $\cP_0$ since it is in $\#_1 A^* \#_2 B^* \#_3$.
  
  Now consider an initial configuration from which we can reach a $\top$-consensus
  in all three protocols. By definition of $\cP_0$, it must be of the form $\#_1
  a_1 \cdots a_k \#_2 b_1 \cdots b_\ell \#_3$ with $a_1, \dots, a_k \in A$ and $b_1, \dots,
  b_\ell \in B$.
  
  Now consider protocol $\cP_i$, with $i \in \set{1,2}$. We call the states
  $\set{a^{head}, a^{next} \mid a \in A} \cup \set{\#_1}$ \emph{special states}.  It is
  easy to see that while we do not use the last \textbf{\emph{Next (A)}} rule,
  there is exactly one special state in the configuration, and there is no
  special state afterwards.  Also observe that only letters to the right of the
  special state can be modified.  Since all agents must reach $q_\top$, this
  imposes that all agents that have a state in $A$ initially must hold the
  special state, one by one from left to right.
  
  A symmetric argument shows that all agents with a state in $B$ initially must
  transform into $q_\top$ using \textbf{\emph{Next (B)}} rules, one by one, from
  left to right.  Now let us have a look at the \textbf{\emph{Head}} rule:
  $a^{head}, (b, \pi) \xrightarrow{<} a^{next}, (b, \pi a)$. Let $w$ be the word obtained by
  taking the letter $a$ associated with each of those transitions, in the order
  in which they appear in the run.
  
  By the arguments above, we must have $w = a_1 \cdots a_k$, and $w = h_i(b_1)
  \cdots h_i(b_\ell)$.  As a result, the input word is indeed of the form $\#_1
  h_i(b_1) \cdots h_i(b_\ell) \#_2 b_1 \cdots b_\ell \#_3$.
  
  As we have shown this for both $i \in \set{1,2}$, we obtain that the initial
  configuration must be of the form $\#_1 u \#_2 v \#_3$ with $u = h_1(v) =
  h_2(v)$.
  
  In conclusion, there is a reachable $\top$-stable configuration if and only if the PCP
  instance has a solution.
\end{proof}

\subsection{Missing proofs from \Cref{ssec:syntax:dec}}

\thmSyntaxIOPP*

\begin{proof}
  Let \(\post(u) \defeq \{v \mid u \to v\}\) and \(\post^*(u) \defeq
  \{v \mid u \to^* v\}\). Similarly, let \(\pre(u) \defeq \{v \mid v
  \to u\}\) and \(\pre^*(u) \defeq \{v \mid v \to^* u\}\). We extend
  these to sets, \eg $\post^*(L) \defeq \bigcup_{w \in L}
  \post^*(w)$.

  We reuse the machinery of \Cref{sec:ioppoinda}, specifically
  \Cref{lem:b-consensuses}, \Cref{bigpump}, and the combinatorial
  \Cref{alphabet-induction}.

  If a $\IOPP[<]$ protocol $\cP = (Q,\Sigma, O, \Delta)$ is not a decider, then it is witnessed by the
  configuration space of words of length $n$ for some $n \geq 2$. This can be checked
  for each $n$ incrementally.  We only have to provide a semi-decision procedure
  which returns yes if and only if the protocol is a decider.

  Observe that assuming \Cref{conj:DAtoDA} holds, a protocol $\cP$ is a decider if and only if there exist languages $K_\top, K_\bot$
  in $\DA$ over $Q^*$ such that:

  \begin{itemize}
  \item $K_\top$ and $K_\bot$ are disjoint;
  \item $\Sigma^* \subseteq K_\top \cup K_\bot$;
  \item For both $b$, for all $u \in K_b$ and $v \in \post(u)$, it is the case that $v \in K_b$;
  \item For both $b$, $K_b \subseteq \pre^*(C_b)$ with $C_b$ the set of $b$-stable configurations.
  \end{itemize}

  The right-to-left direction is clear: if such invariants exist, then the protocol recognizes $K_\top\cap \Sigma^*$. For the other direction, by \Cref{thm:iopp:da}, if $\cP$ is a decider then its language $L$ is in \DA, hence so is $\Sigma^* \setminus L$. 
  Take $K_\top = \post^*(L)$ and $K_\bot =
  \post^*(\Sigma^* \setminus L)$.
  Further, assuming \Cref{conj:DAtoDA}, $K_\top$ and $K_\bot$ are also in \DA.
  They clearly satisfy all the conditions.

  Given $\cP$, we enumerate automata $\cal A, \mathcal{B}_\top, \mathcal{B}_\bot$ recognizing languages in
  $\DA$.

  We check whether:
  \begin{itemize}
  \item $L(\mathcal{A}) \subseteq L(\mathcal{B}_\top)$, $\Sigma^* \setminus L(\cal A) \subseteq L(\mathcal{B}_\bot)$;
    
  \item For both $b$, for all $u \in L(\mathcal{B}_b)$ and $v \in \post(u)$, it is the case that $v \in L(\mathcal{B}_b)$;
    
  \item For both $b$, $L(\mathcal{B}_b) \subseteq \pre^*(C_b)$.
  \end{itemize}

  The first condition is easy to check.
  For the second one, we can build a non-deterministic automaton recognizing $\bigcup_{u \in L(\mathcal{B}_b)} \post(u)$, by making it guess two positions at which it applies a transition of the protocol, while checking that the resulting word is in $L(\mathcal{B}_b)$.
  The last condition is more subtle. 
   To check it, we argue that if there is a word in $L(\mathcal{B}_b) \setminus \pre^*(C_b)$ then there is  one of bounded length. It then suffices to check inclusion over words of length below this bound.

	First we prove a property of language of \DA by a classical use of Ramsey's theorem.

	\begin{claim}
		\label{claim:Ramsey}
		For every language $L \subseteq Q^*$ in \DA, there exists $m \in \N$, computable from an automaton recognizing $L$, such that
		for all $n \geq m$ and $u_1, \dots, u_n, z$ with $\emptyset \neq \alpha(u_1) = \dots =
		\alpha(u_n) \supseteq \alpha(z)$, we have $u_1\cdots u_n \equiv_{L} u_1 \cdots u_n z u_1 \cdots u_n$.
	\end{claim}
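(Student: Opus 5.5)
We work in the syntactic monoid $M = Q^*/{\equiv_L}$ with morphism $\eta \colon Q^* \to M$; it is finite and effectively computable from the automaton, and since $L \in \DA$ it satisfies~\eqref{eq:charac:da}. Let $N \defeq |M|$ and take $m$ to be the Ramsey number $R(3, N)$ for edge-colourings of the complete graph with $N$ colours and monochromatic triangles. Since $R$ is computable, so is $m$.

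\textbf{Step 1: an idempotent factor.} Fix $n \geq m$ and words $u_1, \dots, u_n, z$ as in the statement, and write $U \defeq u_1 \cdots u_n$. Colour each pair $i < j$ of $\{0, \dots, n\}$ by $\eta(u_{i+1} \cdots u_j)$. Ramsey's theorem gives $i < j < k$ with
\[
\eta(u_{i+1}\cdots u_j) = \eta(u_{j+1}\cdots u_k) = \eta(u_{i+1}\cdots u_k) \eqdef e.
\]
Then $e = e \cdot e$. Setting $x \defeq u_{i+1} \cdots u_j$, this means $x \equiv_L x^2$. Since each $u_\ell$ is nonempty, $x$ is nonempty, and $\alpha(x) = \alpha(u_1) \eqdef A$.

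\textbf{Step 2: pump inside $x$.} Write $U = p\,x\,s$ with $p = u_1 \cdots u_i$ and $s = u_{j+1} \cdots u_n$. Then
\[
U z U = p\,x\,(s\,z\,p)\,x\,s.
\]
Every letter of $s$, $z$ and $p$ lies in $A$: each $u_\ell$ has alphabet $A$, and $\alpha(z) \subseteq A$. So $h \defeq s z p \in \alpha(x)^*$. The \DA identity~\eqref{eq:charac:da} applied to the idempotent $x$ gives $x \equiv_L x h x$. Because $\equiv_L$ is a congruence, multiplying by $p$ on the left and $s$ on the right yields $U = p x s \equiv_L p\,x h x\,s = UzU$, as required.

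\textbf{Main obstacle.} The delicate point is computability and the exact Ramsey form. I will use the standard factorisation-style corollary: for every morphism into a finite monoid of size $N$, any product of at least $R(3, N)$ factors contains a consecutive block whose image is idempotent. Only Step~1 needs this.

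Checking~\eqref{eq:charac:da} needs nothing here, since membership in \DA is a hypothesis. The only technical care is that $\eta$, and hence $N$, is computed from the given automaton, for instance by minimising it and forming its transition monoid.
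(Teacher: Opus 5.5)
Your proof is correct and is essentially the paper's argument: colour the complete graph on $\{0,\dots,n\}$ by images in the syntactic monoid, extract a monochromatic triangle to get an idempotent block, apply the \DA{} identity~\eqref{eq:charac:da} to that block with $h$ the ``wrap-around'' word (remaining suffix, then $z$, then prefix), and conclude by the congruence property. The differences are cosmetic: you take the block $u_{i+1}\cdots u_j$ instead of the paper's $u_{i+1}\cdots u_k$ (both have the same idempotent image), and you make $m$ explicit as the $N$-colour triangle Ramsey number, which is usually written $R_N(3)$ or $R(3,\dots,3)$ since $R(3,N)$ normally denotes a two-colour off-diagonal number.
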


	\begin{claimproof}
		Let $(M,\cdot, 1_M)$ be the syntactic monoid of $L$, $\varphi \colon Q^* \to M$ a morphism and $F \subseteq M$ such that $L = \varphi^{-1}(F)$.
		
		Given a sequence of words $u_1, \dots, u_n$, consider the complete undirected graph (without loops) over $\set{0,\dots, n}$, where for all $i<j$ the edge $\set{i,j}$ is colored $\varphi(u_{i+1} \cdots u_j)$.
		By Ramsey's theorem~\cite{Ramsey}, there exists a uniform bound $m \in \N$ such that whenever $n \geq m$ this graph contains a monochromatic $3$-clique, \ie there are $i<j<k$ such that $\varphi(u_{i+1} \cdots u_j) = \varphi(u_{j+1} \cdots u_k) = \varphi(u_{i+1} \cdots u_k) $.

		Let $u_1, \dots, u_n, z$ with $n\geq m$ and $\emptyset \neq \alpha(u_1) = \dots =
		\alpha(u_n) \supseteq \alpha(z)$, and let $i<j<k$ be as described above.
		Since $\varphi$ is a morphism, $\varphi(u_{i+1} \cdots u_k) = \varphi(u_{i+1} \cdots u_j)\varphi(u_{j+1} \cdots u_k) = \varphi(u_{i+1} \cdots u_k)^2$.
		In other words, $u_{i+1} \cdots u_k \equiv_L (u_{i+1} \cdots u_k)^2$.
		By~\Cref{eq:charac:da}, we have $ u_{i+1} \cdots u_k \equiv_L u_{i+1} \cdots u_k (u_{k+1} \cdots u_n z u_1 \cdots u_i) u_{i+1} \cdots u_k$.
		
		By appending a prefix $u_1 \cdots u_i$ and a suffix $u_{k+1} \cdots u_n$ to both sides of the equivalence, we obtain \(u_1 \cdots u_n \equiv_L (u_1 \cdots u_n)z (u_1 \cdots u_n)\).
	\end{claimproof}

	We are now ready to show the bound on witnesses for (the negation of) the third condition.

	\begin{claim}
		We can compute a bound $k$ such that if  $L(\mathcal{B}_b) \setminus \pre^*(C_b)$ is non-empty then it contains a word of length at most $k$.
	\end{claim}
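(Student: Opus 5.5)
The idea is to show that membership in $L(\mathcal{B}_b) \setminus \pre^*(C_b)$ is invariant under the reduction used in \Cref{lem:reg}, and then to bound the length of a shortest witness via \Cref{alphabet-induction}.

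We work with $\pre^*(C_b)$, the set of configurations that can reach a $b$-stable configuration. First we show the pumping property
\[
w \in \pre^*(C_b) \iff \widetilde{w} \in \pre^*(C_b)
\quad\text{where } w = u\,w_1\cdots w_m\,v,\ \ \widetilde{w} = u\,w_1\cdots w_m\, z\, w_1\cdots w_m\,v,
\]
for the constant $m = 2K+1$ of \Cref{bigpump} and $\alpha(w_1)=\dots=\alpha(w_m)\supseteq\alpha(z)$.

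\emph{Forward direction.} This is exactly the argument in the proof of \Cref{bigpump}, applied to arbitrary configurations over $Q$ instead of inputs. That proof never used the decider hypothesis, only \Cref{lem:b-consensuses} and \Cref{lem:forward-extension}, and both hold for arbitrary words over $Q$.

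\emph{Backward direction.} Here we need that if $\widetilde{w}$ reaches $C_b$, then so does $w$. I would avoid proving this directly. Instead, I would choose the reduction so that only the forward direction is needed: a shortest counterexample $x \in L(\mathcal{B}_b)\setminus\pre^*(C_b)$ is \emph{longer} than the reduced word. Concretely, take $x$ of minimal length, and suppose $|x| > k$. By \Cref{alphabet-induction}, $x$ contains an $m'$-reducible pattern, where $m'$ is the maximum of the constant $m$ of \Cref{bigpump} (for $C_b$) and the constant of \Cref{claim:Ramsey} for $L(\mathcal{B}_b)$, so that
\[
x = u\,(w_1\cdots w_{m'})\,z\,(w_1\cdots w_{m'})\,v .
\]
Let $y = u\,w_1\cdots w_{m'}\,v$, which is strictly shorter than $x$.

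By \Cref{claim:Ramsey}, $x \equiv_{L(\mathcal{B}_b)} y$, hence $y \in L(\mathcal{B}_b)$. If $y \in \pre^*(C_b)$, then the forward direction applied to $y$ gives $x \in \pre^*(C_b)$, a contradiction. Therefore $y \in L(\mathcal{B}_b)\setminus\pre^*(C_b)$, contradicting the minimality of $x$. So only the forward implication is needed, and we can take $k = B(|Q|, m')$.

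\emph{Computability of $k$.} The constant $m$ is computable by \Cref{lem:b-consensuses}: we compute the $b$-stable sets and their subword-closed decompositions, and from these the factorization size $K_b$. The Ramsey constant of \Cref{claim:Ramsey} is computable from the syntactic monoid of $\mathcal{B}_b$. The function $B$ is given explicitly by \Cref{alphabet-induction}. Finally, checking $L(\mathcal{B}_b)\subseteq\pre^*(C_b)$ on all words of length at most $k$ is decidable by exploring the finite configuration graph for each length.

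\emph{Main obstacle.} The delicate step is the forward pumping for general configurations. The proof of \Cref{bigpump} locates some block $w'_i$ of the reached configuration inside a single $A_j^*$ component. Two things must be checked: that this location argument still works with arbitrary $u,v \in Q^*$, and that it only uses the hypothesis $\alpha(w_i)\supseteq\alpha(z)$. One point needs care: the blocks in \Cref{alphabet-induction} have nonempty alphabets, matching the hypothesis of \Cref{claim:Ramsey}, so the two results fit together.
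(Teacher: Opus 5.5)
Your proposal is correct and is essentially the paper's argument. You delete one copy of an $m'$-reducible pattern from \Cref{alphabet-induction}, you stay in $L(\mathcal{B}_b)$ by \Cref{claim:Ramsey}, and you stay outside $\pre^*(C_b)$ by the contrapositive of the forward pumping in the proof of \Cref{bigpump}; phrasing it as a minimal counterexample changes nothing. Your constants are in fact the careful ones: the words live over $Q$ and the pumping needs more than $2K_b$ blocks, so the argument requires $B(|Q|,m')$ with $m'$ at least both $2K_b+1$ and the constant of \Cref{claim:Ramsey}, whereas the paper writes $B(|\Sigma|,n)$ with $n=\max(m,k)$.
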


	\begin{claimproof}
		Since $L(\mathcal{B}_b)$ belongs to $\DA$, we can compute $m$ from $\mathcal{B}_b$ as given by \Cref{claim:Ramsey}. 
		
		By \Cref{lem:b-consensuses}, the set of $b$-stable configurations is subword-closed
		and computable. There is therefore a computable $k$ such that it is made of a finite
		union of languages described by expressions of the form $A_1^* B_1^\epsilon \cdots A_\ell^* B_\ell^\epsilon$ with  $\ell\leq k$.  Let $n =
		\max(m,k)$.
		
		Let $w$ be a word in $L(\mathcal{B}_b) \setminus \pre^*(C_b)$ of length $> B(|\Sigma|, n)$ as given by
		\Cref{alphabet-induction}. The word $w$ is of the form $x u_1 \cdots u_n
		z u_1 \cdots u_n y$ with $\emptyset \neq \alpha(u_1) = \dots = \alpha(u_n) \supseteq \alpha(z) $.
		
		The word $w' = x u_1 \cdots u_n y$ belongs to $L(\mathcal{B}_b)$ since $n \geq m$. It is not in
		$\pre^*(C_b)$ since otherwise $w$ would be in $\pre^*(C_b)$, analogously to the
		proof of \Cref{bigpump}.
	\end{claimproof}

	In conclusion, to check the third condition we can compute the bound $k$ given by the claim above, and check if there is a word of length at most $k$ in  $L(\mathcal{B}_b) \setminus \pre^*(C_b)$.
\end{proof}


\end{document}